%
%
%
%
\documentclass{amsart}

\usepackage{amssymb, amsmath, amscd, amsthm}
\usepackage{graphicx}
\usepackage{mathptmx}
\usepackage{psfrag}
\usepackage[all]{xy}

\newtheorem{theorem}{Theorem}[section]
\newtheorem{lem}[theorem]{Lemma}
\newtheorem{cor}[theorem]{Corollary}
\newtheorem{prop}[theorem]{Proposition}

\theoremstyle{definition}
\newtheorem{defi}[theorem]{Definition}

\newtheorem{exa}{\bf Example}
\theoremstyle{remark}

\newcommand{\F}{\mathcal F}
\newcommand{\R}{\mathbb R}
\newcommand{\N}{\mathbb N}
\newcommand{\Z}{\mathbb Z}

\newcommand{\p}{\varphi}
\newcommand{\s}{\psi}
\newcommand{\h}{h}
\newcommand{\f}{f}
\newcommand{\g}{g}
\newcommand{\rank}{\mbox{\rm{rank}}}

\newcommand{\eps}{\varepsilon}
\newcommand{\M}{{\mathcal M}}

\numberwithin{equation}{section}



\begin{document}

\title{Stability of Reeb graphs under function perturbations: the case of closed curves}

\author{B. Di Fabio}
\address{Dipartimento
di Matematica, Universit\`a di Bologna, P.zza di Porta S. Donato
5, I-$40126$ Bologna, Italia\newline ARCES, Universit\`a di
Bologna, via Toffano $2/2$, I-$40135$ Bologna, Italia}
\email{difabio@dm.unibo.it}

\author{C. Landi}
\address{Dipartimento
di Scienze e Metodi dell'Ingegneria, Universit\`a di Modena e
Reggio Emilia, Via Amendola 2, Pad. Morselli, I-42100 Reggio
Emilia, Italia\newline ARCES, Universit\`a di Bologna, via Toffano
$2/2$, I-$40135$ Bologna, Italia} \email{clandi@unimore.it}

\subjclass[2010]{Primary 68U05; Secondary 68T10; 05C10; 57R99}

\date{} 


\keywords{shape similarity,  editing distance, Morse function, natural stratification, natural pseudo-distance}

\begin{abstract}
Reeb graphs provide a method for studying the shape of a manifold
by encoding the evolution and arrangement of level sets of a
simple Morse function defined on the manifold. Since their
introduction in computer graphics they have been gaining
popularity as an effective tool for shape analysis and matching.
In this context one question deserving attention is whether Reeb
graphs are robust against function perturbations. Focusing on
1-dimensional manifolds, we define an editing distance between
Reeb graphs of curves, in terms of the cost necessary to transform
one graph into another. Our main result is that changes in Morse
functions induce smaller changes in the editing distance between
Reeb graphs of curves, implying stability of Reeb graphs under
function perturbations.
\end{abstract}

\maketitle

\section*{Introduction}
The shape similarity problem has since long been studied by the
computer vision community for dealing with shape classification
and retrieval tasks. It is now attracting more and more attention
also in the computer graphics community where recent improvements
in object acquisition and construction of digital models are
leading to an increasing accumulation of models in large databases
of shapes. Comparison of 2D images is often dealt with considering
just the silhouette or contour curve of the studied object,
encoding shape properties, such as curvature, in compact
representations of shapes, namely, shape descriptors, for the
comparison. The same approach is more and more used also in
computer graphics where there has been a gradual shift of research
interests from methods of representing shapes toward methods of
describing shapes of 3D models.

Since \cite{Sh91}, Reeb graphs have been gaining popularity as an
effective tool for shape analysis and description tasks   as a
consequence of their ability to extract high-level features from
3D models. Reeb graphs were originally defined by Georges Reeb in
1946 as topological constructs \cite{reeb46}. Given a manifold
$\M$ and a generic enough real-valued function $f$ defined on
$\M$, the simplicial complex defined by Reeb, conventionally
called the Reeb graph of $(\M,f)$, is the quotient space defined
by the equivalence relation that identifies the points of $\M$
belonging to the same connected component of level sets of $f$.
Reeb graphs effectively code shapes, both from a topological and a
geometrical perspective. While the topology is described by the
connectivity of the graph, the geometry can be coded in a variety
of different ways, according to the type of applications the Reeb
graph is devised for, simply by changing the function $f$.
Different choices of the function yield insights into the manifold
from different perspectives. The compactness of the
one-dimensional structure, the natural link between the function
and the shape, and the possibility of adopting different functions
for describing different aspects of shapes and imposing the
desired invariance properties,  have led to a great interest in
the use of Reeb graphs for similarity evaluation. In
\cite{HiSh*01}, Hilaga {\em et al.} use Multiresolution Reeb
Graphs based on the distribution of geodesic distance between two
points  as a search key for  3D objects, and the similarity
measure constructed in this setting is found to be resistant to
noise.  In this approach resistance to changes caused by noise
essentially relies on the choice of the geodesic distance to build
the Reeb graph. In \cite{BiMa*06}, Biasotti {\em et al.} base the
comparison of Extended Reeb Graphs on a relaxed version of the
notion of best common subgraph. This approach gives a method for
partial shape-matching able to recognize sub-parts  of objects,
and can be adapted to the context of applications since there is
no requirement on the choice of the function $f$.  Both
\cite{HiSh*01} and \cite{BiMa*06} present algorithms for
similarity evaluation.

To the best of our knowledge, mathematical assessment of stability
against function perturbations is still an open issue as far as
Reeb graphs are concerned. This question deserves attention since
it is clear that any data acquisition is subject to perturbations,
noise and approximation errors and, if Reeb graphs were not
stable, then distinct computational investigations of the same
object could produce completely different results. This paper aims
to be possibly the first positive answer to this question.

We confine ourselves to consider Reeb graphs of curves. In this
setting Reeb graphs are simply cycle graphs with an even number of
vertices corresponding alternatively to the maxima and minima of
the function. We also equip vertices of Reeb graphs with the value
taken by the function at the corresponding critical points.

Our main contribution is the construction of a distance between
Reeb graphs of curves such that changes in  functions imply
smaller changes in the  distance. Our distance is based on an
adaptation of the well-known notion of editing distance between
graphs \cite{Tai79}. We introduce three basic types of editing
operations, represented in Table \ref{deformations}, corresponding
to the insertion (birth)  of a new pair of adjacent points of
maximum and minimum, the deletion (death) of such a  pair, and the
relabelling of the vertices. A cost is associated with each of
these operations and our distance is given by the infimum of the
costs necessary to transform a graph into another by using these
editing operations. Our main result is the global stability of
labelled Reeb graphs under function perturbations (Theorem
\ref{global}):
\newline

\noindent\textsc{Main Result.} \emph{Let $f,g:S^1\to\R$ be two
simple Morse functions. Then the editing distance between the
labelled Reeb graph of $(S^1,f)$ and that of $(S^1,g)$ is always
smaller or equal to the $C^2$-norm of $f-g$.}
\newline

The main idea of the proof is to read editing operations in terms
of degenerate strata crossings of the space of smooth functions
stratified as in \cite{Cerf70}. We also obtain a lower bound for
our editing distance. Indeed, we find that it can be estimated
from below by the natural pseudo-distance between closed curves
studied in \cite{DoFr09}.

The paper is organized as follows. In Section \ref{background}, we
review some of the standard facts about Morse functions, the $C^r$
topology, the theory of stratification of smooth real valued
functions, and Reeb graphs. Section \ref{labreebgraphs} deals with
basic properties of labelled Reeb graphs of closed curves. Section
\ref{edit} is devoted to the definition of the admissible
deformations transforming a Reeb graph into another, the cost
associated with each kind of deformation, and the definition of an
editing distance in terms of this cost. Section \ref{lowbound} is
intended to provide a suitable lower bound for our distance, the
natural pseudo-distance; this represents a useful tool both to
show the well-definiteness of our distance and to compute it in
some simple cases. In Sections \ref{localstab} and
\ref{globalstab} it is shown that our distance is both locally and
globally upper bounded by the difference, measured in the
$C^2$-norm, between the functions defined on $S^1$. Eventually, a
brief discussion on the results obtained concludes the paper.

\section{Preliminary notions}\label{background}
In this section we recall some basic definitions and results about
Morse functions and Reeb graphs. Moreover, with the aim of proving
stability of Reeb graphs under function perturbations in mind, we
recall some concepts concerning the space of smooth real valued
functions on a smooth manifold: the $C^r$ topology and the theory
of the natural stratification.

Throughout the paper, $\M$ denotes a smooth (i.e. differentiable
of class $C^{\infty}$) compact $n$-manifold without boundary, and
$\F(\M,\R)$ the set of smooth real functions on $\M$.

\subsection{Simple Morse functions}
Let us recall the following concepts from \cite{Mi63}.

Let $f\in\F(\M,\R)$. A point $p\in \M$ is called a \emph{critical
point} of $f$ if, choosing a local coordinate system $(x_1,\ldots,
x_n)$ in a neighborhood $U$ of $p$, it holds that
$$\frac{\partial f}{\partial x_1}(p)=\ldots = \frac{\partial f}{\partial
x_n}(p)=0,$$ and it is called a \emph{regular point}, otherwise.
Throughout the paper, we set $K(f)=\{p\in\M: p \,\ \mbox{is a
critical point of}\,\ f\}$.

If $p\in K(f)$, then the real number $f(p)$ is called a
\emph{critical value} of $f$, and the set $\{q\in\M: q\in
f^{-1}(f(p))\}$ is called a \emph{critical level} of $f$.
Otherwise, if $p\notin K(f)$, then $f(p)$ is called a
\emph{regular value}. Moreover, a critical point $p$ is called
\emph{non-degenerate} if and only if the second derivative matrix
$$\left(\frac{\partial^2 f}{\partial x_i\partial x_j}(p)\right)$$
is non-singular, i.e. its determinant is not zero.

By the well-known Morse Lemma, in a neighborhood of a
non-degenerate critical point $p$, it is possible to choose a
local coordinate system $(x_1,\ldots, x_n)$ such that
$$f=f(p)-x_1^2-\ldots-x_{k}^2+x_{k+1}^2+\ldots+x_{n}^2.$$
The number $k$ is uniquely defined for each critical point $p$ and
is called the \emph{index} of $p$. Such an index completely
describes the behavior of $f$ at $p$. For example, $k=0$ means that the
corresponding $p$ is a minimum for $f$; $k=n$ means that $p$ is a
maximum; $0<k<n$ means that $p$ is a saddle point for $f$.

\begin{defi}
A function $f\in\F(\M,\R)$ is called a \emph{Morse function} if
all its critical points are non-degenerate. Moreover, a Morse
function is said to be \emph{simple} if each critical level
contains exactly one critical point.
\end{defi}

It is well-known that every Morse function has only finitely many
critical points (which are therefore certainly isolated points).
The importance of non-degeneracy is that it is the common
situation; indeed, in a sense that will be explained in Subsection
\ref{stratification}, the occurrence of degenerate critical points
is really quite rare.

\subsection{The $C^r$ topology on the space of real valued functions}
To topologize $\F(\M,\R)$, let us recall the definition of
$C^r$-norm, with $0\le r<\infty$ (see, e. g., \cite{Mi65,Pade82}).
Let $\{U_{\alpha}\}$ be a finite coordinate covering of $\M$, with
coordinate maps $h_{\alpha}: U_{\alpha}\to\R^n$, and consider a
compact refinement $\{C_{\alpha}\}$ of $\{U_{\alpha}\}$ (i.e.
$C_{\alpha}\subseteq U_{\alpha}$ for each $\alpha$, and $\bigcup
C_{\alpha}=\M$). For $f\in\F(\M,\R)$, let us set
$f_{\alpha}=f\circ h_{\alpha}^{-1}:h_{\alpha}(C_{\alpha})\to\R$.
Then the $C^r$-\emph{norm} of $f$ is defined as
$${\|f\|}_{C^r}=\underset{\alpha}\max \left\{\underset{u \in h_{\alpha}(C_{\alpha})}\max\left|f_{\alpha}(u)\right|, \underset{\underset{j\in\{1,\ldots,n\}}{u \in h_{\alpha}(C_{\alpha})}}\max\left|\frac{\partial f_{\alpha}}{\partial u_j}(u)\right|,\ldots, \underset{\underset{j_1,\ldots,j_r\in\{1,\ldots,n\}}{u \in h_{\alpha}(C_{\alpha})}}\max\left|\frac{{\partial}^r f_{\alpha}}{\partial u_{j_1}\cdots\partial u_{j_r}}(u)\right|\right\}.$$
The above norm defines a topology on $\F(\M,\R)$, known as the
$C^r$ \emph{topology} (or \emph{weak topology}), with $0\le
r<\infty$ (cf. \cite[chap. 2]{Hi76}). In the following, we will
denote by $B_r(f,\delta)$, $0\le r<\infty$, the open ball with
center $f$ and radius $\delta$ in the $C^r$ topology, i.e., $g\in
B_r(f,\delta)$ if and only if ${\|f-g\|}_{C^r}<\delta$. The
$C^{\infty}$ \emph{topology} is simply the union of the $C^r$
topologies on $\F(\M,\R)$ for every $0\le r<\infty$.

\subsection{Natural stratification of the space of real valued
functions}\label{stratification} Let us endow $\F(\M,\R)$ with the
$C^{\infty}$ topology, and consider the \emph{natural
stratification} of such a space, as exposed by Cerf in
\cite{Cerf70} (see also \cite{Se72}). The natural stratification
is defined as a sequence of sub-manifolds of $\F(\M,\R)$,
$\F^0,\F^1,\ldots,\F^j,\ldots$, of co-dimension
$0,1,\ldots,j,\ldots$, respectively, that constitute a partition
of $\F(\M,\R)$, and such that the disjoint union
$\F^0\cup\F^1\cup\ldots\cup\F^j$ is open for every $j$.

Before providing a brief description of the strata, let us recall
the following equivalence relation that can be defined on
$\F(\M,\R)$.
\begin{defi}
Two functions $f,g\in\F(\M,\R)$ are called \emph{topologically
equivalent} if there exists a diffeomorphism $\xi:\M\to\M$ and an
orientation preserving diffeomorphism $\eta:\R\to\R$ such that
$g(\xi(p))=\eta(f(p))$ for every $p\in\M$.
\end{defi}
\noindent The above relation is also known as \emph{isotopy} in
\cite{Cerf70}, and \emph{left-right equivalence} in
\cite{ArVaGu85}.

Let us describe $\F^0$ and $\F^1$, pointing out their main
properties that allow us to leave aside the remaining strata.

\begin{itemize}
\item The stratum $\F^0$ is the set of simple Morse functions.
\item The stratum $\F^1$ is the disjoint union of two sets
$\F^1_{\alpha}$ and $\F^1_{\beta}$ open in $\F^1$, where
\begin{itemize}
\item $\F^1_{\alpha}$ is the set of functions whose critical
levels contain exactly one critical point, and the critical points
are all non-degenerate, except exactly one. In a neighborhood of
such a point, say $p$, a local coordinate system $(x_1,\ldots,
x_n)$ can be chosen such that
$$f=f(p)-x_1^2-\ldots-x_{k}^2+x_{k+1}^2+\ldots+x_{n-1}^2+x_{n}^3.$$
\item $\F^1_{\beta}$ is the set of Morse functions whose critical
levels contain at most one critical point, except for one level
containing exactly two critical points.
\end{itemize}

\end{itemize}

$\F^0$ is dense in the space $\F(\M,\R)$ endowed with the $C^r$
topology, $2\le r\le\infty$ (cf. \cite[chap. 6, Thm. 1.2]{Hi76}).
Therefore, any smooth function can be turned into a simple Morse
function by arbitrarily small perturbations. Degenerate critical
points can be split into several non-degenerate singularities,
with all different critical values (Figure \ref{morseperturb}
$(a)$). Moreover, when more than one critical points occur at the
same level, they can be moved to close but different levels
(Figure \ref{morseperturb} $(b)$).

\begin{figure}[htbp]
\begin{center}
\psfrag{0}{$0$}\psfrag{v}{$\overline{\varepsilon}>0$}\psfrag{u}{$\overline{\varepsilon}<0$}\psfrag{e}{$\varepsilon$}
\psfrag{v5}{$v_5$}\psfrag{v6}{$v_6$}\psfrag{v7}{$v_7$}\psfrag{v8}{$v_8$}
\psfrag{f}{$f$}\psfrag{f3}{$f$}\psfrag{f1}{$\widetilde{f}_1$}\psfrag{f2}{$\widetilde{f}_2$}\psfrag{p1}{$p'$}\psfrag{p2}{$p''$}
\psfrag{p}{$p$}\psfrag{q}{$q$}\psfrag{q1}{$q'$}\psfrag{q2}{$q''$}
\psfrag{fi}{$\widetilde{f}_1$}\psfrag{fj}{$\widetilde{f}_2$}\psfrag{p3}{$\widetilde{p'}$}\psfrag{p4}{$\widetilde{p''}$}
\psfrag{c3}{$c''$}\psfrag{c2}{$c'$}\psfrag{c}{$c$}
\begin{tabular}{c}
\includegraphics[height=4cm]{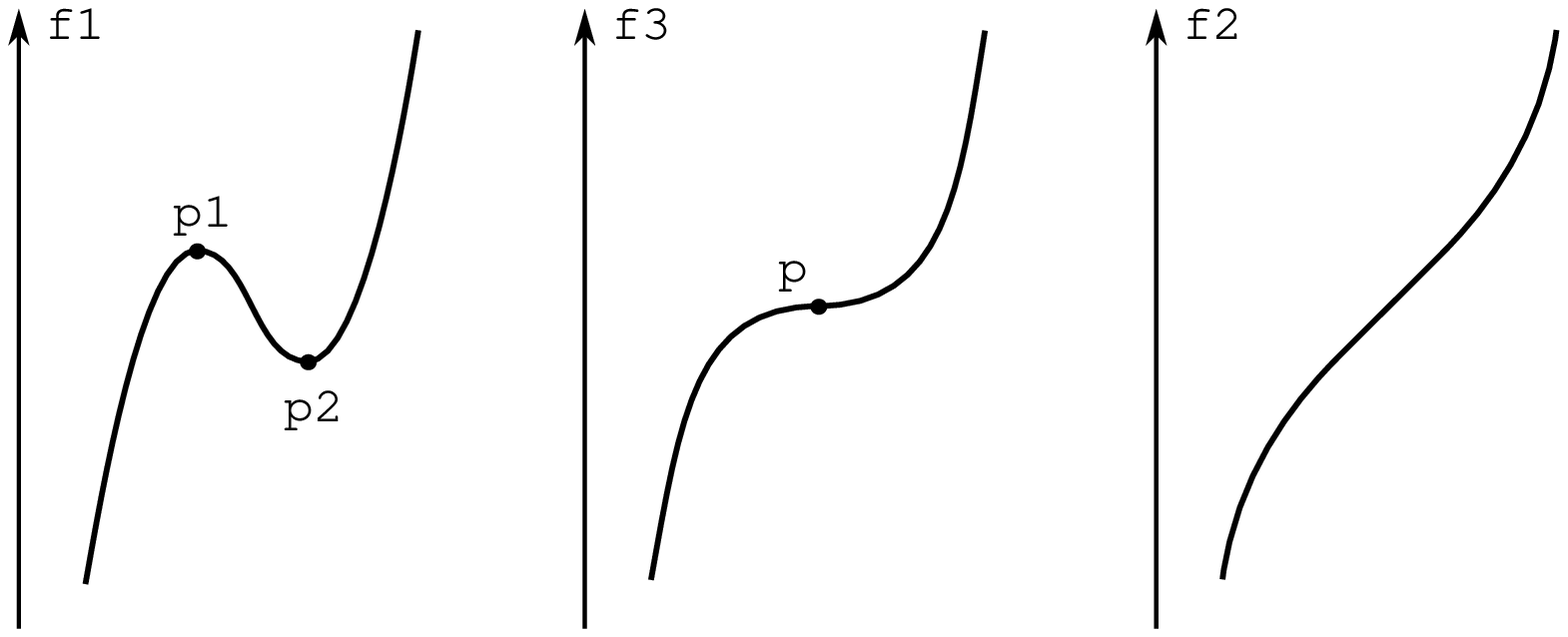}\\
$(a)$\\
\vspace{1mm}\\
\includegraphics[height=4cm]{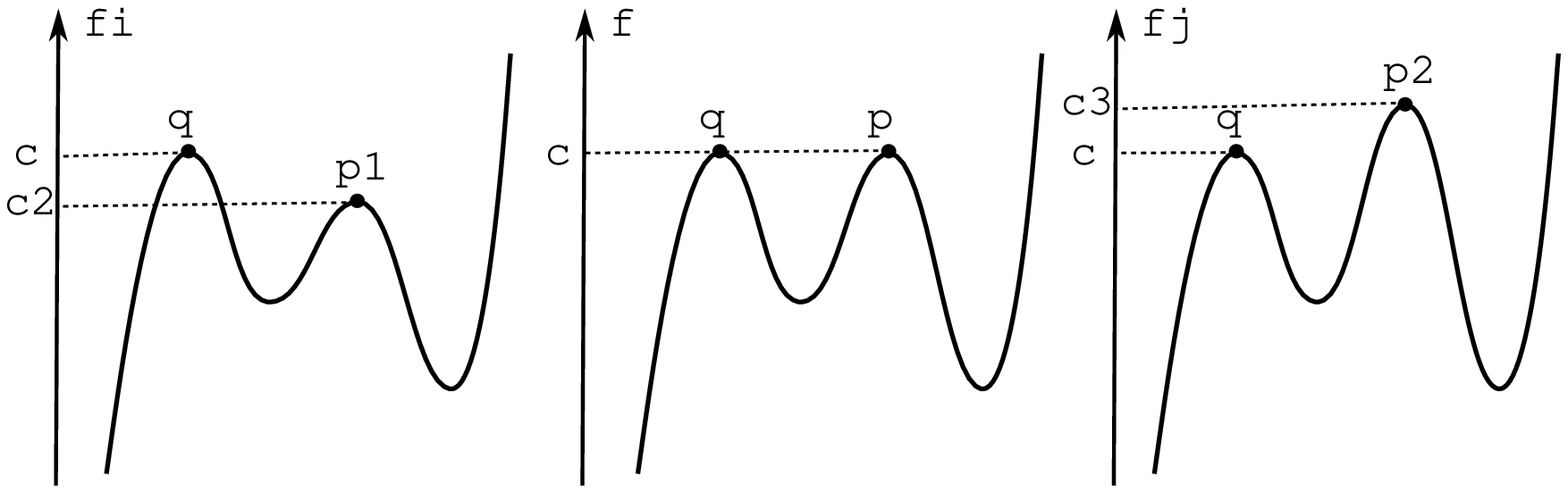}\\
$(b)$\\
\end{tabular}
\caption{\footnotesize{$(a)$ A function $f\in\F^1_{\alpha}$
admitting a degenerate critical point $p$ (center) can be
perturbed into a simple Morse function $\widetilde{f}_1$ with two
non-degenerate critical points $p',p''$ (left), or into a simple
Morse function $\widetilde{f}_2$ without critical points around
$p$ (right); $(b)$ a function $f\in\F^1_{\beta}$ (center) can be
turned into two simple Morse functions $\widetilde{f}_1,
\widetilde{f}_2$, that are not topologically equivalent
(left-right).}}\label{morseperturb}
\end{center} \end{figure}

It is well-known that two simple Morse functions are topologically
equivalent if and only if they belong to the same arcwise
connected component (or \emph{co-cellule}) of $\F^0$ \cite[p.
25]{Cerf70}.

$\F^1$ is a sub-manifold of co-dimension 1 of $\F^0\cup\F^1$, and
the complement of $\F^0\cup\F^1$ in $\F$ is of co-dimension
greater than 1. Consequently, given two functions $f,g\in\F^0$, we
can always find $\widehat{f},\widehat{g}\in\F(\M,\R)$ arbitrarily
near to $f, g$, respectively, for which the path
$h(\lambda)=(1-\lambda)\widehat{f}+\lambda \widehat{g}$, with
$\lambda\in [0,1]$, is such that
\begin{enumerate}
\item $\widehat{f},\widehat{g}\in\F^0$, and $\widehat{f}$, $\widehat{g}$ are topologically equivalent to $f$,
$g$, respectively; \item $h(\lambda)$ belongs to $\F^0\cup\F^1$
for every $\lambda\in [0,1]$; \item $h(\lambda)$ is transversal to
$\F^1$.
\end{enumerate}
As a consequence, $h(\lambda)$ belongs to $\F^1$ for at most a
finite collection of values $\lambda$, and does not traverse
strata of co-dimension greater than 1 (see, e.g., \cite{EdHa02}).

\subsection{Reeb graph of a manifold}\label{reebsection}
In this subsection we restate the main results concerning Reeb
graphs, starting from the following one shown by Reeb in
\cite{reeb46}. Here we consider pairs $(\M,f)$, with $\M$
connected and $f\in\F^0\subset\F(\M,\R)$.

\begin{theorem}\label{reeb}
The quotient space of $\M$ under the equivalence relation ``$p$
and $q$ belong to the same connected component of the same level
set of $f$'' is a finite and connected simplicial complex of
dimension 1.
\end{theorem}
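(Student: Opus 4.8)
The plan is to exploit the structure that a simple Morse function imposes on $\M$ through its level sets, and to read the quotient off interval by interval along the values of $f$.

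First I would record the finiteness inputs. Since $f\in\F^0$ is a Morse function on the compact manifold $\M$, it has only finitely many critical points $p_1,\dots,p_m$, and since $f$ is simple these lie on pairwise distinct values $c_1<\dots<c_m$; moreover $c_1=\min_\M f$ and $c_m=\max_\M f$, both attained at critical points. For every regular value $t$ the level set $f^{-1}(t)$ is a compact $(n-1)$-manifold, hence has finitely many connected components; applying Ehresmann's fibration theorem to the proper submersion $f$ over an interval of regular values shows this number is locally constant off the $c_i$'s, and the Morse normal form bounds its jump across each $c_i$. Letting $\pi\colon\M\to R_f:=\M/\!\sim$ be the quotient projection, $R_f$ is connected since $\M$ is and $\pi$ is continuous, and $f$ factors as $f=\bar f\circ\pi$ with $\bar f$ continuous.

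Next I would describe $R_f$ locally. Fix regular values $t_0<c_1<t_1<\dots<c_m<t_m$ and cut $\M$ into the compact pieces $\M_i=f^{-1}([t_{i-1},t_i])$, each containing exactly the one critical point $p_i$. Over $[t_{i-1},c_i)$ and over $(c_i,t_i]$ Ehresmann's theorem gives trivial products with the corresponding regular level set, so the $\pi$-image of each connected component of $f^{-1}([t_{i-1},c_i))$, resp. of $f^{-1}((c_i,t_i])$, is a set homeomorphic to a half-open interval. The heart of the matter is the critical level: using the Morse Lemma normal form at $p_i$ together with the standard fact that $f^{-1}([c_i-\eps,c_i+\eps])$ deformation retracts onto $f^{-1}(c_i)$ for small $\eps$, I would show that (i) the connected component $L_i$ of $f^{-1}(c_i)$ containing $p_i$, being a single connected component of a single level set, is collapsed by $\pi$ to one point $v_i$; (ii) only finitely many of the half-open arcs above abut $v_i$, each having $v_i$ as its unique missing endpoint; and (iii) a neighbourhood of $v_i$ in $R_f$ is exactly the cone on this finite set. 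Hence $\pi(\M_i)$ is a finite star with centre $v_i$, and away from the $v_i$ the map $\bar f$ is a local homeomorphism onto intervals of $\R$, so $R_f$ is locally an arc there.

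Finally I would glue: $R_f=\bigcup_i\pi(\M_i)$, and $\M_i\cap\M_{i+1}=f^{-1}(t_i)$, whose $\pi$-image is a finite set of interior points of arcs, so consecutive pieces simply concatenate. This exhibits $R_f$ as a finite graph — the finitely many vertices $v_1,\dots,v_m$ together with finitely many edges — and hence, after an obvious subdivision removing loops and multiple edges, as a connected finite $1$-dimensional simplicial complex; compactness of $R_f=\pi(\M)$ is automatic. I expect the real obstacle to be steps (i)–(iii): turning the local Morse model and the retraction of a thin slab onto its critical level into a precise statement that the quotient topology near $v_i$ is a cone on finitely many points — in particular that $R_f$ is Hausdorff there and that no unexpected identifications or accumulation of arcs occur. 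Everything else is bookkeeping with Ehresmann's theorem and compactness.
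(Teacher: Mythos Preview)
The paper does not give its own proof of this theorem: it is quoted in the preliminaries as a classical result ``shown by Reeb'' in his 1946 paper, with no argument supplied. There is therefore nothing in the paper to compare your proposal against.

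That said, your outline is the standard modern route to this statement and is essentially sound. The ingredients --- finiteness of critical points, Ehresmann's theorem over regular intervals giving product structure and hence arcs in the quotient, and the Morse local model at each critical level giving a finite cone --- are exactly what one expects. You are right to flag steps (i)--(iii) as the delicate part: Hausdorffness of the quotient and the cone description near a critical vertex do require care (one clean way is to note that $\bar f$ separates points at different heights, while within a single level the components are disjoint closed subsets of a compact space and hence separated; the cone picture then follows from the explicit handle-attachment description of $f^{-1}([c_i-\eps,c_i+\eps])$). The final subdivision remark is also appropriate, since the raw graph may have loops or multiple edges, which are not simplicial as they stand.
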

This simplicial complex, denoted by $\Gamma_f$, is called the
\emph{Reeb graph} associated with the pair $(\M,f)$. Its vertex
set will be denoted by $V(\Gamma_f)$, and its edge set by
$E(\Gamma_f)$. Moreover, if $v_1,v_2\in V(\Gamma_f)$ are adjacent
vertices, i.e., connected by an edge, we will write $e(v_1,v_2)\in
E(\Gamma_{f})$. Since the vertices of a Reeb graph correspond in a
one to one manner to critical points of $f$ on the manifold $\M$
(see, e.g., \cite[Lemma 2.1]{BoFo04}), we will often identify each
$v\in V(\Gamma_f)$ with the corresponding $p\in K(f)$.

Given two topologically equivalent functions $f,g\in\F^0$, it is
well-known that the associated Reeb graphs, $\Gamma_{f}$ and
$\Gamma_{g}$, are isomorphic graphs, i.e., there exists an
edge-preserving bijection $\Phi: V(\Gamma_{f}) \rightarrow
V(\Gamma_{g})$. Beyond that, an even stronger result holds. Two
functions $f,g\in\F^0$ are topologically equivalent if and only if
such a bijection $\Phi$ also preserves the vertices order, i.e.,
for every $v,w\in V(\Gamma_f)$, $f(v)<f(w)$ if and only if
$g(\Phi(v))< g(\Phi(w))$.

The preceding result has been used  by Arnold in \cite{Ar07} to
classify simple Morse functions up to the topological equivalence
relation.

\section{Labelled Reeb graphs of closed curves}\label{labreebgraphs}

This paper focuses on \emph{Reeb graphs} of closed curves. Hence,
the manifold $\M$ that will be considered from now on is $S^1$,
and the function $f$ will be taken in $\F^0\subset\F(S^1,\R)$. The
Reeb graph $\Gamma_f$ associated with $(S^1,f)$ is a cycle graph
on an even number of vertices, corresponding, alternatively, to
the minima and maxima of $f$ on $S^1$ \cite{Pade82} (see, for
example, Figure \ref{labelledReeb} $(a)-(b)$). Furthermore, we
label the vertices of $\Gamma_f$, by equipping each of them with
the value of $f$ at the corresponding critical point. We denote
such a labelled graph by $(\Gamma_f, f_{_|})$, where
$f_{_|}:V(\Gamma_f)\to\R$ is the restriction of $f:S^1\to\R$ to
$K(f)$. A simple example is displayed in Figure \ref{labelledReeb}
$(a)-(c)$. To facilitate the reader, in all figures of this paper
we shall adopt the convention of representing $f$ as the height
function, so that $f_{_|}(v_a)<f_{_|}(v_b)$ if and only if $v_a$
is lower than $v_b$ in the picture.

\begin{figure}[htbp]
\psfrag{S}{$(S^1,f)$}\psfrag{G}{$\Gamma_f$}\psfrag{L}{$(\Gamma_f,
f_{_|})$}\psfrag{v1}{$v_1$}\psfrag{v2}{$v_2$}\psfrag{v3}{$v_3$}\psfrag{v4}{$v_4$}
\psfrag{v5}{$v_5$}\psfrag{v6}{$v_6$}\psfrag{v7}{$v_7$}\psfrag{v8}{$v_8$}\psfrag{f}{$f$}
\psfrag{a}{$(a)$}\psfrag{b}{$(b)$}\psfrag{c}{$(c)$}
\includegraphics[height=4cm]{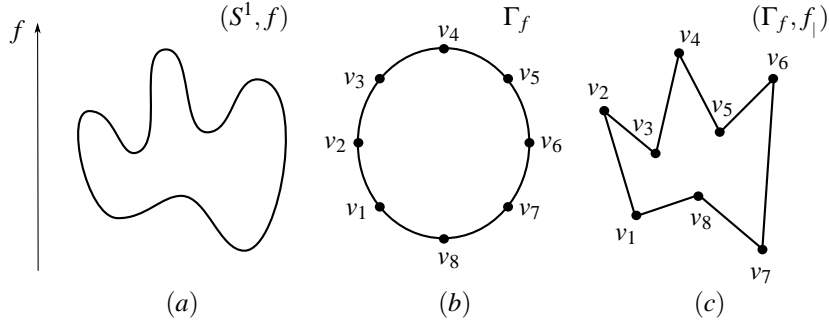}
\caption{\footnotesize{$(a)$ A pair $(S^1,f)$, with $f$ the height
function; $(b)$ the Reeb graph $\Gamma_f$ associated with
$(S^1,f)$; $(c)$ the labelled Reeb graph $(\Gamma_f, f_{_|})$
associated with $(S^1,f)$. Here labels are represented by the
heights of the vertices.}}\label{labelledReeb}
\end{figure}

The natural definition of isomorphism between labelled Reeb graphs
is the following one.

\begin{defi}\label{isolabel}
We shall say that two labelled Reeb graphs $(\Gamma_{f}, f_{_|}),
(\Gamma_{g}, g_{_|})$ are \emph{isomorphic} if there exists an
edge-preserving bijection $\Phi: V(\Gamma_{f}) \rightarrow
V(\Gamma_{g})$ such that $f_{_|}(v) = g_{_|}(\Phi(v))$ for every
$v\in V(\Gamma_{f})$.
\end{defi}

The following Proposition \ref{unique} provides a necessary and
sufficient condition in order that two labelled Reeb graphs are
isomorphic. It is based on the next definition of
re-parameterization equivalent functions.

\begin{defi}
Let $\mathcal H(S^1)$ be the set of homeomorphisms on $S^1$.
 We shall say that two functions $f,g\in\F^0\subset\F(S^1,\R)$ are
 \emph{re-parameterization
equivalent} if there exists $\tau\in\mathcal H(S^1)$ such that
$f(p)=g(\tau(p))$ for every $p\in S^1$.
\end{defi}

\begin{lem}\label{piecelinear}
Let $(\Gamma_{f},f_{_|})$ and $(\Gamma_{g},g_{_|})$ be labelled
Reeb graphs associated with $(S^1,f)$ and $(S^1,g)$, respectively.
If an edge-preserving bijection $\Phi: V(\Gamma_{f})\to
V(\Gamma_{g})$ exists, then there also exists a piecewise linear
$\tau \in \mathcal H(S^1)$ such that
$\tau_{|_{V(\Gamma_{f})}}=\Phi$. If moreover $f_{_|} =
g_{_|}\circ\Phi$, then $f = g\circ\tau$.
\end{lem}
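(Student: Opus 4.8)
The plan is to construct the homeomorphism $\tau$ explicitly, piece by piece over the edges of the cycle graph $\Gamma_f$. First I would recall that $\Gamma_f$ and $\Gamma_g$ are cycle graphs with the same (even) number $2n$ of vertices, alternating between minima and maxima of $f$ and of $g$ respectively; since $\Phi$ is an edge-preserving bijection between these cycles, it must send the cyclic sequence of vertices of $\Gamma_f$ to the cyclic sequence of vertices of $\Gamma_g$, either preserving or reversing the cyclic order. Correspondingly, each edge $e(v_i,v_{i+1})\in E(\Gamma_f)$ is sent to the edge $e(\Phi(v_i),\Phi(v_{i+1}))\in E(\Gamma_g)$. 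Now each vertex $v\in V(\Gamma_f)$ is identified with a critical point $p\in K(f)\subset S^1$, and each edge of $\Gamma_f$ corresponds to the closure of a connected component of $S^1\setminus K(f)$, i.e.\ to a closed arc of $S^1$ joining two consecutive critical points; the analogous statement holds for $g$. So the data of $\Phi$ gives me a bijection between the arcs into which $K(f)$ divides $S^1$ and the arcs into which $K(g)$ divides $S^1$, matching endpoints compatibly.

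Next I would define $\tau$ on each such arc. Fix an arc $A=[p_i,p_{i+1}]\subset S^1$ between consecutive critical points of $f$, with image arc $A'=[\Phi(p_i),\Phi(p_{i+1})]\subset S^1$ between the corresponding consecutive critical points of $g$. Parameterize $A$ affinely by $[0,1]$ via arc length (or any fixed affine identification) and likewise $A'$ by $[0,1]$, respecting the orientation dictated by sending $p_i\mapsto 0$, $p_{i+1}\mapsto 1$ and $\Phi(p_i)\mapsto 0$, $\Phi(p_{i+1})\mapsto 1$; then let $\tau$ restricted to $A$ be this affine (hence piecewise linear, indeed linear) map $A\to A'$. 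Doing this on every arc and checking that the definitions agree at the shared endpoints (they do, since at $p_i$ both adjacent pieces send $p_i$ to $\Phi(p_i)$), I get a well-defined continuous bijection $\tau:S^1\to S^1$ with continuous inverse, i.e.\ $\tau\in\mathcal H(S^1)$, which is piecewise linear by construction and satisfies $\tau_{|_{V(\Gamma_f)}}=\Phi$. One small point to verify here is orientation consistency around the whole cycle: because $\Phi$ either preserves or reverses the cyclic order globally, the local affine identifications can be chosen coherently so that the pieces glue into a homeomorphism rather than merely a local homeomorphism with a mismatch; this is where the cycle-graph structure is essential.

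Finally, assume in addition that $f_{_|}=g_{_|}\circ\Phi$, so $f(p_i)=g(\Phi(p_i))$ for every critical point. I want $f=g\circ\tau$ on all of $S^1$. On each arc $A=[p_i,p_{i+1}]$, both $f_{|A}$ and $(g\circ\tau)_{|A}$ are real-valued functions on $A$; since $f$ is a simple Morse function, $f$ is strictly monotone on $A$ (no interior critical points), with the same monotonicity type as $g$ on $A'$ (both increasing from a min to a max, or both decreasing); and the two functions agree at the endpoints of $A$ by the label condition. To conclude $f_{|A}=(g\circ\tau)_{|A}$ one cannot argue pointwise from monotonicity and matching endpoints alone, so the honest move is to \emph{redefine} $\tau$ on $A$ not as the affine map but as the composition $\tau_{|A}=(g_{|A'})^{-1}\circ f_{|A}$, which is well-defined because $g_{|A'}$ is a homeomorphism of $A'$ onto the interval $[g(\Phi(p_i)),g(\Phi(p_{i+1}))]=[f(p_i),f(p_{i+1})]=f(A)$; this $\tau_{|A}$ is a homeomorphism $A\to A'$ fixing the endpoint correspondence, and it satisfies $g\circ\tau=f$ on $A$ by construction. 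Gluing these over all arcs gives $\tau\in\mathcal H(S^1)$ with $\tau_{|V(\Gamma_f)}=\Phi$ and $f=g\circ\tau$. I expect the main obstacle to be precisely this tension between the two assertions of the lemma: the first wants a purely combinatorial/affine $\tau$ (piecewise linear), while the second wants an analytically adapted $\tau$; the clean resolution is to note that when only $\Phi$ is given one uses the affine pieces (piecewise linear), and when the label compatibility is also given one instead uses $(g_{|A'})^{-1}\circ f_{|A}$ — observing that this latter map need not be piecewise linear, which is consistent with the lemma, since the PL claim is only made under the weaker hypothesis. I would also double-check the boundary gluing and the global orientation bookkeeping, as those are the places where a careless argument would silently fail.
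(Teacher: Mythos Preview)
Your argument is correct, but it diverges from the paper's in one structural respect. The paper builds a \emph{single} homeomorphism $\tau$ meant to serve both conclusions at once: on each arc $e(p',p'')$ it defines $\tau(p)$ by the rule that if $f(p)=(1-\lambda_p)f(p')+\lambda_p f(p'')$ then $g(\tau(p))=(1-\lambda_p)g(\tau(p'))+\lambda_p g(\tau(p''))$. With this value-interpolation construction the identity $f=g\circ\tau$ under the label hypothesis falls out immediately, since the two convex combinations coincide once $f(p')=g(\tau(p'))$ and $f(p'')=g(\tau(p''))$; no second construction is needed. You instead produce two homeomorphisms: the affine-in-arc-length one for the first assertion, and $(g_{|A'})^{-1}\circ f_{|A}$ for the second. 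Your second map is precisely what the paper's construction reduces to when the labels match, so the substance agrees. What your approach buys is honesty about the phrase ``piecewise linear'': the paper's $\tau$ is linear in the \emph{function values} of $f$ and $g$, not in any intrinsic coordinate on $S^1$, so its PL claim is loose; your arc-length-affine $\tau$ is PL in the usual sense, at the price of needing a separate $\tau$ for the second claim. Since the lemma's wording (``If moreover \ldots then $f=g\circ\tau$'') refers to the same $\tau$, it is worth noting that the paper intends its single value-interpolation $\tau$ throughout, with ``piecewise linear'' read informally.
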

\begin{proof}
The proof of the first statement is inspired by \cite[Lemma
4.2]{DoFr09}. Let us construct $\tau$ by extending $\Phi$ to $S^1$
as follows. Let us recall that $V(\Gamma_{f})=K(f)$ and
$V(\Gamma_{g})=K(g)$, and, by abuse of notation, for every pair of
adjacent vertices $p',p''\in V(\Gamma_f)$, let us identify the
edge $e(p',p'')\in E(\Gamma_f)$ with the arc of $S^1$ having
endpoints $p'$ and $p''$, and not containing any other critical
point of $f$. For every $p \in K(f)$, let $\tau(p)=\Phi(p)$. Now,
let us define $\tau(p)$ for every $p\in S^1\setminus K(f)$. Given
$p\in S^1\setminus K(f)$, we observe that there always exist
$p',p''\in V(\Gamma_f)$ such that $p\in e(p',p'')$. Since $\Phi$
is edge-preserving, there exists
$e(\Phi(p'),\Phi(p''))=e(\tau(p'),\tau(p''))\in E(\Gamma_g)$.
Hence, we can define $\tau(p)$ as the unique point of
$e(\tau(p'),\tau(p''))$ such that, if $f(p)= (1-\lambda_p)f(p')+
\lambda_p f(p'')$, with $\lambda_p\in [0,1],$ then $g(\tau(p))=
(1-\lambda_p)g(\tau(p'))+ \lambda_p g(\tau(p''))$. Clearly, $\tau$
belongs to ${\mathcal H}(S^1)$ and is piecewise linear.

As for the second statement, it is sufficient to observe that, if
$f_{_|} = g_{_|}\circ\Phi$, since $\tau(p)=\Phi(p)$ for every
$p\in K(f)$, then clearly $f_{_|}(p)=g_{_|}(\tau(p))$ for every
$p\in K(f)$. Moreover, for every $p\in S^1\setminus K(f)$, by the
construction of $\tau$, it holds that $g(\tau(p))=
(1-\lambda_p)g(\Phi(p'))+ \lambda_p
g(\Phi(p''))=(1-\lambda_p)f(p')+ \lambda_p f(p'')=f(p).$ In
conclusion, $f(p)=g(\tau(p))$ for every $p\in S^1$, and, hence,
$f,g$ are re-parameterization equivalent.

\end{proof}

\begin{prop}[Uniqueness theorem]\label{unique}
Let $(\Gamma_{f},f_{_|})$, $(\Gamma_{g},g_{_|})$ be labelled Reeb
graphs associated with $(S^1,f)$ and $(S^1,g)$, respectively. Then
$(\Gamma_{f},f_{_|})$ is isomorphic to $(\Gamma_{g},g_{_|})$ if
and only if $f$ and $g$ are re-parameterization equivalent.
\end{prop}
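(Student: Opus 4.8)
The plan is to prove the two implications of the ``if and only if'' separately, and the essential work has already been packaged into Lemma~\ref{piecelinear}.

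For the ``only if'' direction, suppose $(\Gamma_f, f_{_|})$ is isomorphic to $(\Gamma_g, g_{_|})$, so that by Definition~\ref{isolabel} there is an edge-preserving bijection $\Phi: V(\Gamma_f)\to V(\Gamma_g)$ with $f_{_|} = g_{_|}\circ\Phi$. Then the second statement of Lemma~\ref{piecelinear} immediately produces a (piecewise linear) homeomorphism $\tau\in\mathcal H(S^1)$ with $f = g\circ\tau$, which is precisely the assertion that $f$ and $g$ are re-parameterization equivalent. So this direction is essentially a one-line invocation of the lemma.

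For the ``if'' direction, assume $f$ and $g$ are re-parameterization equivalent, i.e.\ there is $\tau\in\mathcal H(S^1)$ with $f = g\circ\tau$. The key observation is that a homeomorphism of $S^1$ carries connected components of level sets of $f$ bijectively onto connected components of level sets of $g$: since $f(p) = g(\tau(p))$, the point $p$ lies in the level set $f^{-1}(c)$ exactly when $\tau(p)$ lies in $g^{-1}(c)$, and $\tau$ being a homeomorphism preserves connectedness, so it descends to a well-defined homeomorphism of quotient spaces $\bar\tau: \Gamma_f\to\Gamma_g$. In particular $\tau$ maps critical points of $f$ to critical points of $g$ (the critical points are exactly the preimages of vertices under the quotient map, i.e.\ the points where a level-set component is a single point with no neighborhood homeomorphic to an interval mapping monotonically), so $\bar\tau$ restricts to an edge-preserving bijection $\Phi := \bar\tau_{|_{V(\Gamma_f)}}: V(\Gamma_f)\to V(\Gamma_g)$. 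Finally, for each vertex $v\in V(\Gamma_f)$ identified with its critical point $p\in K(f)$, we have $f_{_|}(v) = f(p) = g(\tau(p)) = g_{_|}(\Phi(v))$, so $\Phi$ is label-preserving and hence an isomorphism of labelled Reeb graphs in the sense of Definition~\ref{isolabel}.

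The main obstacle, and the only point requiring a little care, is the claim in the ``if'' direction that $\tau$ sends $K(f)$ onto $K(g)$ and that the induced map on vertices is edge-preserving. Since $\tau$ is only a homeomorphism (not a diffeomorphism), one cannot argue via derivatives; instead one must use the topological characterization of critical points of a simple Morse function on $S^1$, namely that $p\in K(f)$ iff $f$ has a strict local extremum at $p$ (equivalently, the component of the level set through $p$ fails to have a neighborhood on which $f$ is a homeomorphism onto an interval). This property is preserved under the substitution $f = g\circ\tau$ because $\tau$ is an order-respecting reparameterization in the sense that it takes monotone arcs to monotone arcs; the edges of $\Gamma_f$, identified with the arcs of $S^1$ between consecutive critical points on which $f$ is strictly monotone, are then carried by $\tau$ onto the analogous arcs for $g$, giving the edge-preservation. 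Once this is in place the rest is bookkeeping with the identification of vertices and critical points already adopted in Subsection~\ref{reebsection}.
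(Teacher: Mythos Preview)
Your proof is correct and follows essentially the same approach as the paper. The ``only if'' direction is identical (a direct appeal to Lemma~\ref{piecelinear}), and for the ``if'' direction both you and the paper define $\Phi$ as the restriction of $\tau$ to $V(\Gamma_f)=K(f)$; the paper simply asserts in one sentence that $\tau$ carries minima to minima and maxima to maxima, whereas you take more care to justify this via the topological characterization of critical points as strict local extrema (a worthwhile precaution, since $\tau$ is only assumed to be a homeomorphism).
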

\begin{proof}
The direct statement is a trivial consequence of Lemma
\ref{piecelinear}.

As for the converse statement, it is sufficient to observe that
any $\tau\in\mathcal{H}(S^1)$ such that $f=g\circ\tau$, as well as
its inverse $\tau^{-1}$, takes the minima of $f$ to the minima of
$g$ and the maxima of $f$ to the maxima of $g$. Hence,
$\Phi:V(\Gamma_{f})\to V(\Gamma_{g})$, with
$\Phi=\tau_{|_{V(\Gamma_{f})}}$, is an edge preserving bijection
such that $f_{_|}=g_{_|} \circ\Phi.$
\end{proof}

As a consequence of Proposition \ref{unique}, two labelled Reeb
graphs isomorphic in the sense of Definition \ref{isolabel} will
always be identified, and in such case we will simply write
$(\Gamma_{f}, f_{_|})=(\Gamma_{g}, g_{_|})$.

The following Proposition \ref{realizth} ensures that, for every
cycle graph with an appropriate vertices labelling, there exists a
unique (up to re-parameterization) pair $(S^1, f)$, with
$f\in\F^0$, having such a graph as the associated labelled Reeb
graph.

\begin{prop}[Realization theorem]\label{realizth}
Let $(G,\ell)$ be a labelled graph, where $G$ is a cycle graph on
an even number of vertices, and $\ell:V(G)\rightarrow \R$ is an
injective function such that, for any vertex $v_2$ adjacent (that
is connected by an edge) to the vertices $v_1$ and $v_3$, either
both $\ell(v_1)$ and $\ell(v_3)$ are smaller than $\ell(v_2)$, or
both $\ell(v_1)$ and $\ell(v_3)$ are greater than $\ell(v_2)$.
Then there exists a simple Morse function $f:S^1\to \R$ such that
$(\Gamma_{f},f_{_{|K(f)}})=(G,\ell)$.
\end{prop}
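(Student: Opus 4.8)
The plan is to construct the function $f$ explicitly by realizing the cycle graph $G$ as $S^1$ itself, and prescribing $f$ piecewise between consecutive vertices so that it has exactly the required critical behavior. First I would fix a cyclic ordering $v_1, v_2, \ldots, v_{2m}, v_{2m+1}=v_1$ of the vertices of $G$ around the cycle, and choose $2m$ distinct points $p_1, \ldots, p_{2m}$ on $S^1$ in cyclic order, identifying the arc between $p_i$ and $p_{i+1}$ (containing no other $p_j$) with the edge $e(v_i,v_{i+1})$. I would then set $f(p_i) = \ell(v_i)$. By the hypothesis on $\ell$, each vertex is either a strict local max (both neighbors have smaller label) or a strict local min (both neighbors have larger label), and since the cycle alternates adjacency in a bipartite-like fashion, the local maxima and local minima alternate around $S^1$; in particular there are equally many of each, consistent with the even number of vertices.

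Next I would interpolate $f$ on each arc $e(v_i,v_{i+1})$ by a smooth strictly monotone function (say a suitable reparameterization of a linear function, flattened near the endpoints so as to glue smoothly) joining the prescribed values $\ell(v_i)$ and $\ell(v_{i+1})$, with all derivatives vanishing to high order at $p_i$ and $p_{i+1}$ from the side of that arc, and with no critical point in the open arc. Standard bump-function constructions (e.g. using the $C^\infty$ function $e^{-1/t}$) allow such a smooth monotone interpolation with controlled endpoint jets, so that the pieces glue to a globally $C^\infty$ function $f:S^1\to\R$. At each $p_i$ one then checks that $f'(p_i)=0$ and $f''(p_i)\neq 0$: because $p_i$ is a strict local extremum and $f$ is built from strictly monotone arcs on both sides, one can arrange the second-order behavior (e.g. by choosing the local model $\ell(v_i)\pm x^2$ near $p_i$ and matching the monotone arcs to it) so that $p_i$ is a non-degenerate critical point of index $0$ or $1$. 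Finally, since $\ell$ is injective, distinct critical points have distinct critical values, so $f$ is a \emph{simple} Morse function.

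It remains to identify the labelled Reeb graph of $(S^1,f)$ with $(G,\ell)$. By construction $K(f)=\{p_1,\ldots,p_{2m}\}$, each level set between consecutive critical values is a pair of arcs (regular level) or a single point, and the connected components of level sets are contracted exactly as in the cycle structure we started from; hence $\Gamma_f$ is a cycle graph with vertex set $\{p_1,\ldots,p_{2m}\}$ in the same cyclic adjacency as $G$, i.e.\ $\Gamma_f\cong G$ via $v_i\mapsto p_i$. This bijection is edge-preserving, and $f_{_{|K(f)}}(p_i)=\ell(v_i)$, so $(\Gamma_f,f_{_{|K(f)}})=(G,\ell)$ in the sense of Definition \ref{isolabel}. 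Uniqueness up to re-parameterization follows from Proposition \ref{unique}.

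\textbf{Main obstacle.} The one delicate point is the smoothness of the glued function at the critical points while simultaneously guaranteeing non-degeneracy: a naive piecewise-linear or piecewise-quadratic interpolation need not be $C^\infty$ across the vertices, whereas flattening all derivatives at the endpoints to achieve $C^\infty$ gluing risks producing a degenerate (or even non-isolated) critical point. The remedy, and the step that needs the most care, is to interpolate on each arc so that near each $p_i$ the function agrees with a fixed quadratic local Morse model $\ell(v_i)\pm x^2$ (taking the sign according to whether $p_i$ is a min or max), and to carry out the smooth gluing only in the region of the arc \emph{away} from both endpoints where both adjacent quadratic models have already taken over; this keeps $f''(p_i)\neq 0$ while ensuring $C^\infty$ regularity everywhere.
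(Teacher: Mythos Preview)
Your construction is correct and is essentially the canonical way to realize such a labelled cycle as the Reeb graph of a simple Morse function on $S^1$. Note, however, that the paper's own proof consists of the single sentence ``It is evident'': the authors regard the existence of such an $f$ as obvious and give no details whatsoever. So your proposal is not so much a different route as a fully spelled-out version of what the paper leaves to the reader.

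A couple of minor remarks. First, the alternation of local maxima and local minima around the cycle is a direct consequence of the hypothesis on $\ell$ (each vertex is a strict local extremum, and two consecutive vertices cannot both be maxima or both be minima), so your appeal to a ``bipartite-like'' structure is unnecessary. Second, your identified obstacle---gluing $C^\infty$ while keeping $f''(p_i)\neq 0$---is genuine, and your remedy of fixing the quadratic Morse model $\ell(v_i)\pm x^2$ near each $p_i$ and performing the smooth monotone interpolation only on the interior of each arc is exactly the right fix; this is entirely standard and is presumably what the authors had in mind by ``evident.'' Finally, the last sentence about uniqueness up to re-parameterization is not part of the statement being proved (it is Proposition~\ref{unique}), so you may omit it.
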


\begin{proof}
It is evident.
\end{proof}

By virtue of the above Uniqueness and Realization theorems
(Propositions \ref{unique} and \ref{realizth}), for conciseness,
when a labelled Reeb graph will be introduced in the sequel, the
associated pair will be often omitted.

\section{Editing distance between labelled Reeb graphs}\label{edit}

We now define the editing deformations admissible to transform a
labelled Reeb graph of a closed curve into another. We introduce
at first elementary deformations and then the deformations
obtained by their composition. Next, we associate a cost with each
type of deformation, and define a distance between labelled Reeb
graphs in terms of such a cost.

\begin{defi}\label{elem_def}
Let  $(\Gamma_{f},f_{_|})$ be a labelled Reeb graph with $2n$
vertices, $n\geq 1$. We call an \emph{elementary deformation} of
$(\Gamma_{f},f_{_|})$ any of the following transformations:

\begin{itemize}
\item [(B)] (Birth): Assume $e(v_1, v_2) \in E(\Gamma_{f})$  with
$f_{_|}(v_{1})<f_{_|}(v_{2})$. Then $(\Gamma_{f},f_{_|})$ is
transformed into a labelled graph $(G,\ell)$ according to the
following rule: $G$ is the new graph on $2n+2$ vertices, obtained
deleting the edge $e(v_1, v_2)$ and inserting two new vertices
$u_{1}$, $u_{2}$ and the edges $e(v_1, u_1), e(u_1, u_2), e(u_2,
v_2)$; moreover, $\ell:V(G)\to\R$ is defined by extending $f_{_|}$
from $V(\Gamma_{f})$ to $V(G)=V(\Gamma_{f})\cup \{u_{1},u_{2}\}$
in such a way that $\ell_{|V(\Gamma_{f})}\equiv f_{_|}$, and
$f_{_|}(v_1)<\ell(u_2)<\ell(u_1)<f_{_|}(v_2)$.

\item [(D)] (Death): Assume $n\ge 2$, and $e(v_1, u_1), e(u_1,
u_2), e(u_2, v_2) \in E(\Gamma_{f})$, with
$f_{_|}(v_1)<f_{_|}(u_2)<f_{_|}(u_1)<f_{_|}(v_2)$. Then
$(\Gamma_{f},f_{_|})$ is transformed into a labelled graph
$(G,\ell)$ according to the following rule: $G$ is the new graph
on $2n-2$ vertices, obtained deleting $u_1$, $u_2$ and the edges
$e(v_1, u_1)$, $e(u_1, u_2)$, $e(u_2, v_2)$, and inserting an edge
$e(v_1, v_2)$; moreover, $\ell:V(G)\to\R$ is defined as the
restriction of $f_{_|}$ to $V(\Gamma_{f})\setminus \{u_1,u_2\}$.

\item[(R)] (Relabelling): $(\Gamma_{f},f_{_|})$ is transformed
into a labelled graph $(G,\ell)$ according to the following rule:
$G=\Gamma_{f}$, and for any vertex $v_2$ adjacent to the vertices
$v_1$ and $v_3$ (possibly $v_1 \equiv v_3$ for $n=1$), if both
$f_{_|}(v_1)$ and $f_{_|}(v_3)$ are smaller (greater,
respectively) than $f_{_|}(v_2)$, then both $\ell(v_1)$ and
$\ell(v_3)$ are smaller (greater, respectively) than $\ell(v_2)$;
moreover, for every $v\neq w$, $\ell(v)\neq \ell(w)$.
\end{itemize}
We shall denote by $T(\Gamma_{f},f_{_|})$ the result of the
elementary deformation $T$ applied to $(\Gamma_{f},f_{_|})$.
\end{defi}

Table \ref{deformations} schematically  illustrates the elementary
deformations described in Definition \ref{elem_def}.
\begin{table}[htbp]
\begin{center}
\psfrag{v1}{$v_1$}\psfrag{u1}{$u_1$}\psfrag{u2}{$u_2$}\psfrag{v2}{$v_2$}
\psfrag{v3}{$v_3$}\psfrag{v4}{$v_4$}
\psfrag{v5}{$v_5$}\psfrag{v6}{$v_6$}\psfrag{v7}{$v_7$}\psfrag{v8}{$v_8$}
\psfrag{B}{(B)}\psfrag{D}{(D)}\psfrag{R}{(R)}
\begin{tabular}{|c|}
\hline\\
\includegraphics[height=3cm]{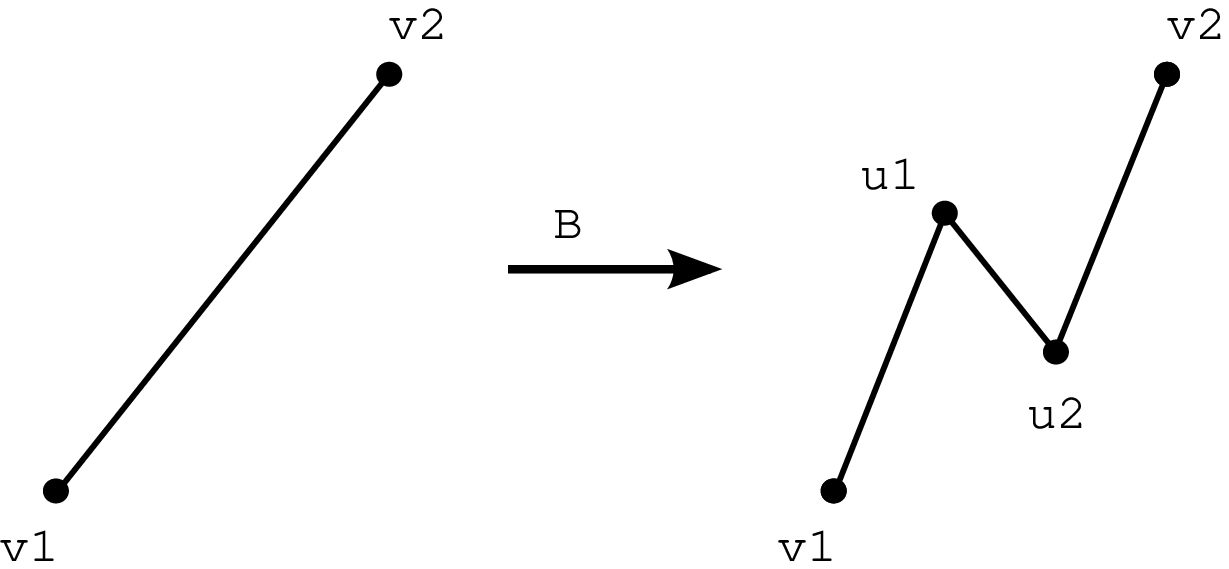}\\
\vspace{0mm}\\
\hline\\
\includegraphics[height=3cm]{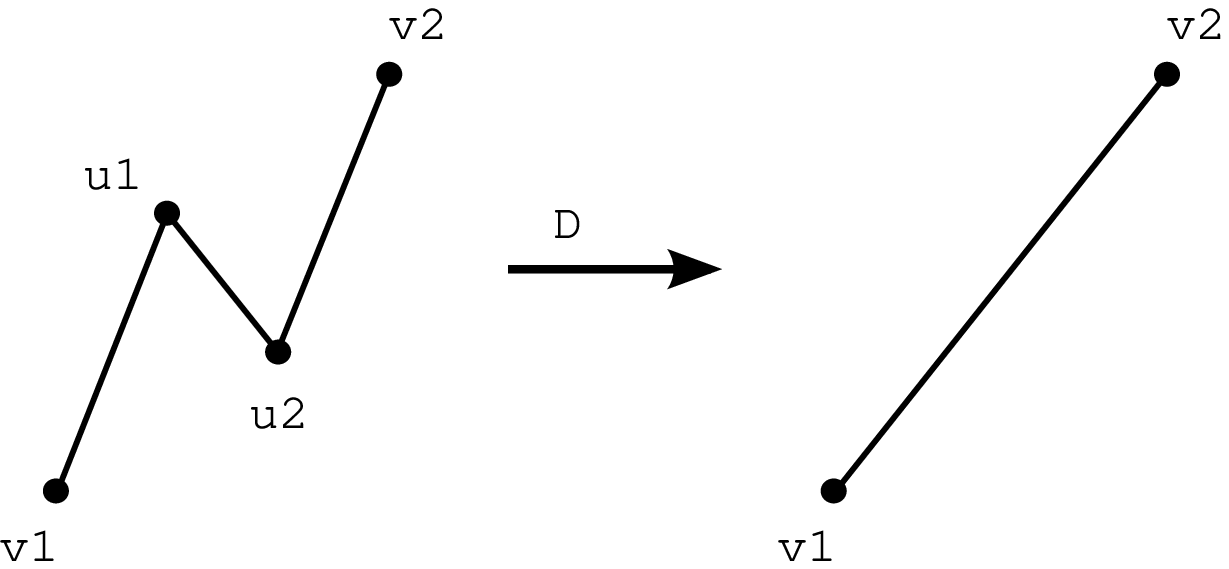}\\
\vspace{0mm}\\
\hline\\
\includegraphics[height=3cm]{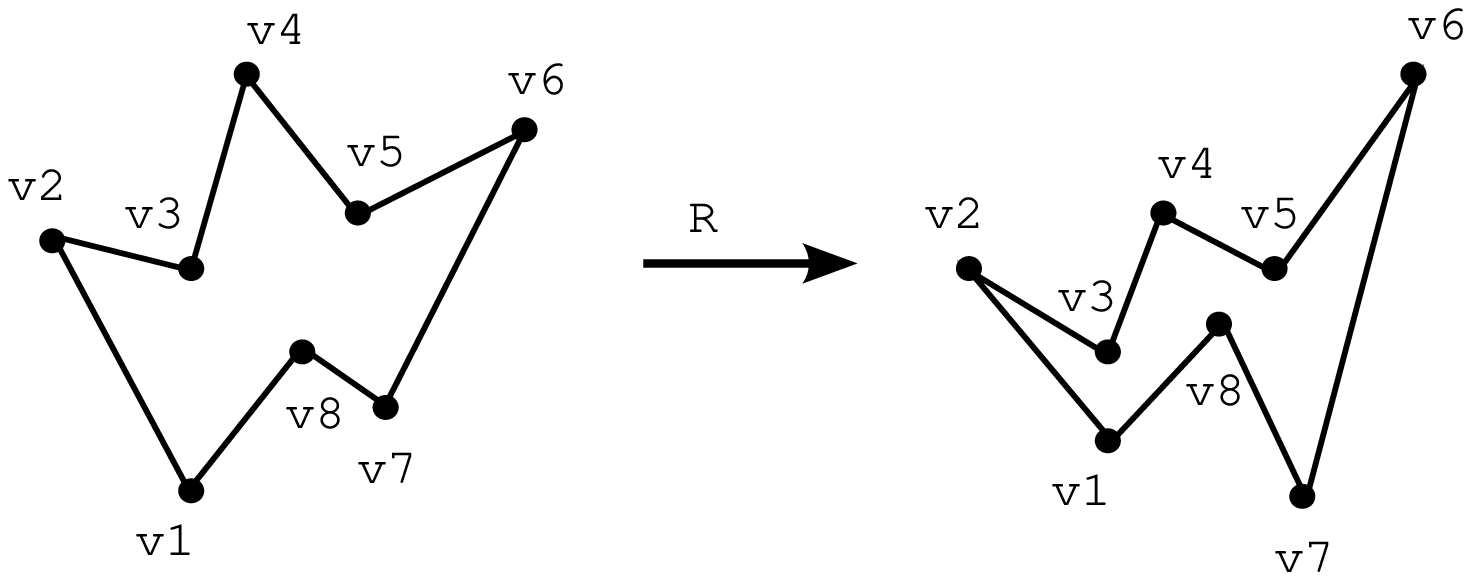}\\
\vspace{0mm}\\
\hline
\end{tabular}
\vspace{5mm} \caption{\footnotesize{The upper two figures
schematically show the elementary deformations of type (B) and
(D), respectively; the third figure shows an example of elementary
deformation of type (R).}}\label{deformations}
\end{center}
\end{table}

\begin{prop}\label{defGl}
Let $T$ be an elementary deformation of $(\Gamma_f,f_{_|})$, and
let $(G,\ell)=T(\Gamma_f,f_{_|})$. Then $(G,\ell)$ is a Reeb graph
$(\Gamma_g,g_{_|})$ associated with a pair $(S^1,g)$, and
$g\in\F^0$ is unique up to re-parameterization equivalence.
\end{prop}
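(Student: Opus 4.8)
The plan is to verify that the output $(G,\ell)$ of any one of the three elementary deformations still satisfies the hypotheses of the Realization theorem (Proposition \ref{realizth}), and then to invoke that theorem together with the Uniqueness theorem (Proposition \ref{unique}) to obtain a pair $(S^1,g)$ with $g\in\F^0$, unique up to re-parameterization. So the first step is to check, case by case over the deformation types (B), (D), (R), that $G$ is a cycle graph on an even number of vertices and that $\ell$ is injective and satisfies the local extremum condition: for every vertex $v_2$ adjacent to $v_1$ and $v_3$, either both $\ell(v_1),\ell(v_3)<\ell(v_2)$ or both $>\ell(v_2)$.

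For the three cases: in (B), deleting the edge $e(v_1,v_2)$ and inserting a path $v_1,u_1,u_2,v_2$ keeps the graph a cycle, and the number of vertices goes from $2n$ to $2n+2$, still even; the inequality $f_{_|}(v_1)<\ell(u_2)<\ell(u_1)<f_{_|}(v_2)$ imposed in the definition, combined with the fact that $f_{_|}$ already satisfied the condition on $\Gamma_f$, is exactly what is needed to check the condition at $u_1$ (a local max among $u_2$ and $v_2$ on one — wait, one must be careful: $u_1$ is adjacent to $u_2$ and $v_2$, and $\ell(u_2)<\ell(u_1)<\ell(v_2)$, so $u_1$ is neither a strict local max nor min). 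Here the subtlety is that the local extremum condition as literally stated in Proposition \ref{realizth} would fail; one should instead note that the definition of Birth is set up so that the new vertices $u_1,u_2$ form a new adjacent max/min pair and the alternation of maxima and minima along the cycle is preserved — $v_1$ and $v_2$ have the same type they had before, while $u_1,u_2$ are a max–min pair inserted compatibly. I would therefore recast the verification in terms of the description given at the start of Section \ref{labreebgraphs}: the vertices of a Reeb graph of a closed curve alternate between minima and maxima, and each elementary deformation preserves this alternation and the evenness of the vertex count, hence $(G,\ell)$ is realizable. Case (D) is symmetric to (B), removing the max–min pair $u_1,u_2$; case (R) leaves the graph unchanged and, by its very definition, preserves which vertices are local maxima and which are local minima while keeping $\ell$ injective, so the realizability hypotheses are immediate.

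Once realizability is established, Proposition \ref{realizth} yields a simple Morse function $g:S^1\to\R$ with $(\Gamma_g,g_{_{|K(g)}})=(G,\ell)$, so $(G,\ell)$ is indeed a labelled Reeb graph associated with $(S^1,g)$ and $g\in\F^0$. Uniqueness up to re-parameterization equivalence then follows from Proposition \ref{unique}: if $g_1,g_2\in\F^0$ both have labelled Reeb graph equal to $(G,\ell)$, then $(\Gamma_{g_1},{g_1}_{_|})$ and $(\Gamma_{g_2},{g_2}_{_|})$ are isomorphic labelled Reeb graphs, hence $g_1$ and $g_2$ are re-parameterization equivalent.

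The only genuine obstacle is the bookkeeping in step one, namely matching the literal statement of Proposition \ref{realizth} (which is phrased via the strict local-extremum condition) against what a Birth or Death actually produces at the newly inserted or deleted vertices; the clean way around it is to argue directly that maxima/minima alternation and parity of the vertex count — the structural features of Reeb graphs of closed curves recalled at the beginning of Section \ref{labreebgraphs} — are preserved by each elementary deformation, which is visually transparent from Table \ref{deformations} and routine to write out. Everything after that is a direct citation of Propositions \ref{realizth} and \ref{unique}.
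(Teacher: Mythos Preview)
Your approach is the paper's own: invoke Propositions \ref{realizth} and \ref{unique}. The paper's proof is in fact the single sentence ``The claim follows from Propositions \ref{realizth} and \ref{unique}.''

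However, the ``subtlety'' you flag in case (B) is a misreading of the edge structure. Definition \ref{elem_def}(B) inserts the edges $e(v_1,u_1)$, $e(u_1,u_2)$, $e(u_2,v_2)$, so $u_1$ is adjacent to $v_1$ and $u_2$ (not to $u_2$ and $v_2$ as you wrote), and $u_2$ is adjacent to $u_1$ and $v_2$. With the imposed inequalities $f_{_|}(v_1)<\ell(u_2)<\ell(u_1)<f_{_|}(v_2)$, the local-extremum condition of Proposition \ref{realizth} is then satisfied \emph{literally}: both neighbours of $u_1$ carry smaller labels, both neighbours of $u_2$ carry larger labels, and at $v_1$ (resp.\ $v_2$) the replaced neighbour $u_1$ (resp.\ $u_2$) lies on the same side of $f_{_|}(v_1)$ (resp.\ $f_{_|}(v_2)$) as the old neighbour $v_2$ (resp.\ $v_1$) did. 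So no recasting via the max/min alternation is needed; the hypothesis of Proposition \ref{realizth} holds directly in all three cases, and your final two paragraphs then finish the argument.
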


\begin{proof}
The claim follows from Propositions \ref{realizth} and
\ref{unique}.
\end{proof}

\noindent As a consequence of the above result, from now on, we
will directly write $T(\Gamma_{f},f_{_|})=(\Gamma_{g},g_{_|})$.

Moreover, since the previous Proposition \ref{defGl} shows that an
elementary deformation of a labelled Reeb graph is still a
labelled Reeb graph, we can also apply elementary deformations
iteratively. This fact is used in the next Definition \ref{def}.

Given an elementary deformation $T$ of $(\Gamma_{f},f_{_|})$ and
an elementary deformation $S$ of $T(\Gamma_{f},f_{_|})$, the
juxtaposition $ST$ means applying first $T$ and then  $S$.

\begin{defi}\label{def}
We shall call {\em deformation} of $(\Gamma_{f},f_{_|})$ any
finite ordered sequence $T=(T_1,T_2,\ldots ,T_r)$ of elementary
deformations such that $T_1$ is an elementary deformation of
$(\Gamma_{f},f_{_|})$, $T_2$ is an elementary deformation of
$T_1(\Gamma_{f},f_{_|})$, ..., $T_r$ is an elementary deformation
of $T_{r-1}T_{r-2}\cdots T_1(\Gamma_{f},f_{_|})$. We shall denote
by $T(\Gamma_{f},f_{_|})$ the result of the deformation $T$
applied to  $(\Gamma_{f},f_{_|})$.
\end{defi}

Let us define the cost of a deformation.

\begin{defi}\label{cost 1}
Let $T$ be an elementary deformation transforming
$(\Gamma_{f},f_{_|})$ into $(\Gamma_{g},g_{_|})$.
\begin{itemize}
\item If $T$ is of type (B) inserting the vertices $u_1,u_2\in
V(\Gamma_{g})$, then we define the associated cost as
$$c(T)=\frac{|g_{_|}(u_1)-g_{_|}(u_2)|}{2};$$
\item If $T$ is of type (D) deleting the vertices $u_1,u_2\in
V(\Gamma_{f})$, then we define the associated cost as
$$c(T)=\frac{|f_{_|}(u_1)-f_{_|}(u_2)|}{2};$$
\item If $T$ is of type (R) relabelling the vertices $v\in
V(\Gamma_{f})=V(\Gamma_{g})$, then we define the associated cost
as
$$c(T)=\underset{v \in V(\Gamma_{f})}\max|f_{_|}(v)-g_{_|}(v)|.$$
\end{itemize}
Moreover, if $T=(T_1,\ldots ,T_r)$ is a deformation such that $T_r
\cdots T_1(\Gamma_{f}, f_{_|})=(\Gamma_{g}, g_{_|})$, we define
the associated cost as $c(T)=\underset{i=1}{\overset{r}\sum}
c(T_i)$.
\end{defi}

We now introduce the concept of inverse deformation.

\begin{defi}\label{definverse}
Let $T$ be a deformation such that $T(\Gamma_{f}, f_{_|}) =
(\Gamma_{g}, g_{_|})$. Then we denote by $T^{-1}$, and call it the
\emph{inverse} of $T$, the deformation such that
$T^{-1}(\Gamma_{g}, g_{_|})=(\Gamma_{f}, f_{_|})$ defined as
follows: \begin{itemize} \item If $T$ is elementary of type (B)
inserting two vertices, then $T^{-1}$ is of type (D) deleting the
same vertices; \item If $T$ is elementary of type (D) deleting two
vertices, then $T^{-1}$ is of type (B) inserting the same
vertices, with the same labels;\item If $T$ is elementary of type
(R) relabelling vertices of $V(\Gamma_{f})$, then $T^{-1}$ is
again of type (R) relabelling these vertices in the inverse way;
\item If $T=(T_1, \ldots, T_r)$, then $T^{-1}=(T^{-1}_r, \ldots,
T^{-1}_1)$.
\end{itemize}
\end{defi}

\begin{prop}\label{inverse}
For every deformation $T$ such that $T(\Gamma_{f}, f_{_|}) =
(\Gamma_{g}, g_{_|})$, $c(T^{-1})=c(T)$.
\end{prop}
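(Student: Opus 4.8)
The plan is to prove $c(T^{-1})=c(T)$ first for elementary deformations, treating the three types separately, and then extend to arbitrary deformations by induction on the length $r$ using the definition $c(T)=\sum_{i=1}^r c(T_i)$ and $T^{-1}=(T^{-1}_r,\ldots,T^{-1}_1)$.

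For an elementary deformation of type (B) transforming $(\Gamma_f,f_{_|})$ into $(\Gamma_g,g_{_|})$ and inserting the vertices $u_1,u_2$ with $g_{_|}(u_1)$ and $g_{_|}(u_2)$ as their labels, the inverse $T^{-1}$ is by Definition \ref{definverse} the death deformation deleting $u_1,u_2$ from $(\Gamma_g,g_{_|})$; its cost is $c(T^{-1})=\frac{|g_{_|}(u_1)-g_{_|}(u_2)|}{2}$, which is exactly $c(T)$. Symmetrically, if $T$ is of type (D) deleting $u_1,u_2$ from $(\Gamma_f,f_{_|})$, then $T^{-1}$ is the birth deformation inserting $u_1,u_2$ with the same labels, so $c(T^{-1})=\frac{|f_{_|}(u_1)-f_{_|}(u_2)|}{2}=c(T)$. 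For a relabelling $T$ of type (R) sending $(\Gamma_f,f_{_|})$ to $(\Gamma_g,g_{_|})$ with $V(\Gamma_f)=V(\Gamma_g)$, the inverse $T^{-1}$ relabels the same vertices back, i.e. from $g_{_|}$ to $f_{_|}$, so $c(T^{-1})=\max_{v}|g_{_|}(v)-f_{_|}(v)|=\max_{v}|f_{_|}(v)-g_{_|}(v)|=c(T)$. One small point worth checking here is that $T^{-1}$ as described is genuinely an admissible deformation of the target graph — for (R) this requires that $f_{_|}$ satisfies the order constraint relative to the adjacency structure of $G=\Gamma_f$, which it does since it is itself the labelling of a Reeb graph; for (B) and (D) admissibility of the inverse is immediate from the shapes of the graphs involved.

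For the general case, let $T=(T_1,\ldots,T_r)$ with intermediate graphs $(\Gamma_{f}, f_{_|})=(\Gamma_0,\ell_0),(\Gamma_1,\ell_1),\ldots,(\Gamma_r,\ell_r)=(\Gamma_g,g_{_|})$, where $T_i$ is an elementary deformation of $(\Gamma_{i-1},\ell_{i-1})$. Then $T^{-1}=(T_r^{-1},\ldots,T_1^{-1})$ is a deformation of $(\Gamma_g,g_{_|})$ whose successive intermediate graphs run through the same list in reverse order, so it is well-defined as a deformation (each $T_i^{-1}$ being an elementary deformation of $(\Gamma_i,\ell_i)$, the output of $T_{i+1}^{-1}$ applied to the previous stage). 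By the elementary case, $c(T_i^{-1})=c(T_i)$ for each $i$, and hence
\[
c(T^{-1})=\sum_{i=1}^{r} c(T_i^{-1})=\sum_{i=1}^{r} c(T_i)=c(T).
\]

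The argument is essentially bookkeeping; the only mildly delicate point — and the thing I would state carefully rather than the actual obstacle — is verifying that $T^{-1}$ is a legitimately composable deformation in the sense of Definition \ref{def}, i.e. that reversing the order of the elementary inverses produces a sequence in which each term is an elementary deformation of the output of the preceding ones. This follows because the intermediate graphs of $T^{-1}$ are forced to be precisely $(\Gamma_r,\ell_r),(\Gamma_{r-1},\ell_{r-1}),\ldots,(\Gamma_0,\ell_0)$, and $T_i^{-1}$ by construction takes $(\Gamma_i,\ell_i)$ to $(\Gamma_{i-1},\ell_{i-1})$. No genuine difficulty arises beyond this, so the proof is short.
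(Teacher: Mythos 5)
Your proof is correct, and it is exactly the bookkeeping the paper has in mind: the published proof of this proposition is simply ``Trivial,'' meaning the symmetry of the costs in Definition \ref{cost 1} under the inversion rules of Definition \ref{definverse}, summed over the elementary steps. Nothing in your argument diverges from that intended route.
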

\begin{proof}
Trivial.
\end{proof}

We prove that, for every two labelled Reeb graphs, a finite number
of elementary deformations always allows us to transform any of
them into the other one. We recall that we identify labelled Reeb
graphs that are isomorphic according to Definition \ref{isolabel}.
We first need a lemma, stating that in any labelled Reeb graph
with at least four vertices we can find two adjacent vertices that
can be deleted.

\begin{lem}\label{delete}
Let $(\Gamma_{f},f_{_|})$ be a labelled Reeb graph with at least
four vertices. Then there exist $e(v_1, u_1), e(u_1, u_2), e(u_2,
v_2) \in E(\Gamma_{f})$, with
$f_{_|}(v_1)<f_{_|}(u_2)<f_{_|}(u_1)<f_{_|}(v_2)$.
\end{lem}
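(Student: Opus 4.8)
The plan is to exploit the cyclic structure of the Reeb graph of a closed curve, where vertices alternate between minima and maxima of the function $f$, together with the fact that $f$ attains only finitely many critical values. Concretely, I would look at the global maximum of $f$ on $S^1$: let $u_1 \in V(\Gamma_f)$ be the vertex at which $f_{_|}$ is largest. Since $\Gamma_f$ is a cycle graph, $u_1$ is adjacent to exactly two vertices, both of which must be minima (because maxima and minima alternate around the cycle), say $v_2$ on one side and another minimum on the other. We have $f_{_|}(v_2) < f_{_|}(u_1)$ and the analogous strict inequality for the other neighbour, these being strict because $f$ is a simple Morse function (distinct critical values).

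Next I would select, among the neighbours of $u_1$, the one with the \emph{larger} label; call it $u_2$, so $u_2$ is a minimum adjacent to $u_1$ with $f_{_|}(u_2) < f_{_|}(u_1)$, and $f_{_|}(u_2)$ is at least as large as the label of the other neighbour of $u_1$. Now, since $\Gamma_f$ has at least four vertices, the minimum-vertex $u_2$ has a second neighbour besides $u_1$; call it $v_1$. This $v_1$ is a maximum (again by alternation), and $v_1 \neq u_1$ because a cycle graph on at least four vertices has no multiple edges, so $u_1$ and $u_2$ have only the one common edge. It remains to check the chain of inequalities $f_{_|}(v_1) < f_{_|}(u_2) < f_{_|}(u_1) < f_{_|}(v_2)$. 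The middle inequality is just the definition of $u_2$ as a minimum adjacent to the global maximum $u_1$; the rightmost follows since $u_1$ is the global maximum and $v_2 \neq u_1$ so the inequality is strict. For the leftmost inequality $f_{_|}(v_1) < f_{_|}(u_2)$: here I would argue that $u_2$ was chosen as the \emph{higher} of the two minima adjacent to the global maximum $u_1$, and then use that $v_1$, being a maximum adjacent to the minimum $u_2$, satisfies $f_{_|}(v_1) > f_{_|}(u_2)$ — wait, that is the wrong direction, so the choice must be reversed.

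Let me restate the selection to get the direction right: I would instead pick $u_1$ to be the global \emph{maximum}, and then, crucially, among the finitely many critical points, consider the minimum $u_2$ adjacent to $u_1$ whose value $f_{_|}(u_2)$ is as large as possible; since $u_1$ is a maximum it has two minimum-neighbours and we take the one with larger label. Then the second neighbour $v_1$ of $u_2$ is a maximum, and I claim $f_{_|}(v_1) < f_{_|}(u_2)$ would fail in general — so actually the clean argument is: take $u_1$ to be the global maximum, $u_2$ one of its two minimum-neighbours, $v_2$ the other neighbour of $u_1$ (a minimum), and $v_1$ the other neighbour of $u_2$ (a maximum). We automatically get $f_{_|}(u_1) > f_{_|}(v_2)$, $f_{_|}(u_1) > f_{_|}(u_2)$, $f_{_|}(v_1) > f_{_|}(u_2)$. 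To also force $f_{_|}(v_1) < f_{_|}(u_1)$ and choose labels of $u_2, v_2$ correctly, I would, among all pairs (global-maximum vertex, adjacent minimum), pick the configuration minimizing $f_{_|}$ at the adjacent minimum, or alternatively induct on the number of vertices: the point with the largest value among all vertices that are not the global max is either a maximum adjacent to the global max (done: relabel to fit the pattern) or forces a smaller subconfiguration. The honest version: let $u_1$ be the maximum vertex where $f_{_|}$ is \emph{smallest} among all maximum vertices, i.e. the \emph{lowest} maximum; its two neighbours $u_2, v_1$ are minima with $f_{_|}(u_2), f_{_|}(v_1) < f_{_|}(u_1)$; say $f_{_|}(u_2) > f_{_|}(v_1)$; let $v_2$ be the other neighbour of $u_2$, a maximum, so $f_{_|}(v_2) > f_{_|}(u_2)$ and $f_{_|}(v_2) \geq f_{_|}(u_1)$ hence $f_{_|}(v_2) > f_{_|}(u_2)$; and we need $f_{_|}(u_1) > f_{_|}(u_2)$ (true) — this gives exactly $f_{_|}(v_1) < f_{_|}(u_2) < f_{_|}(u_1) < f_{_|}(v_2)$, using $f_{_|}(v_2) > f_{_|}(u_1)$ only if strict, which holds unless $v_2 = u_1$, excluded since the graph has $\geq 4$ vertices so $u_2$'s two neighbours are distinct.

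So the key steps in order are: (1) pick $u_1$ to be a \emph{lowest} maximum vertex of $(\Gamma_f, f_{_|})$ — this exists since there are finitely many maxima and labels are distinct; (2) its two neighbours are minima (alternation in a cycle graph), and they are distinct from each other and strictly below $u_1$ (at least four vertices, simple Morse); label them so that $f_{_|}(v_1) < f_{_|}(u_2)$; (3) let $v_2$ be the other neighbour of $u_2$, which is a maximum with $f_{_|}(v_2) \geq f_{_|}(u_1)$ by minimality of $u_1$, and in fact $f_{_|}(v_2) > f_{_|}(u_1)$ since $v_2 \neq u_1$; (4) assemble $f_{_|}(v_1) < f_{_|}(u_2) < f_{_|}(u_1) < f_{_|}(v_2)$ and observe $e(v_1,u_1), e(u_1,u_2), e(u_2,v_2) \in E(\Gamma_f)$ — wait, the edges in the statement are $e(v_1,u_1), e(u_1,u_2), e(u_2,v_2)$, so I must make sure my $u_1$ is the \emph{middle} of the chain. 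In the chain $v_1 - u_1 - u_2 - v_2$, the middle vertices are $u_1$ (a maximum) and $u_2$ (a minimum); this matches the required edge pattern directly. The only genuinely delicate point — the main obstacle — is verifying the strict inequality $f_{_|}(v_1) < f_{_|}(u_2)$, which is precisely why we must name the two minimum-neighbours of $u_1$ in the correct order (the larger one being $u_2$, the vertex that continues the chain); everything else is bookkeeping about cycle graphs and simple Morse functions that is already available from Section~\ref{labreebgraphs}.
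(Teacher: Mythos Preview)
Your final argument---choose $u_1$ to be the \emph{lowest} maximum, label its two minimum-neighbours so that $f_{_|}(v_1)<f_{_|}(u_2)$, and take $v_2$ to be the other neighbour of $u_2$---is correct and yields the required chain directly. The only point needing care, as you note, is that $v_2\neq u_1$, which follows because a cycle on at least four vertices gives $u_2$ two distinct neighbours; then $f_{_|}(v_2)>f_{_|}(u_1)$ by minimality of $u_1$ among maxima and injectivity of $f_{_|}$. (Do clean out the false starts before the ``honest version''; the exploratory detours through the global maximum lead nowhere and obscure the short argument that actually works.)

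This is a genuinely different route from the paper's proof. The paper sets up cyclic coordinates $a_0,b_0,a_1,b_1,\ldots,a_{m-1},b_{m-1}$ with the $a_i$ minima and $b_i$ maxima, reformulates the claim as the existence of an index $i$ satisfying one of two monotonicity patterns, and argues by contradiction: if neither pattern ever occurs, one deduces $f_{_|}(b_i)>f_{_|}(b_{i+1})$ for all $i$, impossible on a cycle. Your extremal argument is shorter and constructive---it hands you the deletable pair explicitly---whereas the paper's indexing argument, while more notationally loaded, makes the cyclic obstruction visible and would adapt more readily if one wanted to count or locate \emph{all} such configurations.
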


\begin{proof}
Let $V(\Gamma_{f})=\{a_0,b_0,a_1,b_1,\ldots a_{m-1},b_{m-1}\}$,
$m\ge 2$. In the following, we convene that, for $k\in \Z$,
$a_{k}$ and $b_k$ are equal to $a_{(k\mod m)}$ and $b_{(k\mod
m)}$, respectively. We assume that
$E(\Gamma_{f})=\{e(a_i,b_i):i\ge 0\}\cup \{e(b_i,a_{i+1}):i\ge
0\}$, and $f_{_|}(a_i)<f_{_|}(b_i)$ for every $i$. From the
definition of labelled Reeb graph associated with a pair
$(S^1,f)$, it follows that $f_{_|}(b_i)>f_{_|}(a_{i+1})$,
$f_{_|}(a_i)\ne f_{_|}(a_{i+1})$, $f_{_|}(b_i)\ne
f_{_|}(b_{i+1})$, for every $i$.

The claim can be restated saying that there is at least one index $i$
such that either $(I)$ $f_{_|}(a_i)<f_{_|}(a_{i+1})$ and
$f_{_|}(b_i)<f_{_|}(b_{i+1})$ or $(II)$
$f_{_|}(a_{i+1})<f_{_|}(a_{i})$ and $f_{_|}(b_i)<f_{_|}(b_{i-1})$
hold. We prove this statement by contradiction, assuming that for
every $i\ge 0$ neither $(I)$ nor $(II)$ hold. Since $(I)$ does not
hold, either $f_{_|}(a_0)>f_{_|}(a_1)$ or
$f_{_|}(b_0)>f_{_|}(b_1)$ or both. Let us consider the case when
$f_{_|}(b_0)>f_{_|}(b_1)$. Since $(II)$ does not hold either, it
follows that $f_{_|}(a_{2})>f_{_|}(a_{1})$. Recalling that $(I)$
does not hold, we obtain $f_{_|}(b_1)>f_{_|}(b_{2})$. Iterating
the same argument, we deduce that $f_{_|}(b_i)>f_{_|}(b_i+1)$ for
every $i\ge 0$, contradicting the fact that $b_m=b_0$. An
analogous proof works when we consider the case
$f_{_|}(a_0)>f_{_|}(a_1)$.
\end{proof}

\begin{prop}\label{connected}
Let $(\Gamma_{f},f_{_|})$ and $(\Gamma_{g},g_{_|})$ be two
labelled Reeb graphs. Then the set of all the deformations $T$
such that $T(\Gamma_{f},f_{_|})=(\Gamma_{g},g_{_|})$ is non-empty.
This set of deformations will be denoted by ${\mathcal
T}((\Gamma_{f},f_{_|}), (\Gamma_{g},g_{_|})).$
\end{prop}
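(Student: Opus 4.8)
\emph{Proof plan.} The idea is to treat this as a connectivity statement in the ``space of labelled Reeb graphs'': I will show that every labelled Reeb graph can be reduced, through a finite chain of type (D) deformations, to a two-vertex labelled Reeb graph, that any two two-vertex labelled Reeb graphs are related by a single type (R) deformation, and then obtain the required deformation between $(\Gamma_{f},f_{_|})$ and $(\Gamma_{g},g_{_|})$ by concatenating such chains, one of them read backwards.

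First I would prove, by induction on the number $2n$ of vertices, that for every labelled Reeb graph $(\Gamma_{f},f_{_|})$ there is a deformation $D$ consisting only of elementary deformations of type (D) such that $D(\Gamma_{f},f_{_|})$ has exactly two vertices. The case $n=1$ is trivial. For $n\ge 2$, Lemma \ref{delete} provides edges $e(v_1,u_1),e(u_1,u_2),e(u_2,v_2)\in E(\Gamma_{f})$ with $f_{_|}(v_1)<f_{_|}(u_2)<f_{_|}(u_1)<f_{_|}(v_2)$, which is precisely the configuration required to apply an elementary deformation of type (D) deleting $u_1$ and $u_2$; by Proposition \ref{defGl} the result is again a labelled Reeb graph, now with $2(n-1)$ vertices, to which the inductive hypothesis applies. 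Next I would record the elementary remark that a two-vertex labelled Reeb graph is a cycle on a minimum and a maximum whose labels are (necessarily) ordered, and that, given any two such graphs, a single type (R) deformation relabelling the two vertices by the target values carries the first onto the second: the target labels are distinct and respect the forced order, so all the conditions in the definition of (R) are met.

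Finally I would assemble the deformation. Let $D_f$ (resp.\ $D_g$) be a death-only deformation reducing $(\Gamma_{f},f_{_|})$ (resp.\ $(\Gamma_{g},g_{_|})$) to a two-vertex labelled Reeb graph, and let $R$ be a relabelling carrying $D_f(\Gamma_{f},f_{_|})$ onto $D_g(\Gamma_{g},g_{_|})$. By Definition \ref{definverse}, $D_g^{-1}$ is a deformation (of type (B) moves) with $D_g^{-1}\bigl(D_g(\Gamma_{g},g_{_|})\bigr)=(\Gamma_{g},g_{_|})$. The deformation $T$ obtained by performing first $D_f$, then $R$, then $D_g^{-1}$ — i.e.\ the sequence whose entries are those of $D_f$ in order, followed by $R$, followed by those of $D_g^{-1}$ — then satisfies $T(\Gamma_{f},f_{_|})=(\Gamma_{g},g_{_|})$, so the set of deformations in question is non-empty.

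I do not expect any genuine obstacle here; the only points requiring attention are bookkeeping ones: checking at each intermediate step that the object produced is still a labelled Reeb graph, so that the following elementary deformation is legitimate — which is exactly what Proposition \ref{defGl} guarantees — and keeping track of the ordering conventions for composition and inversion of deformations (Definitions \ref{def} and \ref{definverse}).
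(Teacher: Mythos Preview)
Your proposal is correct and follows essentially the same route as the paper's proof: reduce each labelled Reeb graph to a two-vertex one via repeated applications of Lemma~\ref{delete} and type (D) moves, bridge the two-vertex graphs by a single type (R) move, and invert one of the reduction chains via Definition~\ref{definverse} to obtain the required deformation. The only cosmetic differences are that you phrase the reduction as an induction on $n$ and do not single out the trivial case $(\Gamma_{f},f_{_|})=(\Gamma_{g},g_{_|})$, which your construction handles anyway.
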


\begin{proof}
If $(\Gamma_{f},f_{_|})=(\Gamma_{g},g_{_|})$, then it is
sufficient to take the elementary deformation $T$ of type (R)
transforming $(\Gamma_{f},f_{_|})$ into itself. Otherwise, if
$(\Gamma_{f},f_{_|})\neq(\Gamma_{g},g_{_|})$ and $\Gamma_{f}$ has
at least four vertices, by Lemma \ref{delete}, we can apply a
finite sequence of elementary deformations of type (D) to
$(\Gamma_{f},f_{_|})$, so that in the resulting labelled Reeb
graph $(\Gamma_{h},h_{_|})$, $\Gamma_{h}$ has only two vertices,
say $u,v$, with $h_{_|}(u)<h_{_|}(v)$. If $(\Gamma_{g}, g_{_|})$
has also at least four vertices, by Lemma \ref{delete}, there
exists a finite sequence of elementary deformations of type (D) to
$(\Gamma_{g}, g_{_|})$, say $S=(S_1, \ldots, S_p)$, so that in the
resulting labelled Reeb graph $(\Gamma_{h'},h_{_|}')$,
$\Gamma_{h'}$ has only two vertices, say $u', v'$, with
$h_{_|}'(u')<h_{_|}'(v')$. So, we can apply to
$(\Gamma_{h},h_{_|})$ an elementary deformation of type (R) so to
obtain $(\Gamma_{h'},h_{_|}')$. Finally, by Definition
\ref{definverse}, we can apply to $(\Gamma_{h'},h_{_|}')$ the
finite sequence of elementary inverse deformations of type (B),
$S^{-1}=(S_p^{-1}, \ldots, S^{-1}_1)$, in order to obtain
$(\Gamma_{g},g_{_|})$. For $(\Gamma_{f}, f_{_|})$ or $(\Gamma_{g},
g_{_|})$ with only two vertices, the same proof applies without
need of deformations of type (D) or (B), respectively.
\end{proof}
A simple example explaining the above proof is given in Figure
\ref{Tnoempty}.

\begin{figure}[htbp]
\psfrag{v1}{$v_1$}\psfrag{v2}{$v_2$}\psfrag{v3}{$v_3$}\psfrag{v4}{$v_4$}
\psfrag{v5}{$v_5$}\psfrag{v6}{$v_6$}\psfrag{v7}{$v_7$}\psfrag{v8}{$v_8$}
\psfrag{u1}{$u_1$}\psfrag{u2}{$u_2$}\psfrag{u3}{$u_3$}\psfrag{u4}{$u_4$}
\psfrag{u5}{$u_5$}\psfrag{u6}{$u_6$}
\psfrag{B}{(B)}\psfrag{D}{(D)}\psfrag{R}{(R)}
\includegraphics[height=3cm]{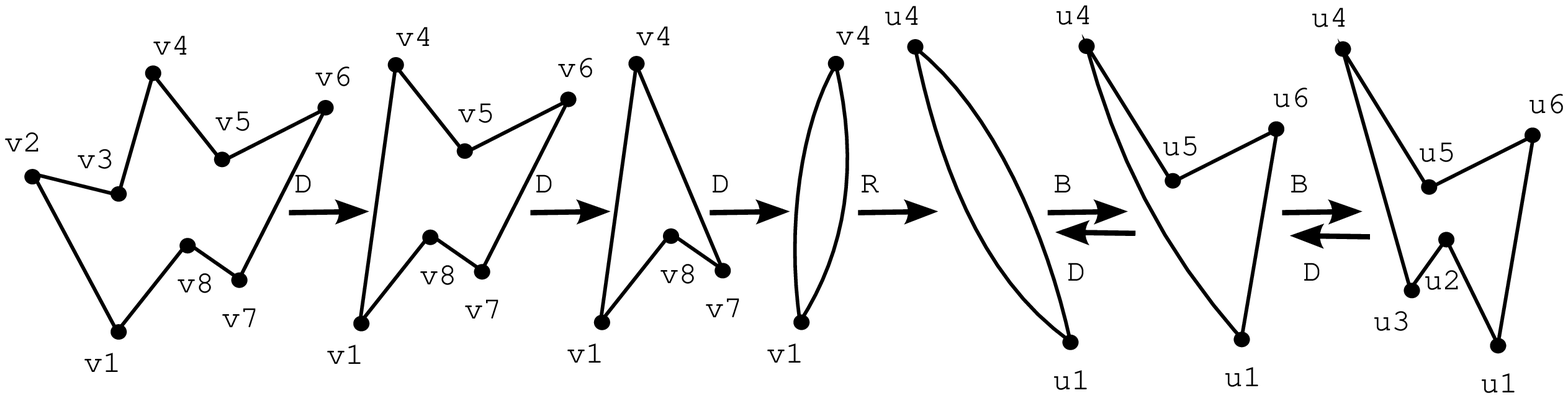}
\caption{\footnotesize{The leftmost labelled Reeb graph is
transformed into the rightmost one applying first three elementary
deformations of type (D), then one elementary deformation of type
(R), and finally two elementary deformations of type
(B).}}\label{Tnoempty}
\end{figure}

We point out that the deformation constructed in the proof of
Proposition \ref{connected} is not necessarily the cheapest one,
as can be seen in Example \ref{pse2}.

We now introduce an editing distance between labelled Reeb graphs,
in terms of the cost necessary to transform one graph into
another.

\begin{theorem}\label{editdist}
For every two labelled Reeb graphs $(\Gamma_{f}, f_{_|})$ and
$(\Gamma_{g},g_{_|})$, we set
$$d((\Gamma_{f}, f_{_|}),(\Gamma_{g},g_{_|}))=\inf_{T\in
\mathcal{T}((\Gamma_{f}, f_{_|}),(\Gamma_{g},g_{_|}))}c(T).$$ Then
$d$ is a distance.
\end{theorem}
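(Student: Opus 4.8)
The plan is to verify the three defining properties of a distance: non-negativity together with the identity of indiscernibles, symmetry, and the triangle inequality.

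\medskip

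\noindent\textbf{Non-negativity and vanishing on equal graphs.} Each elementary deformation has cost given by an absolute value or a maximum of absolute values, hence $c(T)\ge 0$ for every elementary deformation, and therefore $c(T)\ge 0$ for every deformation by additivity; thus $d\ge 0$. If $(\Gamma_f,f_{_|})=(\Gamma_g,g_{_|})$ (recalling that we identify isomorphic labelled Reeb graphs as in Definition \ref{isolabel}), then the elementary deformation of type (R) fixing every label transforms $(\Gamma_f,f_{_|})$ into itself with cost $0$, so $d=0$.

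\medskip

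\noindent\textbf{Identity of indiscernibles.} The delicate direction is showing that $d((\Gamma_f,f_{_|}),(\Gamma_g,g_{_|}))=0$ forces $(\Gamma_f,f_{_|})=(\Gamma_g,g_{_|})$. I would use the lower bound announced in the introduction, namely that the editing distance is bounded below by the natural pseudo-distance between $(S^1,f)$ and $(S^1,g)$ (the content of Section \ref{lowbound}). If the natural pseudo-distance vanishes, one deduces via the results of \cite{DoFr09} that $f$ and $g$ are re-parameterization equivalent, and then Proposition \ref{unique} yields $(\Gamma_f,f_{_|})=(\Gamma_g,g_{_|})$. Alternatively, a self-contained argument observes that the number of vertices can only be changed by births and deaths, each carrying strictly positive cost unless it creates or destroys a ``trivial'' pair; one shows that a sequence of total cost below a suitable threshold $\eps$ cannot alter the isomorphism class, so an infimum of $0$ forces the two graphs to be already isomorphic. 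I expect this step to be the main obstacle, since it is the only one genuinely using the structure of the graphs rather than formal properties of the cost, and it is presumably where the paper defers to the lower bound proved later; I would simply invoke that lower bound here.

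\medskip

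\noindent\textbf{Symmetry and triangle inequality.} Symmetry is immediate from Definition \ref{definverse} and Proposition \ref{inverse}: the map $T\mapsto T^{-1}$ is a cost-preserving bijection between $\mathcal{T}((\Gamma_f,f_{_|}),(\Gamma_g,g_{_|}))$ and $\mathcal{T}((\Gamma_g,g_{_|}),(\Gamma_f,f_{_|}))$, so the two infima coincide. For the triangle inequality, given deformations $T\in\mathcal{T}((\Gamma_f,f_{_|}),(\Gamma_g,g_{_|}))$ and $S\in\mathcal{T}((\Gamma_g,g_{_|}),(\Gamma_h,h_{_|}))$, the concatenation $(S_1,\dots,S_q,T_1,\dots,T_r)$ — read right to left, first applying $T$, then $S$, using that the result of $T$ is precisely $(\Gamma_g,g_{_|})$ and hence a legitimate starting point for $S$ by Proposition \ref{defGl} — is a deformation in $\mathcal{T}((\Gamma_f,f_{_|}),(\Gamma_h,h_{_|}))$ whose cost is $c(S)+c(T)$ by the additivity in Definition \ref{cost 1}. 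Taking the infimum over $S$ and $T$ gives $d((\Gamma_f,f_{_|}),(\Gamma_h,h_{_|}))\le d((\Gamma_f,f_{_|}),(\Gamma_g,g_{_|}))+d((\Gamma_g,g_{_|}),(\Gamma_h,h_{_|}))$. Since $\mathcal{T}$ is non-empty by Proposition \ref{connected}, all these infima are taken over non-empty sets and are finite, completing the proof.
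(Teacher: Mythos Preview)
Your proposal is correct and follows essentially the same route as the paper: symmetry via Proposition \ref{inverse}, the triangle inequality by concatenation, and positive definiteness by invoking the lower bound of Theorem \ref{lowerbound} together with the fact that vanishing natural pseudo-distance forces re-parameterization equivalence, hence isomorphism by Proposition \ref{unique}. One small correction: the paper attributes the result that $\inf_{\tau}\|f-g\circ\tau\|_{C^0}=0$ implies the existence of $\overline\tau$ with $f=g\circ\overline\tau$ to \cite{CeDi09} rather than \cite{DoFr09}, and this is packaged as Corollary \ref{defpos}.
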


The proof of the above theorem will be postponed to the end of the
following section. Indeed, even if the properties of symmetry and
triangular inequality can be easily verified, the property of the
positive definiteness of $d$ is not straightforward because the
set of all possible deformations transforming $(\Gamma_{f},
f_{_|})$ to $(\Gamma_{g},g_{_|})$ is not finite. In order to prove
the positive definiteness of $d$, we will need a further result
concerning the connection between the editing distance between two
labelled Reeb graphs, $(\Gamma_{f}, f_{_|})$,
$(\Gamma_{g},g_{_|})$, and the natural pseudo-distance between the
associated pairs $(S^1,f)$, $(S^1,g)$.

\section{A lower bound for the editing distance}\label{lowbound}
Now we provide a suitable lower bound for our editing distance by
means of the \emph{natural pseudo-distance}.

The natural pseudo-distance is a measure of the dissimilarity
between two pairs $(X,\p)$, $(Y,\s)$, with $X$ and $Y$ compact,
homeomorphic topological spaces and $\p:X\to\R$, $\s:Y\to\R$
continuous functions. Roughly speaking, it is defined as the
infimum of the variation of the values of $\p$ and $\s$, when we
move from $X$ to $Y$ through homeomorphisms (see \cite{DoFr04,
DoFr07, DoFr09} for more details).

Such a lower bound is useful for achieving two different results.
The first result, as mentioned in the preceding section, concerns
the proof of Theorem \ref{editdist}, i.e., that $d$ is a distance
(see Corollary \ref{defpos}). The second one is related to an
immediate question that can arise looking at the definition of
$d$: Is it always possible to effectively compute the cheapest
deformation transforming a labelled Reeb graph into another, since
the number of such deformations is not finite? By using the
natural pseudo-distance, we can estimate from below the value of
$d$, and, in certain simple cases, knowing the value of the
natural pseudo-distance allows us to determine the value of $d$
(see, e.g., Examples \ref{pse1}--\ref{pse2}).

The following Theorem \ref{lowerbound} states that the natural
pseudo-distance computed between the pairs $(S^1,f)$ and $(S^1,g)$
is a lower bound for the editing distance between the associated
labelled Reeb graphs.

\begin{theorem}\label{lowerbound}
Let $(\Gamma_{f},f_{_|})$, $(\Gamma_{g},g_{_|})$ be labelled Reeb
graphs associated with $(S^1,f)$ and $(S^1,g)$, respectively. Then
$d((\Gamma_{f},f_{_|}),(\Gamma_{g},g_{_|}))\ge \underset{\tau \in
{\mathcal H}(S^1)}\inf \|f-g\circ \tau\|_{C^0}.$
\end{theorem}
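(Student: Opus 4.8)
I want to show that any deformation $T$ taking $(\Gamma_f,f_{_|})$ to $(\Gamma_g,g_{_|})$ satisfies $c(T)\ge \inf_{\tau\in\mathcal H(S^1)}\|f-g\circ\tau\|_{C^0}$; taking the infimum over $T$ then gives the claim. The strategy is to follow the deformation step by step. Since $T=(T_1,\dots,T_r)$ is a sequence of elementary deformations, with $T_i$ transforming $(\Gamma_{f_{i-1}},(f_{i-1})_{_|})$ into $(\Gamma_{f_i},(f_i)_{_|})$ (where $f_0=f$, $f_r=g$), and since by Proposition \ref{defGl} each intermediate labelled graph is realized by some $f_i\in\F^0$ unique up to re-parameterization, the triangle inequality for the quantity $\inf_{\tau}\|{\cdot}-{\cdot}\circ\tau\|_{C^0}$ reduces everything to a single elementary deformation: it suffices to prove that if $T_i$ is elementary then $c(T_i)\ge\inf_{\tau\in\mathcal H(S^1)}\|f_{i-1}-f_i\circ\tau\|_{C^0}$. (One must first check that $\inf_\tau\|\varphi-\psi\circ\tau\|_{C^0}$ is well defined on re-parameterization classes and satisfies the triangle inequality — this is essentially the standard fact that the natural pseudo-distance is a pseudo-metric, cf. \cite{DoFr09}, and is straightforward since $\mathcal H(S^1)$ is a group.)

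**The three cases.** For an elementary deformation of type (R) relabelling the common vertex set, I exhibit an \emph{explicit} homeomorphism: applying Lemma \ref{piecelinear} to the identity bijection $\Phi=\mathrm{id}$ on vertices, I obtain a piecewise linear $\tau\in\mathcal H(S^1)$ with $\tau_{|V}=\mathrm{id}$ interpolating linearly (in the parameter $\lambda_p$) on each edge. On each edge with endpoints $v_1,v_2$, the point $p$ with $f_{i-1}(p)=(1-\lambda)f_{i-1}(v_1)+\lambda f_{i-1}(v_2)$ maps to the point $\tau(p)$ with $f_i(\tau(p))=(1-\lambda)f_i(v_1)+\lambda f_i(v_2)$, so $|f_{i-1}(p)-f_i(\tau(p))|\le(1-\lambda)|f_{i-1}(v_1)-f_i(v_1)|+\lambda|f_{i-1}(v_2)-f_i(v_2)|\le\max_{v}|f_{i-1}(v)-f_i(v)|=c(T_i)$; taking the max over $p$ gives $\|f_{i-1}-f_i\circ\tau\|_{C^0}\le c(T_i)$. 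For a deformation of type (B) inserting vertices $u_1,u_2$ on the edge $e(v_1,v_2)$ (so $f_{i-1}=f$, $f_i=g$, and the new graph has the two extra vertices), I build $\tau$ that collapses essentially all of $S^1$ isometrically (in the above interpolated sense) onto the part of $\Gamma_g$ obtained from $\Gamma_f$ by removing the small "finger" $e(v_1,u_1)\cup e(u_1,u_2)\cup e(u_2,v_2)$, and maps a neighborhood of a single point of the edge $e(v_1,v_2)\subset\Gamma_f$ onto that whole finger; the only discrepancy between $f_{i-1}$ and $f_i\circ\tau$ occurs on that finger, where both functions take values in the interval $[g_{_|}(v_1)^{(*)}, g_{_|}(u_1)]$ — more precisely, on the arc mapping into the finger, $f_{i-1}$ stays essentially constant (equal to a value between $g_{_|}(u_2)$ and $g_{_|}(u_1)$, which we may choose) while $g\circ\tau$ ranges over $[g_{_|}(u_2),g_{_|}(u_1)]$, so the sup of $|f_{i-1}-g\circ\tau|$ can be made $\le\frac{1}{2}|g_{_|}(u_1)-g_{_|}(u_2)|+\eta$ for any $\eta>0$ by placing the constant value at the midpoint. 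Hence $\inf_\tau\|f_{i-1}-f_i\circ\tau\|_{C^0}\le\frac12|g_{_|}(u_1)-g_{_|}(u_2)|=c(T_i)$. The case (D) is the inverse of (B), so it follows by the symmetry of $\inf_\tau\|\varphi-\psi\circ\tau\|_{C^0}$ (and $c(T^{-1})=c(T)$, Proposition \ref{inverse}).

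**Assembling.** Chaining these three estimates through the triangle inequality yields, for an arbitrary deformation $T=(T_1,\dots,T_r)$,
\[
\inf_{\tau\in\mathcal H(S^1)}\|f-g\circ\tau\|_{C^0}\ \le\ \sum_{i=1}^r\inf_{\tau\in\mathcal H(S^1)}\|f_{i-1}-f_i\circ\tau\|_{C^0}\ \le\ \sum_{i=1}^r c(T_i)\ =\ c(T),
\]
and taking the infimum over all $T\in\mathcal T((\Gamma_f,f_{_|}),(\Gamma_g,g_{_|}))$ — non-empty by Proposition \ref{connected} — gives the desired inequality.

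**Main obstacle.** The routine part is (R). The delicate part is the construction of the homeomorphism realizing the bound in case (B): one has to be careful that "collapsing" $S^1$ onto a subarc and then stretching a point into the finger genuinely produces an element of $\mathcal H(S^1)$ (a homeomorphism, not merely a continuous surjection) and that the level-set combinatorics of $g\circ\tau$ match up so the only defect is on the finger. Concretely this means interpolating $\tau$ cell-by-cell as in Lemma \ref{piecelinear} on the complement of a small arc around one interior point $q$ of $e(v_1,v_2)$, using that the restriction of $g$ to $\Gamma_g$ minus the finger is, up to re-parameterization, the same labelled data as $f$ on $\Gamma_f$ minus nothing — i.e. there is an edge-preserving, order-preserving bijection there — so Lemma \ref{piecelinear} supplies a homeomorphism of that complementary arc onto $\Gamma_g$-minus-finger with zero defect, and one then fills in $\tau$ on the remaining small arc around $q$ to sweep out the finger, incurring defect at most $\frac12|g_{_|}(u_1)-g_{_|}(u_2)|+\eta$. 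Letting $\eta\to0$ closes the argument.
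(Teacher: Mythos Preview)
Your proposal is correct and follows essentially the same approach as the paper: for each elementary deformation, construct (near-)homeomorphisms realizing the cost as an upper bound on the natural pseudo-distance, then chain through the sequence $T=(T_1,\dots,T_r)$. The only cosmetic differences are that you treat (B) directly and deduce (D) by inversion (the paper does the reverse), and you invoke the triangle inequality for the natural pseudo-distance as a known property, whereas the paper re-derives it in situ by explicitly composing the sequences $\tau_n^{(i)}$.
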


\begin{proof}
Let us prove that, for every $T \in {\mathcal
T}((\Gamma_{f},f_{_|}),(\Gamma_{g},g_{_|}))$, $c(T)\ge
\underset{\tau \in {\mathcal H}(S^1)}\inf\|f-g\circ \tau\|_{C^0}$.

First of all, assume that $T$ is an elementary deformation
transforming $(\Gamma_{f},f_{_|})$ into $(\Gamma_{g},g_{_|}).$ For
conciseness, slightly abusing notations, we will identify arcs of
$S^1$ having as endpoints two critical points $p',p''\in
V(\Gamma_f)$, and not containing other critical points of $f$,
with the edges $e(p',p'')\in E(\Gamma_f)$.
\begin{enumerate}
\item Let $T$ be of type (R) relabelling vertices of
$V(\Gamma_{f})$. Since, by Definition \ref{elem_def} (R),
$\Gamma_{f}=\Gamma_{g}$, we can always apply Lemma
\ref{piecelinear}, considering $\Phi$ as the identity map, to
obtain a piecewise linear $\tau\in {\mathcal H}(S^1)$ such that
$\tau(p)=p$ for every $p\in K(f)$. As far as non-critical points
are concerned, following the proof of Lemma \ref{piecelinear}, for
every $p\in S^1\setminus K(f)$, $\tau(p)$ is defined as that point
on $S^1$ such that, if $p\in e(p',p'')\in E(\Gamma_{f})$, with
$f(p)=(1-\lambda_p)f(p')+\lambda_p f(p'')$, $\lambda_p\in [0,1]$,
then $\tau(p)\in e(p',p'')$ with
$g(\tau(p))=(1-\lambda_p)g(p')+\lambda_p g(p'')$. Therefore, by
substituting to $f(p)$ and $g(\tau(p))$ the above expressions, we
see that $\underset{p \in S^1}\max|f(p)-g(\tau(p))|=\underset{p
\in V(\Gamma_{f})}\max|f_{_|}(p)-g_{_|}(p)|=c(T).$

\item Let $T$ be of type (D) deleting $q_1, q_2 \in
V(\Gamma_{f})$, the edges $e(p_1, q_1)$, $e(q_1, q_2)$, $e(q_2,
p_2)$, and inserting the edge $e(p_1, p_2)$. Thus, for every $p\in
K(f)\backslash \{q_1,q_2\}$, $f(p)=g(p)$. It is not restrictive to
assume that $f(p_1)<f(q_2)<f(q_1)<f(p_2)$. Then we can define a
sequence $(\tau_n)$ of piecewise linear homeomorphisms on $S^1$
approximating this elementary deformation. Let $\tau_n(p)=p$ for
every every $p \in V(\Gamma_{f})\backslash
\{q_1,q_2\}=V(\Gamma_{g})$ and $n\in \N$. Moreover, let
$\overline{q}$ be the point of $e(p_1,p_2)\in E(\Gamma_{g})$ such
that $\g(\overline{q})=\frac{\f(q_1)+\f(q_2)}{2}$ (such a point
$\overline{q}$ exists because
$g(p_1)=f(p_1)<f(q_2)<f(q_1)<f(p_2)=g(p_2)$ and it is unique
because we are assuming that no critical points of $g$ occur in
the considered arc). Let us fix a positive real number $c< \min
\{\g(p_2)-g(\overline{q}),\g(\overline{q})-\g(p_1)\}$. For every
$n\in\N$, let us define $\tau_n(q_1)$ (resp. $\tau_n(q_2)$) as the
only point on $S^1$ belonging to the arc with endpoints
$p_1,\overline{q}$ (resp. $\overline{q},p_2$) contained in
$e(p_1,p_2)$, such that $\g(\tau_n(q_1))=\g(\overline{q}) -
\frac{c}{n}$ (resp. $\g(\tau_n(q_2))=\g(\overline{q}) +
\frac{c}{n}$) as shown in Figure \ref{Delproof}. Now, let us
linearly extend $\tau_n$ to all $S^1$ in the following way. For
every $p\in S^1\setminus K(f)$, if $p$ belongs to the arc with
endpoints $p',p''\in K(f)$ not containing any other critical
point, and is such that $f(p)=(1-\lambda_p)f(p')+\lambda_p
f(p'')$, $\lambda_p\in [0,1]$, then $\tau_n(p)$ belongs to the arc
with endpoints $\tau_n(p'),\tau_n(p'')$ not containing any other
critical point, and is such that
$g(\tau_n(p))=(1-\lambda_p)g(\tau_n(p'))+\lambda_p
g(\tau_n(p''))$. Hence, $\tau_n$ is piecewise linear for every
$n\in\N$, and $\underset{n\to \infty}\lim\underset{p \in
S^1}\max|\f(p)-\g(\tau_n(p))|=\underset{n\to
\infty}\lim\underset{p \in
V(\Gamma_{f})}\max|\f(p)-\g(\tau_n(p))|=\underset{n\to
\infty}\lim\max\{\f(q_1)-\g(\tau_n(q_1)),\f(q_2)-\g(\tau_n(q_2))\}=|\f(q_1)-\g(\overline{q})|=\frac{\f_{_|}(q_1)-\f_{_|}(q_2)}{2}=c(T).$
\begin{figure}[htbp]
\psfrag{p1}{$p_1$}\psfrag{p2}{$p_2$}\psfrag{q1}{$q_1$}\psfrag{q2}{$q_2$}\psfrag{tq1}{$\tau_n(q_1)$}\psfrag{tq2}{$\tau_n(q_2)$}
\psfrag{q}{$\overline{q}$}\psfrag{tp1}{$\tau_n(p_1)=p_1$}\psfrag{tp2}{$\tau_n(p_2)=p_2$}
\psfrag{f1}{$\frac{\f(q_1)+\f(q_2)}{2}$}\psfrag{X}{$(S^1,f)$}\psfrag{Y}{$(S^1,g)$}\psfrag{u4}{$u_4$}
\psfrag{u5}{$u_5$}\psfrag{u6}{$u_6$}
\psfrag{B}{(B)}\psfrag{D}{(D)}\psfrag{R}{(R)}
\includegraphics[height=6cm]{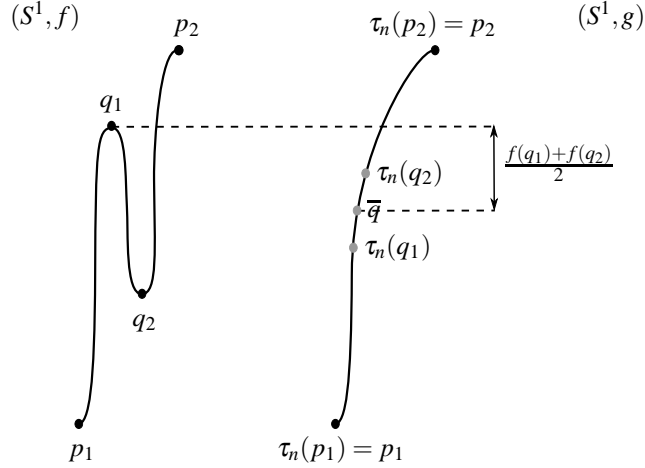}
\caption{\footnotesize{The construction of the homomorphism
$\tau_n$ as described in step (2) of the proof of Theorem
\ref{lowerbound}. The arc $e(p_1,q_1)$ ($e(q_1,q_2)$, and
$e(q_2,p_2)$, respectively) is piecewise linearly taken to the arc
having $\tau_n(p_1),\tau_n(q_1)$ ($\tau_n(q_1),\tau_n(q_2)$ and
$\tau_n(q_2),\tau_n(p_2)$, respectively) as
endpoints.}}\label{Delproof}
\end{figure}
\item Let $T$ be of type (B) deleting $e(p_1, p_2)\in
E(\Gamma_{f})$, and inserting two vertices $q_1, q_2$ and the
edges $e(p_1, q_1)$, $e(q_1, q_2)$, $e(q_2, p_2)$. Then we can
apply the same proof as (2), by considering the inverse
deformation $T^{-1}$ that, by Definition \ref{definverse}, is of
type (D) and,  by Proposition \ref{inverse}, has the same cost of
$T$.
\end{enumerate}
Therefore, observing that in (1), the piecewise linear $\tau$ can
be clearly replaced by a sequence $(\tau_n)$, with $\tau_n=\tau$
for every $n\in\N$, we can assert that, for every elementary
deformation $T$, there exists a sequence of piecewise linear
homeomorphisms on $S^1$, $(\tau_n)$, such that
$c(T)=\underset{n\to \infty}\lim {\|f-g\circ \tau_n\|}_{C^0}\ge
\underset{\tau \in {\mathcal H}(S^1)}\inf\|f-g\circ \tau\|_{C^0}$.

\noindent Now, let $T=(T_1, \ldots, T_r)\in {\mathcal
T}((\Gamma_{f},f_{_|}),(\Gamma_{g},g_{_|}))$ and prove that, also
in this case, $c(T)\ge \underset{\tau \in {\mathcal
H}(S^1)}\inf\|f-g\circ \tau\|_{C^0}$. Let us set $T_i \cdots
T_1(\Gamma_{f},f_{_|})=(\Gamma_{f^{(i)}},f^{(i)}_{_|})$,
$f=f^{(0)}$, $g=f^{(r)}$. For $i=1,\ldots,r$, let
${(\tau_n^{(i)})}_n$ be a sequence of piecewise linear
homeomorphisms on $S^1$ for which it holds that
$c(T_i)=\underset{n\to \infty}\lim {\|f^{(i-1)}-f^{(i)}\circ
\tau_n^{(i)}\|}_{C^0}$, and let ${(\tau_n^{(0)})}_n$ be the
constant sequence such that $\tau_n^{(0)}= Id$ for every $n\in
\N$. Then {\setlength\arraycolsep{2pt}\begin{eqnarray*} c(T)&=&
\underset{i=1}{\overset{r}\sum}c(T_i)= \underset{n\to \infty}\lim
\|\f^{(0)}-\f^{(1)}\circ \tau_n^{(1)}\|_{C^0}+
\underset{i=1}{\overset{r-1}\sum}\underset{n\to \infty}\lim
\|\f^{(i)}-\f^{(i+1)}\circ
\tau_n^{(i+1)}\|_{C^0}\\&=&\underset{n\to \infty}\lim
\|\f^{(0)}-\f^{(1)}\circ \tau_n^{(1)}\|_{C^0}\\&&+
\underset{i=1}{\overset{r-1}\sum}\underset{n\to \infty}\lim
\|\f^{(i)}\circ \tau_{n}^{(i)}\circ \ldots \circ
\tau_n^{(0)}-\f^{(i+1)}\circ \tau_n^{(i+1)}\circ
\tau_{n}^{(i)}\circ \cdots \circ \tau_n^{(0)}\|_{C^0}\\&\ge &
\underset{r\to \infty}\lim \|\f^{(0)}-\f^{(r)}\circ
\tau_n^{(r)}\circ \tau_{n}^{(r-1)}\circ \cdots \circ
\tau_n^{(0)}\|_{C^0}\ge \underset{\tau \in {\mathcal
H}(S^1)}\inf\|\f-\g\circ \tau\|_{C^0},
\end{eqnarray*}}
where the third equality is obtained by observing that
$$\f^{(i)}\circ \tau_{n}^{(i)}\circ \cdots \circ
\tau_n^{(0)}-\f^{(i+1)}\circ \tau_n^{(i+1)}\circ
\tau_{n}^{(i)}\circ \cdots \circ
\tau_n^{(0)}=(\f^{(i)}-\f^{(i+1)}\circ \tau_n^{(i+1)})\circ
\tau_{n}^{(i)}\circ \cdots \circ \tau_n^{(0)}$$ for every $i \in
\{1,\ldots,r-1\}$, and that ${\|\cdot\|}_{C^0}$ is invariant under
re-parameterization; the first inequality is consequent to the
triangular inequality.
\end{proof}
\begin{cor}\label{defpos}
If $d((\Gamma_{f},f_{_|}),(\Gamma_{g},g_{_|}))=0$ then
$(\Gamma_{f},f_{_|})=(\Gamma_{g},g_{_|})$.
\end{cor}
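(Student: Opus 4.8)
The plan is to promote the vanishing of $d$ into an actual re-parameterization equivalence between $f$ and $g$, and then to invoke the Uniqueness Theorem (Proposition~\ref{unique}). Assume $d((\Gamma_f,f_{_|}),(\Gamma_g,g_{_|}))=0$. By Theorem~\ref{lowerbound} this gives $\inf_{\tau\in\mathcal H(S^1)}\|f-g\circ\tau\|_{C^0}=0$, hence a sequence $(\tau_n)$ in $\mathcal H(S^1)$ with $\varepsilon_n:=\|f-g\circ\tau_n\|_{C^0}\to 0$. After passing to a subsequence and, if needed, replacing $f$ by $f\circ\iota$ and each $\tau_n$ by $\tau_n\circ\iota$ for a fixed reflection $\iota$ of $S^1$ (an operation leaving the numbers $\varepsilon_n$ and the conclusion to be proved unchanged), I may assume every $\tau_n$ preserves orientation. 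Lift each $\tau_n$ to a monotone increasing homeomorphism $\widetilde\tau_n\colon\R\to\R$ with $\widetilde\tau_n(x+1)=\widetilde\tau_n(x)+1$, normalized by $\widetilde\tau_n(0)\in[0,1)$. By Helly's selection theorem a further subsequence of $(\widetilde\tau_n)$ converges pointwise on $\R$ to a monotone increasing function $\widetilde\tau$, which again satisfies $\widetilde\tau(x+1)=\widetilde\tau(x)+1$.

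Next I would show that $\widetilde\tau$ is continuous. If it had a jump at $x_0$, then $\bigl(\lim_{x\to x_0^-}\widetilde\tau(x),\ \lim_{x\to x_0^+}\widetilde\tau(x)\bigr)$ would be a non-degenerate open interval; inside it choose a short closed interval $J$ whose projection to $S^1$ is a proper arc containing no critical point of $g$, so that $\eta:=\mathrm{osc}_J\,g>0$. For $y$ an endpoint of $J$ one checks, using monotonicity and the pointwise convergence, that $\widetilde\tau_n^{-1}(y)\to x_0$; hence $I_n:=\widetilde\tau_n^{-1}(J)$ is an interval with $|I_n|\to 0$ on which $g\circ\tau_n$ has oscillation exactly $\eta$. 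On the other hand $\mathrm{osc}_{I_n}(g\circ\tau_n)\le\mathrm{osc}_{I_n}\,f+2\varepsilon_n$, and the right-hand side tends to $0$ because $f$ is uniformly continuous and $|I_n|\to 0$, contradicting $\eta>0$. Therefore $\widetilde\tau$ is continuous; the pointwise convergence of the monotone functions $\widetilde\tau_n$ to the continuous $\widetilde\tau$ is then automatically uniform, so $\widetilde\tau$ descends to a continuous degree-one map $\tau\colon S^1\to S^1$ with $\tau_n\to\tau$ uniformly. Since $g$ is uniformly continuous, $g\circ\tau_n\to g\circ\tau$ uniformly, and passing to the limit in $\varepsilon_n\to 0$ yields $f=g\circ\tau$.

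Finally, $\tau$ is a homeomorphism of $S^1$: being continuous, monotone and of degree one it is surjective, and if it were not injective it would be constant on a non-degenerate arc $A$, whence $f=g\circ\tau$ would be constant on $A$ — impossible, since a simple Morse function on $S^1$ is strictly monotone on each arc bounded by consecutive critical points, hence nowhere locally constant. Thus $\tau\in\mathcal H(S^1)$ and $f=g\circ\tau$, i.e.\ $f$ and $g$ are re-parameterization equivalent; Proposition~\ref{unique} then gives $(\Gamma_f,f_{_|})=(\Gamma_g,g_{_|})$.

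The main difficulty is the non-compactness of $\mathcal H(S^1)$: there is no reason for a minimizing sequence $(\tau_n)$ to converge, and the heart of the argument is that the limit selected by Helly's theorem cannot degenerate. It cannot develop a discontinuity — this is exactly where the oscillation estimate uses that $g$ is Morse, hence locally non-constant — and it cannot collapse an arc — this uses that $f$ is Morse, hence locally non-constant; both obstructions vanish precisely because $f$ and $g$ are simple Morse functions.
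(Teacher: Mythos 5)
Your proof is correct, and its skeleton coincides with the paper's: the vanishing of $d$ is pushed through Theorem~\ref{lowerbound} to give $\inf_{\tau\in\mathcal H(S^1)}\|f-g\circ\tau\|_{C^0}=0$, this vanishing is then upgraded to an exact equality $f=g\circ\overline{\tau}$ for some homeomorphism $\overline{\tau}$, and Proposition~\ref{unique} concludes. The difference lies entirely in the middle step: the paper simply cites the result of \cite{CeDi09} that a vanishing natural pseudo-distance between closed curves of class at least $C^2$ is realized by a homeomorphism, whereas you prove the special case you need from scratch --- reduce to orientation-preserving $\tau_n$ via a fixed reflection, lift to $\R$, extract a pointwise limit by Helly's selection theorem, exclude jumps of the limit by an oscillation estimate (using that $g$, being simple Morse, is nowhere locally constant), upgrade to uniform convergence, and exclude collapsed arcs (using the same property of $f$) to obtain a genuine homeomorphism. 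What the paper's route buys is brevity; what yours buys is self-containedness, and it isolates exactly which properties of $f$ and $g$ are used (finitely many critical points, hence local non-constancy) instead of invoking the quoted theorem wholesale. I checked the delicate points and found no gap: the reflection trick changes neither the quantities $\varepsilon_n$ nor the conclusion (again by Proposition~\ref{unique}); the endpoints of $J$ lie strictly between the one-sided limits of $\widetilde\tau$ at the jump, so $\widetilde\tau_n^{-1}(y)\to x_0$ does follow from monotonicity and pointwise convergence irrespective of where $\widetilde\tau(x_0)$ sits in the gap; and pointwise convergence of monotone maps to a continuous limit is indeed uniform on compact intervals, which justifies the passage to the limit $f=g\circ\tau$.
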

\begin{proof}
From Theorem \ref{lowerbound},
$d((\Gamma_{f},f_{_|}),(\Gamma_{g},g_{_|}))=0$ implies that
$\underset{\tau \in {\mathcal H}(S^1)}\inf \|f-g\circ
\tau\|_{C^0}=0$. In \cite{CeDi09} it has been proved that when
$\underset{\tau \in {\mathcal H}(X,Y)}\inf \|f-g\circ
\tau\|_{C^0}=0$, with $X$, $Y$ two closed curves of class at least
$C^2$, a homeomorphism $\overline{\tau}\in {\mathcal H}(X,Y)$
exists such that $\f=\g\circ \overline{\tau}.$ Therefore, the
claim follows from Proposition \ref{unique}.
\end{proof}
\begin{proof}[Proof of Theorem \ref{editdist}]
The positive definiteness of $d$ has been proved in Corollary
\ref{defpos}; the symmetry is a consequence of Proposition
\ref{inverse}; the triangular inequality can be easily verified in
the standard way.
\end{proof}

Now we describe two simple examples showing how it is possible to
compute the editing distance between two labelled Reeb graphs,
$(\Gamma_{f},f_{_|}),(\Gamma_{g},g_{_|})$, by exploiting the
knowledge of the natural pseudo-distance value between the
associated pairs $(S^1,f), (S^1,g)$. In particular, Example
\ref{pse1} provides a situation in which the infimum cost over all
the deformations transforming $(\Gamma_{f},f_{_|})$ into
$(\Gamma_{g},g_{_|})$ is actually a minimum. In Example \ref{pse2}
this infimum is obtained by applying a passage to the limit.

\begin{exa}\label{pse1}
Let us consider the two pairs $(S^1,f), (S^1,g)$ depicted in
Figure \ref{pseudo}, with $f,g\in\F^0$. We now show that
$d((\Gamma_{f},f_{_|}),(\Gamma_{g},g_{_|}))=
\frac{1}{2}(f(q_1)-f(p_1))$. Indeed, in this case, the natural
pseudo-distance between $(S^1,f)$ and $(S^1,g)$ is equal to
$\frac{1}{2}(f(q_1)-f(p_1))$ (cf. \cite{DoFr09}). Therefore, by
Theorem \ref{lowerbound}, it follows that
$d((\Gamma_{f},f_{_|}),(\Gamma_{g},g_{_|}))\geq
\frac{1}{2}(f(q_1)-f(p_1))$. On the other hand, the deformation
$T$ of type (D) that deletes the vertices $p_1,q_1\in
V(\Gamma_f)$, the edges $e(p,q_1),e(q_1,p_1),e(p_1,q)$ and inserts
the edge $e(p,q)$ transforms $(\Gamma_{f},f_{_|})$ into
$(\Gamma_{g},g_{_|})$ with cost $c(T)=\frac{1}{2}(f(q_1)-f(p_1))$.
Hence
$d((\Gamma_{f},f_{_|}),(\Gamma_{g},g_{_|}))=\frac{1}{2}(f(q_1)-f(p_1))$.

\begin{figure}[htbp]
\begin{center}
\psfrag{A}{$q_1$} \psfrag{B}{$p_1$} \psfrag{C}{$q$}
\psfrag{D}{$q'$} \psfrag{E}{$p$} \psfrag{F}{$p'$} \psfrag{z}{$z$}
\psfrag{e}{$2\eps$} \psfrag{X}{$(S^1,f)$} \psfrag{Y}{$(S^1,g)$}
\psfrag{M}{$\max z$} \psfrag{m}{$\min z$}
\includegraphics[height=5cm]{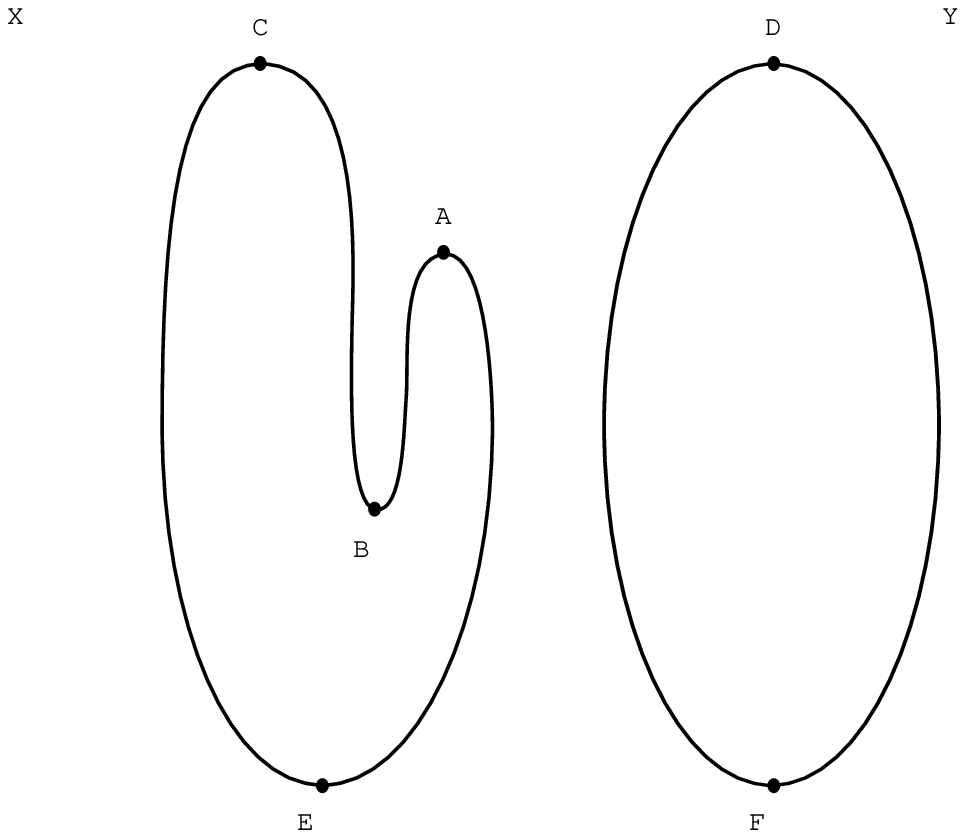}
\caption{\footnotesize{The pairs considered in Example \ref{pse1}.
In this case
$d((\Gamma_{f},f_{_|}),(\Gamma_{g},g_{_|}))=\underset{\tau \in
{\mathcal H}(S^1)}\inf\|f-g\circ
\tau\|_{C^0}=\frac{1}{2}(f(q_1)-f(p_1))$}}\label{pseudo}
\end{center}
\end{figure}
\end{exa}

\begin{exa}\label{pse2}
Let us consider now the two pairs $(S^1,f), (S^1,g)$ illustrated
in Figure \ref{pseudo2}. Let $f(q_1)-f(p_1)=f(q_2)-f(p_2)=a$.
Then, clearly, $\underset{\tau \in {\mathcal
H}(S^1)}\inf\|f-g\circ \tau\|_{C^0}=\frac{a}{2}$. Let us show that
the editing distance between $(\Gamma_{f},f_{_|})$ and
$(\Gamma_{g},g_{_|})$ is $\frac{a}{2}$, too.
\begin{figure}[htbp]
\begin{center}
\psfrag{A1}{$q_1$} \psfrag{B1}{$p_1$}\psfrag{A2}{$q_2$}
\psfrag{B2}{$p_2$} \psfrag{C}{$q$} \psfrag{D}{$q'$}
\psfrag{E}{$p$} \psfrag{F}{$p'$} \psfrag{z}{$z$}
\psfrag{e}{$2\eps$} \psfrag{X}{$(S^1,f)$} \psfrag{Y}{$(S^1,g)$}
\psfrag{M}{$\max z$} \psfrag{m}{$\min z$}
\includegraphics[height=5cm]{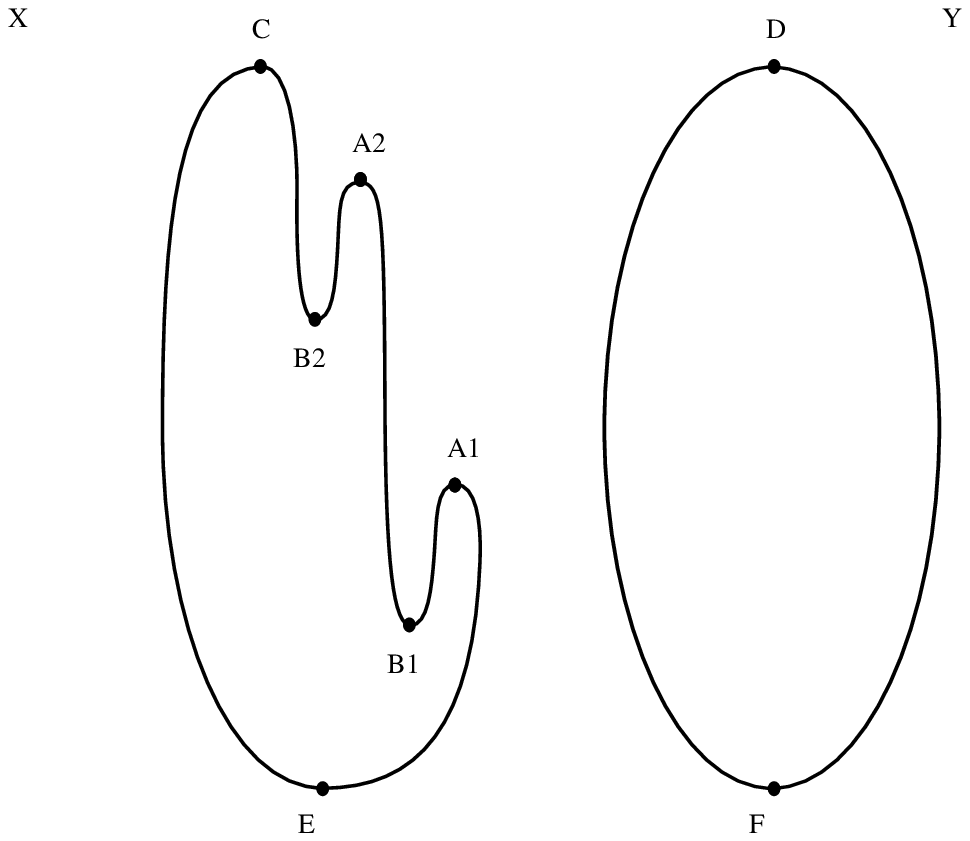}
\caption{\footnotesize{The pairs considered in Example \ref{pse2}.
Even in this case
$d((\Gamma_{f},f_{_|}),(\Gamma_{g},g_{_|}))=\underset{\tau \in
{\mathcal H}(S^1)}\inf\|f-g\circ
\tau\|_{C^0}=\frac{1}{2}(f(q_1)-f(p_1))$}}\label{pseudo2}
\end{center}
\end{figure}
For every $0<\epsilon<\frac{a}{2}$, we can apply to
$(\Gamma_{f},f_{_|})$ a deformation of type (R), that relabels
$p_1,p_2,q_1,q_2$ in such a way that $f(p_i)$ is increased of
$\frac{a}{2}-\epsilon$, and $f(q_i)$ is decreased of
$\frac{a}{2}-\epsilon$ for $i=1,2$, composed with two deformations
of type (D) that delete $p_i$ with $q_i$, $i=1,2$. Thus, since the
total cost is equal to $\frac{a}{2}-\epsilon + 2\epsilon$, by the
arbitrariness of $\epsilon$, it holds that
$d((\Gamma_{f},f_{_|}),(\Gamma_{g},g_{_|}))\leq\frac{a}{2}.$
Applying Theorem \ref{lowerbound}, we deduce that
$d((\Gamma_{f},f_{_|}),(\Gamma_{g},g_{_|}))=\frac{a}{2}.$
\end{exa}

\section{Local stability}\label{localstab}
This section is intended to show that labelled Reeb graphs of
closed curves are stable under small function perturbations with
respect to our editing distance (see Theorem \ref{local}). The
main tool we will use is provided by Theorem \ref{mainres}, that
ensures the stability of simple Morse function critical values.
This latter result can be deduced by the homological properties of
the lower level sets of a simple Morse function $f$ on a manifold
$\M$, and its validity does not depend on the dimension of $\M$.
Therefore, it will be given for any smooth compact manifold
without boundary.

For every $f\in \F(\M, \R)$, and for every $a\in\R$, let us denote
by $f^a$ the lower level set $f^{-1}(-\infty,a]=\{p\in \M:
f(p)\leq a\}$. Let us recall the existing link between the
topology of a pair of lower level sets $(f^b,f^a)$, with
$a,b\in\R$, $a<b$,  regular values of $f$, and the critical points
of $f$ lying between $a$ and $b$. The following statements hold
(cf. \cite{Mi63}):
\begin{itemize}
\item[\emph{(St.~1)}] If the interval $f^{-1}([a,b])$ contains no
critical points, then $f^a$ is a deformation retract of $f^b$, so
that the inclusion map $f^a\to f^b$ is a homotopy equivalence.

\item[\emph{(St.~2)}] If $f^{-1}([a,b])$ contains exactly one
critical point of index $\overline{k}$, then, denoting by $G$ the
homology coefficient group, it holds that
$$H_k(f^b,f^a)=\left\{
\begin{array}{ll}
G,&\mbox{if}\,\ k=\overline{k}\\0,&\mbox{otherwise}.
\end{array}\right.
$$
\end{itemize}
In the remainder of this section we require $f$ to be a simple
Morse function. Accordingly, it makes sense to use the terminology
\emph{critical value of index $k$} to indicate a critical value
that is the image of a critical point of index $k$.
\begin{lem}\label{relhom}
Let $f\in \F^0\subset\F(\M,\R)$, and let $a,b\in\R$, $a<b$, be
regular values of $f$. If there exists $\overline{k}\in\Z$ such
that $H_{\overline{k}}(f^b,f^a)\neq 0$, then $[a,b]$ contains at
least one critical value of index $\overline{k}$.
\end{lem}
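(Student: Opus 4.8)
The plan is to prove the contrapositive together with an inductive argument on the number of critical values in the interval $[a,b]$. Suppose $[a,b]$ contains no critical value of index $\overline{k}$; I want to conclude $H_{\overline{k}}(f^b,f^a)=0$. Since $f$ is a simple Morse function, $f$ has only finitely many critical points, hence finitely many critical values $c_1<c_2<\cdots<c_m$ inside $(a,b)$. Between consecutive critical values I can choose regular values $a=t_0<t_1<\cdots<t_m=b$ with exactly one critical value (namely $c_i$) in each interval $(t_{i-1},t_i)$, so that $f^{-1}([t_{i-1},t_i])$ contains exactly one critical point, of some index $k_i\neq\overline{k}$ by hypothesis.

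The key step is then to run the long exact sequence of the triple $(f^{t_i},f^{t_{i-1}},f^a)$ repeatedly. From \emph{(St.~2)} applied on $[t_{i-1},t_i]$ we have $H_{\overline{k}}(f^{t_i},f^{t_{i-1}})=0$ for every $i$, because $k_i\neq\overline{k}$. Using the portion of the long exact sequence
\[
H_{\overline{k}}(f^{t_{i-1}},f^a)\to H_{\overline{k}}(f^{t_i},f^a)\to H_{\overline{k}}(f^{t_i},f^{t_{i-1}}),
\]
the vanishing of the last term shows that the map $H_{\overline{k}}(f^{t_{i-1}},f^a)\to H_{\overline{k}}(f^{t_i},f^a)$ is surjective. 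Since $H_{\overline{k}}(f^{t_0},f^a)=H_{\overline{k}}(f^a,f^a)=0$, an induction on $i$ gives $H_{\overline{k}}(f^{t_i},f^a)=0$ for all $i$, and in particular $H_{\overline{k}}(f^b,f^a)=H_{\overline{k}}(f^{t_m},f^a)=0$. This is exactly the contrapositive of the claim, so we are done.

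Alternatively, and perhaps more cleanly, one can avoid the triples and argue directly: by \emph{(St.~1)} and \emph{(St.~2)}, passing through a regular interval is a homotopy equivalence and passing a critical point of index $k_i$ attaches a $k_i$-cell up to homotopy, so $f^b$ is built from $f^a$ by attaching cells of dimensions $k_1,\dots,k_m$, none equal to $\overline{k}$. A CW pair in which the relative cells avoid dimension $\overline{k}$ has $H_{\overline{k}}$ of the pair equal to zero (the relative cellular chain complex has trivial group in degree $\overline{k}$). Either way the substance is the same.

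The only real obstacle is bookkeeping: one must make sure the chosen $t_i$ are genuine regular values and that each slab $f^{-1}([t_{i-1},t_i])$ really contains a single critical point — this uses that a simple Morse function has finitely many, isolated critical points with distinct critical values, which is recalled earlier in the paper. No step involves anything deeper than the standard Morse-theoretic facts \emph{(St.~1)}–\emph{(St.~2)} already quoted and the long exact sequence of a triple, so the proof is short; the inductive surjectivity argument is the heart of it.
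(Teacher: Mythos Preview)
Your proof is correct and follows essentially the same route as the paper: both arguments subdivide $[a,b]$ by regular values separating the critical values and then apply \emph{(St.~2)} on each slab. The only cosmetic difference is in the final bookkeeping: the paper invokes the sub-additivity inequality $\operatorname{rank} H_{\overline{k}}(f^b,f^a)\le\sum_i\operatorname{rank} H_{\overline{k}}(f^{s_i},f^{s_{i-1}})$ to force some slab to have nonzero $H_{\overline{k}}$, whereas you run the long exact sequence of the triple $(f^{t_i},f^{t_{i-1}},f^a)$ and induct on surjectivity---these are equivalent ways of saying the same thing. (One trivial edge case you leave implicit: if $m=0$ there are no critical values at all, and then \emph{(St.~1)} gives $H_*(f^b,f^a)=0$ directly; the paper handles this separately at the start of its proof.)
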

\begin{proof}

From
\emph{(St.~1)}, the absence of critical values in $[a,b]$ implies
that the homomorphism induced by inclusion $\iota_k:H_k(f^a)\to
H_k(f^b)$ is an isomorphism for each $k\in\Z$. Consequently, by
using the long exact sequence of the pair:
\begin{eqnarray*}
\begin{array}{ccccccccccccc}
\cdots\!\!\!&\!\!\!\longrightarrow\!\!\!&\!\!\!
H_{k}(f^a)\!\!\!&\!\!\!\stackrel{i_k}{\longrightarrow}\!\!\!&\!\!\!
H_k(f^b)\!\!\!&\!\!\!\stackrel{j_k}{\longrightarrow}\!\!\!&\!\!\!
H_k(f^b,f^a)\!\!\!&\!\!\!\stackrel{\partial_{k}}{\longrightarrow}\!\!\!&\!\!\!
H_{k-1}(f^a)\!\!\!&\!\!\!\stackrel{i_{k-1}}{\longrightarrow}\!\!\!&\!\!\!
H_{k-1}(f^b)\!\!\!&\!\!\!\longrightarrow\!\!\!&\!\!\! \cdots,
\end{array}
\end{eqnarray*}
it is easily seen that, for every $k\in\Z$, the surjectivity of
$i_k$ and the injectivity of $i_{k-1}$ imply the triviality of
$H_k(f^b,f^a)$. This proves that if there exists
$\overline{k}\in\Z$ such that $H_{\overline{k}}(f^b,f^a)\neq 0$,
then $[a,b]$ contains at least one critical value of $f$. That the
index of at least one of the critical values of $f$ contained in
$[a,b]$ is exactly $\overline{k}$ is consequent to the
sub-additivity property of the rank of the relative homology
groups and to \emph{(St.~2)}. In fact, let $c_1,\ldots,c_m$ be the
critical values of $f$ belonging to $[a,b]$, and let
$s_0,\ldots,s_{m}$ be $m+1$ regular values such that
$a=s_0<c_1<s_1<c_{2}<\ldots<s_{m-1}<c_m<s_{m}=b$. Since it holds
that $\rank H_{\overline{k}}(f^b,f^a)\leq
\underset{i=1}{\overset{m}\sum} \rank
H_{\overline{k}}(f^{s_i},f^{s_{i-1}})$, and by hypothesis $\rank
H_{\overline{k}}(f^b,f^a)\geq 1$, there exists at least one index
$i\in\{1,\ldots,m\}$ such that
$H_{\overline{k}}(f^{s_i},f^{s_{i-1}})\neq 0$. Now, applying
\emph{(St.~2)} with $a$ replaced by $s_{i-1}$ and $b$ replaced by
$s_i$, we deduce that $c_i$ is a critical value of $f$ of index
$\overline{k}$.
\end{proof}

The above statements \emph{(St.~1-2)}, Lemma \ref{relhom},
together with the following lemma, that is a reformulation of
Lemma 4.1 in \cite{MaPr75}, provide the tools for proving the
stability of critical values under small function perturbations
(Theorem \ref{mainres}).
\begin{lem}\label{MarinoProdi}
Let $X_1, X_2, X_3, X_1', X_2', X_3'$ be topological spaces such
that $X_1 \subseteq X_2 \subseteq X_3 \subseteq X_1' \subseteq
X_2' \subseteq X_3'.$ Let $H_k(X_3,X_1)=0$, $H_k(X_3',X_1')=0$ for
every $k \in \Z$. Then the homomorphism induced by inclusion
$H_k(X_1',X_1)\to H_k(X_2',X_2)$ is injective for every $k \in
\Z$.
\end{lem}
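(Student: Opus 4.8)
The plan is to prove Lemma \ref{MarinoProdi} by a diagram chase using the long exact sequences of the triples involved, exploiting the two vanishing hypotheses $H_k(X_3,X_1)=0$ and $H_k(X_3',X_1')=0$ to conclude injectivity of the map $H_k(X_1',X_1)\to H_k(X_2',X_2)$. First I would set up the relevant long exact sequences of pairs and triples: for the triple $(X_3,X_2,X_1)$, for the triple $(X_1',X_2,X_1)$ (using $X_1\subseteq X_2\subseteq X_1'$, which is legitimate by the chain of inclusions), and for the triple $(X_2',X_1',X_2)$ (using $X_2\subseteq X_1'\subseteq X_2'$). I would also keep in mind the exact sequence of $(X_3',X_2',X_1')$ or directly of $(X_3',X_1',X_2)$ if needed.

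The key steps: from $H_k(X_3,X_1)=0$ for all $k$, the long exact sequence of the triple $(X_3,X_2,X_1)$ forces the connecting-type map $H_k(X_3,X_2)\to H_{k-1}(X_2,X_1)$ to be an isomorphism for every $k$; equivalently, the inclusion-induced map $H_k(X_2,X_1)\to H_k(X_3,X_1)=0$ is zero and $H_k(X_3,X_1)=0$ squeezes the sequence. More directly, the portion
\begin{equation*}
H_k(X_3,X_1)\longrightarrow H_k(X_3,X_2)\stackrel{\partial}{\longrightarrow} H_{k-1}(X_2,X_1)\longrightarrow H_{k-1}(X_3,X_1)
\end{equation*}
with the two outer terms zero shows $\partial$ is an isomorphism. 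Then I would feed this into the long exact sequence of the triple $(X_1',X_2,X_1)$: the map $H_k(X_1',X_1)\to H_k(X_1',X_2)$ has kernel equal to the image of $H_k(X_2,X_1)\to H_k(X_1',X_1)$; I want to relate $H_k(X_1',X_2)$ and $H_k(X_2',X_2)$. For that, observe $X_1'\subseteq X_2'$ and use the triple $(X_2',X_1',X_2)$, whose long exact sequence gives that $H_k(X_1',X_2)\to H_k(X_2',X_2)$ is injective precisely when $H_k(X_2',X_1')\to H_{k-1}(X_1',X_2)$... wait—more cleanly, injectivity of $H_k(X_1',X_2)\to H_k(X_2',X_2)$ follows from exactness if the preceding map $H_{k+1}(X_2',X_1')\to H_k(X_1',X_2)$... this is where the second hypothesis $H_k(X_3',X_1')=0$ should be brought in, via the inclusion $X_2'\subseteq X_3'$ and the triple $(X_3',X_2',X_1')$, to kill $H_k(X_2',X_1')$ or control it: indeed from $H_k(X_3',X_1')=0$ and the triple $(X_3',X_2',X_1')$ one gets that $H_{k+1}(X_3',X_2')\to H_k(X_2',X_1')$ is surjective and $H_k(X_2',X_1')\to H_k(X_3',X_1')=0$; combined with $X_3'$ being the top space this should give the vanishing of the relevant relative group, hence the needed injectivity.

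Assembling: the composite $H_k(X_1',X_1)\to H_k(X_1',X_2)\to H_k(X_2',X_2)$ equals the inclusion-induced map $H_k(X_1',X_1)\to H_k(X_2',X_2)$ we care about (naturality of the maps of pairs under $(X_1',X_1)\hookrightarrow(X_2',X_2)$ factoring through $(X_1',X_2)$). The second arrow is injective by the argument using $H_k(X_3',X_1')=0$, so it suffices to show the first arrow $H_k(X_1',X_1)\to H_k(X_1',X_2)$ is injective; by exactness of the triple $(X_1',X_2,X_1)$ its kernel is the image of $H_k(X_2,X_1)\to H_k(X_1',X_1)$, and this image is zero because that map factors as $H_k(X_2,X_1)\to H_k(X_3,X_1)=0\to H_k(X_1',X_1)$ — using $X_2\subseteq X_3\subseteq X_1'$ and $H_k(X_3,X_1)=0$. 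Therefore the kernel is trivial, both arrows are injective, and the composite is injective, which is the claim. The main obstacle I expect is bookkeeping the many triples and making sure each inclusion chain used is genuinely a sub-chain of $X_1\subseteq\cdots\subseteq X_3'$, together with correctly invoking naturality so that the composite of the two pair-maps really is the inclusion-induced map $H_k(X_1',X_1)\to H_k(X_2',X_2)$; the homological algebra itself is elementary once the diagram is laid out.
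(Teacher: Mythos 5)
Your factorization of the inclusion-induced map as $H_k(X_1',X_1)\to H_k(X_1',X_2)\to H_k(X_2',X_2)$ is the right move, and your treatment of the first arrow is correct: by exactness of the triple $(X_1',X_2,X_1)$ its kernel is the image of $H_k(X_2,X_1)\to H_k(X_1',X_1)$, which vanishes because that map factors through $H_k(X_3,X_1)=0$. The gap is in the second arrow. From the triple $(X_3',X_2',X_1')$ together with $H_k(X_3',X_1')=0$ you can only conclude $H_k(X_2',X_1')\cong H_{k+1}(X_3',X_2')$; you cannot "kill" $H_k(X_2',X_1')$, and the sentence "this should give the vanishing of the relevant relative group" is false in general. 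For instance, take $X_1=X_2=X_3=X_1'=\{p\}$, $X_2'=S^1\ni p$, $X_3'=D^2$: both hypotheses hold, yet $H_1(X_2',X_1')=H_1(S^1,p)\neq 0$. So injectivity of $H_k(X_1',X_2)\to H_k(X_2',X_2)$ cannot be obtained by annihilating the preceding group in the sequence of the triple $(X_2',X_1',X_2)$; what must be shown is that the connecting map $H_{k+1}(X_2',X_1')\to H_k(X_1',X_2)$ has zero image, and that requires a different argument than the one you sketch.

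The repair is short and mirrors your first step. The map $H_k(X_1',X_2)\to H_k(X_3',X_2)$ induced by the inclusion of pairs factors through $H_k(X_2',X_2)$, and the exact sequence of the triple $(X_3',X_1',X_2)$ gives $H_{k+1}(X_3',X_1')\to H_k(X_1',X_2)\to H_k(X_3',X_2)$ with first term zero; hence $H_k(X_1',X_2)\to H_k(X_3',X_2)$ is injective, and a fortiori so is its first factor $H_k(X_1',X_2)\to H_k(X_2',X_2)$. (Equivalently, by naturality of the connecting homomorphism under $(X_2',X_1')\hookrightarrow(X_3',X_1')$, the boundary map $H_{k+1}(X_2',X_1')\to H_k(X_1')$ factors through $H_{k+1}(X_3',X_1')=0$, so the connecting map of the triple $(X_2',X_1',X_2)$ is zero.) With this substitution your argument is complete. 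Note that the paper itself gives no proof of this lemma, quoting it as a reformulation of Lemma 4.1 of Marino and Prodi, so the diagram chase you attempt (once corrected as above) is indeed the expected argument rather than a departure from the paper.
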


\begin{theorem}[Stability of critical values]\label{mainres}
Let $f\in \F^0\subset\F(\M,\R)$ and let $c$ be a critical value of
index $\overline{k}$ of $f$. Then there exists a real number
$\delta(f,c)>0$ such that each $g\in \F^0$ verifying
${\|f-g\|}_{C^0} \leq\delta$, $0\leq \delta \leq \delta(\f,c)$,
admits at least one critical value of index $\overline{k}$ in
$[c-\delta, c+ \delta]$.
\end{theorem}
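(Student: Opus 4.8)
The plan is to derive Theorem \ref{mainres} from the homological characterization of critical values (Lemma \ref{relhom}) together with the Marino--Prodi-type persistence result (Lemma \ref{MarinoProdi}), using the fact that a small $C^0$-perturbation $g$ of $f$ produces nested sublevel sets sandwiched between those of $f$. Concretely, fix the critical value $c$ of index $\overline k$. Since $f\in\F^0$, $c$ is isolated among critical values, so I can choose $\eta>0$ small enough that $[c-\eta,c+\eta]$ contains no critical value of $f$ other than $c$; then $c-\eta$ and $c+\eta$ are regular values of $f$, and by \emph{(St.~2)} we have $H_{\overline k}(f^{c+\eta},f^{c-\eta})\neq 0$ (in fact it equals the coefficient group $G$). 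Set $\delta(f,c)=\eta/2$ (or any convenient fraction of $\eta$), and let $g\in\F^0$ with $\|f-g\|_{C^0}\le\delta$ for some $0\le\delta\le\delta(f,c)$.

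The key geometric observation is the sandwiching of sublevel sets: for any $a\in\R$, $\|f-g\|_{C^0}\le\delta$ implies $f^{a-\delta}\subseteq g^{a}\subseteq f^{a+\delta}$, because $g(p)\le a$ forces $f(p)\le a+\delta$, and $f(p)\le a-\delta$ forces $g(p)\le a$. Applying this with $a=c-\delta$ and $a=c+\delta$ and chaining inclusions, I get
\[
f^{c-2\delta}\ \subseteq\ g^{c-\delta}\ \subseteq\ f^{c}\ \subseteq\ g^{c+\delta}\ \subseteq\ f^{c+2\delta}.
\]
Here all six sets are sublevel sets at values inside $(c-\eta,c+\eta)$ except possibly the endpoints; choosing $\eta$ slightly larger than needed (or replacing $2\delta$ by $\eta$ directly, i.e.\ enlarging the outer window) I may assume $c-2\delta$ and $c+2\delta$ are regular values of $f$ and that $[c-2\delta,c+2\delta]$ still contains only the critical value $c$ of $f$. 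Now I want to invoke Lemma \ref{MarinoProdi}. To do so I need an intermediate sublevel set of $g$ on each side. I enlarge the chain: pick $a_1<a_2$ regular values of $g$ with $c-\delta\le a_1$ and $a_2\le c+\delta$ and $a_1<a_2$; more carefully, I apply the sandwich twice to obtain a six-term nested chain whose innermost and outermost pairs are sublevel sets of $f$ over intervals containing no critical value except $c$, so their relative homology over the full span from $X_1$ to $X_3$ vanishes in the degrees where $f$ has no critical value — but the point is precisely to track degree $\overline k$.

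The cleanest route: take $X_1=f^{c-2\delta}$, $X_2=g^{c-\delta}$, $X_3=f^{c}$, $X_1'=f^{c}$, $X_2'=g^{c+\delta}$, $X_3'=f^{c+2\delta}$, so that $X_1\subseteq X_2\subseteq X_3\subseteq X_1'\subseteq X_2'\subseteq X_3'$ holds by the sandwich. By \emph{(St.~1-2)} applied on intervals containing no critical value of $f$ (choosing $\eta$ so that only $c$ lies in the window, and noting $[c-2\delta,c]$ and $[c,c+2\delta]$ each contain the single critical value $c$), one checks $H_k(f^{c},f^{c-2\delta})$ and $H_k(f^{c+2\delta},f^{c})$ are each either $0$ or $G$, with the value $G$ occurring exactly in degree $\overline k$; so they vanish for $k\ne\overline k$. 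This is not quite the hypothesis $H_k(X_3,X_1)=0$ for all $k$ that Lemma \ref{MarinoProdi} wants, so instead I should split each side further with an extra regular value of $f$ strictly between, arranging $[a,c]$ to contain \emph{no} critical value (then $H_k(f^{c},f^{a})=0$ for all $k$ by \emph{(St.~1)}) on one half and isolating $c$ on the other. Running Lemma \ref{MarinoProdi} on the half where the relative homology vanishes yields injectivity of $H_{\overline k}(f^{c+\eta},f^{c-\eta})\to H_{\overline k}(g^{b'},g^{a'})$ for suitable regular values $a'<b'$ of $g$ inside $[c-\delta,c+\eta+\delta]\subseteq[c-\eta,c+\eta]$ if $\delta$ is small; since the source is $G\ne 0$, the target $H_{\overline k}(g^{b'},g^{a'})$ is nonzero. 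Finally Lemma \ref{relhom} applied to $g$ gives that $[a',b']\subseteq[c-\eta,c+\eta]$ contains a critical value of $g$ of index $\overline k$. Shrinking: after replacing the roles of $\eta$ and $\delta$ appropriately (i.e.\ running the argument with window $\delta$ in place of $\eta$, which is legitimate since $c$ is isolated so any small window works), one concludes that $g$ has a critical value of index $\overline k$ in $[c-\delta,c+\delta]$, as claimed.

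The main obstacle I anticipate is the bookkeeping needed to apply Lemma \ref{MarinoProdi} correctly: that lemma requires $H_k(X_3,X_1)=0$ in \emph{all} degrees $k$, whereas a window around $c$ carries nontrivial relative homology in degree $\overline k$. The resolution is to choose the nested chain so that the "long" pairs $(X_3,X_1)$ and $(X_3',X_1')$ straddle $c$ on one side only in a way that still keeps a critical-value-free subinterval, or more simply to iterate the sandwich so there are two sublevel sets of $g$ on each side of $c$ and apply the lemma on the outer strips where $f$ is critical-point-free; combined with the sub-additivity of ranks (as already used in the proof of Lemma \ref{relhom}) this pins the index to $\overline k$. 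A secondary subtlety is ensuring all the real numbers appearing as endpoints are genuinely regular values of both $f$ and $g$ — handled by a standard density argument, perturbing the chosen levels slightly, since regular values are dense.
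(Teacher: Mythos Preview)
Your overall strategy---sandwich sublevel sets, invoke Lemma~\ref{MarinoProdi}, then apply Lemma~\ref{relhom}---is exactly the paper's. You have also correctly spotted the genuine difficulty: your first chain, with $X_3=X_1'=f^{c}$, makes $(X_3,X_1)=(f^{c},f^{c-2\delta})$ carry nonzero $H_{\overline k}$, so the hypothesis of Lemma~\ref{MarinoProdi} fails. However, your proposed fixes do not land on a working chain, and one of them points the wrong way.

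The resolution is not to add more $g$-sublevels; it is to \emph{separate} the two middle $f$-sublevels so that neither of the ``outer'' pairs touches $c$. Concretely, for a small parameter $\epsilon>0$ (the paper uses $\epsilon=\delta/n$), take
\[
X_1=f^{c-2\delta-\epsilon},\quad X_2=g^{c-\delta-\epsilon},\quad X_3=f^{c-\epsilon},\quad X_1'=f^{c+\epsilon},\quad X_2'=g^{c+\delta+\epsilon},\quad X_3'=f^{c+2\delta+\epsilon}.
\]
Now $(X_3,X_1)$ and $(X_3',X_1')$ are pairs of $f$-sublevels lying entirely on one side of $c$, hence have trivial relative homology in \emph{all} degrees by \emph{(St.~1)}; and $(X_1',X_1)=(f^{c+\epsilon},f^{c-2\delta-\epsilon})$ straddles $c$, so $H_{\overline k}(X_1',X_1)\neq 0$ by \emph{(St.~2)}. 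Lemma~\ref{MarinoProdi} then gives $H_{\overline k}(g^{c+\delta+\epsilon},g^{c-\delta-\epsilon})\neq 0$, and Lemma~\ref{relhom} yields a critical value of $g$ of index $\overline k$ in $[c-\delta-\epsilon,\,c+\delta+\epsilon]$. Note the pattern is $f,g,f,f,g,f$: two $f$-levels flanking one $g$-level on \emph{each} side, not the reverse.

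This leaves you with an interval slightly larger than $[c-\delta,c+\delta]$. Your hand-wave (``running the argument with window $\delta$ in place of $\eta$'') does not close this, because the extra $\epsilon$ is forced by the need to keep $X_3$ strictly below $c$. The paper handles it by letting $\epsilon\to 0$ (equivalently $n\to\infty$) and arguing by contradiction: if no index-$\overline k$ critical value of $g$ lay in $[c-\delta,c+\delta]$, then since $g$ is Morse its critical values are isolated, so some $(c-\delta-\varepsilon,c+\delta+\varepsilon)$ would also be free of them, contradicting what was just shown for small enough $\epsilon$. You should make this explicit.
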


\begin{proof}
Since $f$ is Morse, we can choose a real number $\delta(f,c)>0$
such that $[c-3\cdot\delta(\f,c), c+3\cdot\delta(\f,c)]$ does not
contain any critical value of $f$ besides $c$. Let $0\le \delta
\leq \delta(\f,c)$, and let $g$ be a simple Morse function such
that ${\|f-g\|}_{C^0} \leq\delta$. If $\delta=0$, then the claim
immediately follows. Let $\delta>0$. Then, for every $n \in \N$,
$$f^{c-\delta\cdot\frac{2n+1}{n}}\subseteq g^{c-\delta\cdot\frac{n+1}{n}}\subseteq f^{c-\delta/n}\subseteq f^{c+\delta/n}\subseteq g^{c+\delta\cdot\frac{n+1}{n}}\subseteq f^{c+\delta\cdot\frac{2n+1}{n}}.$$
Since $[c-\delta\cdot\frac{2n+1}{n},c-\delta/n]$ and
$[c+\delta/n,c+\delta\cdot\frac{2n+1}{n}]$ do not contain any
critical value of $f$ for every $n\in\N$, both
$H_k(f^{c-\delta/n},f^{c-\delta\cdot\frac{2n+1}{n}})$ and
$H_k(f^{c+\delta\cdot\frac{2n+1}{n}},f^{c+\delta/n})$ are trivial
for every $k \in \Z$, and $n\in\N$. Consequently, from Lemma
\ref{MarinoProdi}, the homomorphism induced by inclusion
$H_k(f^{c+\delta/n},f^{c-\delta\cdot\frac{2n+1}{n}})\to
H_k(g^{c+\delta\cdot\frac{n+1}{n}},g^{c-\delta\cdot\frac{n+1}{n}})$
is injective for each $k \in \Z$, and $n\in\N$. Moreover, since,
for every $n\in\N$, $[c-\delta\cdot\frac{2n+1}{n},c+\delta/n]$
contains $c$, that is a critical value of index $\overline{k}$ of
$f$, from \emph{(St.~2)}, it holds that
$H_{\overline{k}}(f^{c+\delta/n},f^{c-\delta\cdot\frac{2n+1}{n}})\neq
0$ for every $n\in\N$. This fact, together with the injectivity of
the above map, implies that also
$H_{\overline{k}}(g^{c+\delta\cdot\frac{n+1}{n}},g^{c-\delta\cdot\frac{n+1}{n}})\neq
0$ for every $n\in\N$. So, by Lemma \ref{relhom}, for every
$n\in\N$, there exists at least one critical value $c_n'$ of index
$\overline{k}$ of $g$ with
$c_n'\in(c-\delta\cdot\frac{n+1}{n},c+\delta\cdot\frac{n+1}{n})$.
By contradiction, let us suppose that $[c-\delta,c+\delta]$
contains no critical values of index $\overline{k}$ of $g$. Then,
since $g$ is Morse, there would exist a sufficiently small real
number $\varepsilon>0$ such that
$(c-\delta-\varepsilon,c+\delta+\varepsilon)$ does not contain
critical values of index $\overline{k}$ of $g$ either, giving an
absurd.
\end{proof}

We now prove the local stability of labelled Reeb graphs of closed
curves. We need a lemma that holds for manifolds of arbitrary
dimension. The global stability will be exposed in the next
section.

\begin{lem}\label{localM}
Let $f \in \F^0\subset\F(\M,\R)$. Then there exists a positive
real number $\delta(f)$ such that, for every $\delta$,
$0\le\delta\le\delta(f)$, and for every $g\in\F^0$, with
${\|f-g\|}_{C^2} \leq \delta$, an edge and vertices order
preserving bijection $\Phi:V(\Gamma_{f})\to V(\Gamma_{g})$ exists
for which $\underset{v\in V(\Gamma_{f})}\max
|f_{_|}(v)-g_{_|}(\Phi(v))|\le\delta.$
\end{lem}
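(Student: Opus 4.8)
The plan is to use the stability of critical values (Theorem \ref{mainres}) together with the structural rigidity of Reeb graphs of functions on $S^1$ as cycle graphs, exploiting the $C^2$-closeness in two ways: via the $C^0$-bound for the critical value positions, and via the $C^1$/$C^2$-bound to control the number and nature of critical points. First I would fix a simple Morse function $f\in\F^0$ on $\M$, with critical values $c_1<\dots<c_m$ realized at critical points $p_1,\dots,p_m$. For each $i$, choose $\delta(f,c_i)$ as in Theorem \ref{mainres}, and also shrink it so that the intervals $[c_i-3\delta(f,c_i),c_i+3\delta(f,c_i)]$ are pairwise disjoint. Set $\delta(f)=\min_i\delta(f,c_i)$ (possibly shrunk further below to handle the $C^2$ estimates). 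For $g\in\F^0$ with $\|f-g\|_{C^2}\le\delta\le\delta(f)$, Theorem \ref{mainres} gives, for each $i$, at least one critical value of $g$ of index $\overline{k}_i$ in $[c_i-\delta,c_i+\delta]$, and by disjointness these intervals do not overlap.

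Next I would argue that $g$ has \emph{exactly} $m$ critical points, one near each $p_i$, and no others. The lower bound $\ge m$ is immediate from the previous step (the intervals are disjoint). For the upper bound I would use the $C^1$-smallness: outside fixed small coordinate neighborhoods $U_1,\dots,U_m$ of the $p_i$ the gradient of $f$ is bounded away from zero, so if $\|f-g\|_{C^1}$ is small enough then $\nabla g$ is also nonzero there, hence all critical points of $g$ lie in $\bigcup_i U_i$. Within each $U_i$, since $p_i$ is non-degenerate, the Hessian of $f$ is invertible, and a $C^2$-small perturbation keeps the Hessian invertible; combined with the Morse property of $g$ (critical points isolated) and a degree/index argument — the local degree of $\nabla g$ on $\partial U_i$ equals that of $\nabla f$, which is $(-1)^{\overline{k}_i}\neq 0$ — one gets exactly one critical point $q_i$ of $g$ in $U_i$, of the same index $\overline{k}_i$. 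On $S^1$ this is especially clean: $f$ and $g$ both have critical points alternating between minima and maxima around the circle, and since they have the same number, of matching indices in matching positions, the induced cyclic orders coincide.

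Then I would define the bijection $\Phi:V(\Gamma_f)\to V(\Gamma_g)$ by $\Phi(p_i)=q_i$. Because $\Gamma_f$ and $\Gamma_g$ are cycle graphs whose vertices, read around $S^1$, alternate min/max, and because $q_i$ is in the same position on $S^1$ (within $U_i$) and has the same index as $p_i$, the map $\Phi$ respects adjacency of consecutive critical points on $S^1$, hence is edge-preserving. Moreover $\Phi$ preserves the order of labels: $f_{_|}(p_i)=c_i$ and $g_{_|}(q_i)\in[c_i-\delta,c_i+\delta]$, and since the $c_i$ are separated by more than $2\delta$ (after shrinking $\delta(f)$), $c_i<c_j$ forces $g_{_|}(q_i)<g_{_|}(q_j)$. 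Finally the label estimate $\max_{v\in V(\Gamma_f)}|f_{_|}(v)-g_{_|}(\Phi(v))|=\max_i|c_i-g_{_|}(q_i)|\le\delta$ is exactly what Theorem \ref{mainres} provides.

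The main obstacle I expect is the ``no extra critical points'' claim, i.e.\ passing from ``at least one critical point near each $p_i$'' to ``exactly $m$ critical points total.'' This is where the $C^2$-hypothesis (rather than merely $C^0$, as in Theorem \ref{mainres}) is genuinely used, and it requires the local degree/index argument together with the gradient-nonvanishing argument away from the $p_i$; care is needed to choose the neighborhoods $U_i$ and the final $\delta(f)$ uniformly. On $\M=S^1$ this can be packaged purely one-dimensionally (monotonicity of $g$ on each arc between consecutive $U_i$'s), which is why the lemma's conclusion about edge- and order-preservation makes sense specifically for curves, even though the statement is phrased for general $\M$; in the proof I would invoke exactly this $1$-dimensional simplification when identifying the cyclic structure, while the index-matching and critical-value-tracking parts are dimension-independent.
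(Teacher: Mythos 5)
Your route is genuinely different from the paper's, and the main problem with it is that it does not prove the lemma in the generality in which it is stated. Lemma \ref{localM} is formulated (and used in the paper's proof) for an arbitrary compact manifold $\M$, and the edge-preservation of $\Phi$ is precisely the part your argument does not deliver when $\dim\M\ge 2$: the local analysis (gradient bounded away from zero off the neighborhoods $U_i$, Hessian invertibility plus a degree count inside each $U_i$) gives a bijection between critical points of $f$ and $g$ with matching indices and nearby positions, but it says nothing about how connected components of level sets of $g$ merge, i.e., about the adjacency structure of $\Gamma_g$; you explicitly fall back on the cyclic order of $S^1$ for that step. The paper avoids all of this with a softer argument: since $\F^0$ is open in the $C^2$ topology, $\delta(f)$ can be chosen so that $\overline{B_2(f,\delta(f))}\subset\F^0$; then $f$ and $g$ lie in the same arcwise connected component of $\F^0$, hence are topologically equivalent (Cerf), which immediately yields an edge- and vertices-order preserving bijection $\Phi$ for any $\M$, and in particular $g$ has exactly $n$ critical points. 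Theorem \ref{mainres} together with the disjointness of the intervals $[c_i-\delta,c_i+\delta]$ then forces $\Phi(p_i)=p_i'$ with $|c_i-c_i'|\le\delta$. In other words, the one place where the paper genuinely uses the $C^2$-hypothesis is to keep $g$ in the same stratum component as $f$; your Hessian/degree machinery is a quantitative substitute for this, but only in dimension one does it also produce the graph isomorphism.

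There is also a smaller, fixable gap in your final estimate. Theorem \ref{mainres} only guarantees that \emph{some} critical value of $g$ of index $\overline{k}_i$ lies in $[c_i-\delta,c_i+\delta]$; it does not identify that value as $g_{_|}(q_i)$, the value at the critical point sitting in $U_i$. To get $|c_i-g_{_|}(q_i)|\le\delta$ you need either a counting-and-separation argument (the $m$ disjoint intervals contain all $m$ critical values of $g$, one apiece, and the oscillation of $f$ on each $U_j$ plus $\|f-g\|_{C^0}\le\delta$ must be made smaller than the gaps between the $c_j$'s, which requires coordinating the choice of the $U_j$'s with $\delta(f)$), or, on $S^1$, a direct comparison: $q_i$ is the unique critical point of $g$ in the arc $U_i$, hence the minimum (resp.\ maximum) of $g$ there, so for a minimum $g(q_i)\le g(p_i)\le c_i+\delta$ and $g(q_i)\ge f(q_i)-\delta\ge c_i-\delta$, since $p_i$ minimizes $f$ on $U_i$. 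As written, ``exactly what Theorem \ref{mainres} provides'' overstates what that theorem gives you.
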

\begin{proof}
Let $p_1,\ldots,p_n$ be the critical points of $f$, and
$c_1,\ldots,c_n$ the respective critical values, with
$c_i<c_{i+1}$ for $i=1\ldots,n-1$. Since $\F^0$ is open in $\F(\M,
\R)$, endowed with the $C^2$ topology, there always exists a
sufficiently small $\delta(f)>0$, such that the closed ball with
center $f$ and radius $\delta(f)$, $\overline{B_2(f,\delta(f))}$,
is contained in $\F^0$. Moreover, $\delta(f)$ can be chosen so
small that,  for every $i=1,\ldots,n-1$, the intervals
$[c_i-\delta(f),c_i+\delta(f)]$ and
$[c_{i+1}-\delta(f),c_{i+1}+\delta(f)]$ are disjoint.

Fixed such a $\delta(f)$, for every real number $\delta$, with
$0\le\delta\le\delta(f)$, and for every $g\in\F^0$ such that
${\|f-g\|}_{C^2} \leq \delta$, $f$ and $g$ belong to the same
arcwise connected component of $\F^0$ endowed with the
$C^{\infty}$ topology, and, therefore, are topologically
equivalent functions. Consequently, there exists an edge and
vertices order preserving bijection $\Phi:V(\Gamma_{f})\to
V(\Gamma_{g})$ (see Subsection \ref{reebsection}). Let us prove
that $\Phi$ is such that $\underset{v\in V(\Gamma_{f})}\max
|f_{_|}(v)-g_{_|}(\Phi(v))|\le\delta.$ Since $f$ and $g$ are
topologically equivalent, it follows that $g$ has exactly $n$
critical points, $p_1',\ldots,p_n'$. Let
$c_1'=g(p_1'),\ldots,c_n'=g(p_n')$. We can assume $c_i'<c_{i+1}'$,
for $i=1,\ldots,n-1$. The assumption ${\|f-g\|}_{C^2} \leq \delta$
implies that ${\|f-g\|}_{C^0} \leq \delta$. Therefore, by the
previous Theorem \ref{mainres}, for every critical value $c_i$ of
$f$, there exists at least one critical value of $g$ of the same
index of $c_i$ belonging to $[c_i-\delta,c_i+\delta]$. Moreover,
since $[c_i-\delta,c_i+\delta]\cap
[c_{i+1}-\delta,c_{i+1}+\delta]=\emptyset$ for every
$i=1,\ldots,n-1$, it follows that $c_i'\in
[c_i-\delta,c_i+\delta]$ for every $i=1,\ldots,n$. Hence, since
$\Phi$ preserves the order of the vertices, necessarily
$\Phi(p_i)=p_i'$, yielding that $\underset{v\in V(\Gamma_{f})}\max
|f_{_|}(v)-g_{_|}(\Phi(v))|=\underset{p_i\in K(f)}\max
|f_{_|}(p_i)-g_{_|}(\Phi(p_i))|=\underset{1\le i\le n}\max
|c_i-c_i'|\le\delta.$
\end{proof}

\begin{theorem}[Local stability]\label{local}
Let $f \in \F^0\subset\F(S^1,\R)$. Then there exists a positive
real number $\delta(f)$ such that, for every $\delta$,
$0\le\delta\le\delta(f)$, and for every $g\in\F^0$, with
${\|f-g\|}_{C^2} \leq \delta$, it holds that
$d((\Gamma_{f},f_{_|}),(\Gamma_{g},g_{_|}))\le\delta.$
\end{theorem}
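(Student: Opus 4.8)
The plan is to deduce Theorem \ref{local} directly from Lemma \ref{localM}, specialized to $\M=S^1$. First I would fix the positive real number $\delta(f)$ provided by Lemma \ref{localM}, and take any $\delta$ with $0\le\delta\le\delta(f)$ and any $g\in\F^0\subset\F(S^1,\R)$ with $\|f-g\|_{C^2}\le\delta$. By Lemma \ref{localM}, there is an edge and vertices order preserving bijection $\Phi:V(\Gamma_f)\to V(\Gamma_g)$ such that $\max_{v\in V(\Gamma_f)}|f_{_|}(v)-g_{_|}(\Phi(v))|\le\delta$.

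The key observation is that, since $\Phi$ is an edge-preserving bijection between $V(\Gamma_f)$ and $V(\Gamma_g)$, the graphs $\Gamma_f$ and $\Gamma_g$ have the same number of vertices, so no births or deaths are needed: a single elementary deformation of type (R) suffices. Concretely, I would define the elementary deformation $T$ of type (R) that relabels each vertex $v\in V(\Gamma_f)$ by the value $g_{_|}(\Phi(v))$. I must check that this is a legitimate relabelling in the sense of Definition \ref{elem_def} (R): because $\Phi$ preserves both edges and the order of vertices, for any vertex $v_2$ adjacent to $v_1$ and $v_3$, if $f_{_|}(v_1),f_{_|}(v_3)$ are both smaller (resp. greater) than $f_{_|}(v_2)$, then $g_{_|}(\Phi(v_1)),g_{_|}(\Phi(v_3))$ are both smaller (resp. greater) than $g_{_|}(\Phi(v_2))$, since $\Phi(v_1),\Phi(v_3)$ are precisely the neighbours of $\Phi(v_2)$ in $\Gamma_g$ and $g_{_|}$ is injective on $V(\Gamma_g)$; injectivity of $g_{_|}$ also gives $\ell(v)\neq\ell(w)$ for $v\neq w$. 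Hence $T$ is well-defined, and by construction $T(\Gamma_f,f_{_|})=(\Gamma_g,g_{_|})$ as labelled Reeb graphs (identifying isomorphic ones via Proposition \ref{unique}, using that $\Phi$ is exactly the isomorphism realized by $T$).

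Finally, by the definition of the cost of a deformation of type (R) in Definition \ref{cost 1}, we have $c(T)=\max_{v\in V(\Gamma_f)}|f_{_|}(v)-g_{_|}(\Phi(v))|\le\delta$. Since $T\in\mathcal{T}((\Gamma_f,f_{_|}),(\Gamma_g,g_{_|}))$, the editing distance satisfies $d((\Gamma_f,f_{_|}),(\Gamma_g,g_{_|}))=\inf_{S}c(S)\le c(T)\le\delta$, which is the claim. I do not expect any serious obstacle here: the entire content has been pushed into Lemma \ref{localM} (whose proof in turn rests on Theorem \ref{mainres}), and the only thing to be careful about is verifying that the relabelling given by $\Phi$ genuinely respects the combinatorial constraints of a type (R) deformation — but this is immediate from $\Phi$ being edge and order preserving together with injectivity of $g_{_|}$.
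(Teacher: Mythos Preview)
Your proposal is correct and follows essentially the same route as the paper: invoke Lemma \ref{localM} to obtain the edge and order preserving bijection $\Phi$, use it to define a single type (R) deformation with cost at most $\delta$, and conclude via the identification of isomorphic labelled Reeb graphs. The only difference is cosmetic—you spell out the verification that the relabelling satisfies the constraints of Definition \ref{elem_def} (R), whereas the paper phrases this as ``$\Phi$ takes minima into minima and maxima into maxima''.
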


\begin{proof}
By Lemma \ref{localM}, an edge and vertices order preserving
bijection $\Phi:V(\Gamma_{f})\to V(\Gamma_{g})$ exists for which
$\underset{v\in V(\Gamma_{f})}\max
|f_{_|}(v)-g_{_|}(\Phi(v))|\le\delta.$ Necessarily $\Phi$ takes
minima into minima and maxima into maxima. Therefore,
$(\Gamma_{f},g_{_|}\circ\Phi)=T(\Gamma_{f},f_{_|})$, with $T$ an
elementary deformation of type (R), relabelling vertices of
$V(\Gamma_{f})$, having cost $c(T)=\underset{v\in
V(\Gamma_{f})}\max |f_{_|}(v)-g_{_|}(\Phi(v))|\le\delta$.
Moreover, let us observe that $(\Gamma_{f},g_{_|}\circ\Phi)$ is
isomorphic to $(\Gamma_{g},g_{_|})$ as labelled Reeb graph (see
Definition \ref{isolabel}). Thus,
$d((\Gamma_{f},f_{_|}),(\Gamma_{g},g_{_|}))=d((\Gamma_{f},f_{_|}),(\Gamma_{f},g_{_|}\circ\Phi))=\underset{T\in
\mathcal{T}((\Gamma_{f}, f_{_|}),(\Gamma_{g},g_{_|}))}\inf c(T)\le
\delta.$
\end{proof}

\section{Global stability}\label{globalstab}
This section is devoted to proving that Reeb graphs of closed
curves are stable under arbitrary function perturbations. More
precisely, it will be shown that arbitrary changes in simple Morse
functions imply smaller changes in the editing distance between
Reeb graphs. The proof is by steps: the following Proposition
\ref{pathstability} shows such a stability property when the
functions defined on $S^1$ belong to the same arcwise connected
component of $\F^0$; Proposition \ref{pointonF1} proves the same
result in the case that the linear convex combination of two
simple Morse functions traverses the stratum $\F^1$ at most in one
point; Theorem \ref{global} extends the result to two arbitrary
functions in $\F^0$.

\begin{prop}\label{pathstability}
Let $f, g \in \F^0$ and let us consider the path $\h: [0,1]\to
\F(S^1,\R)$ defined by $\h(\lambda)=(1-\lambda)f+\lambda g$. If
$\h(\lambda)  \in \F^0$ for every $\lambda \in [0,1]$, then
$d((\Gamma_{f},f_{_|}),(\Gamma_{g},g_{_|}))\le {\|f-g\|}_{C^2}.$
\end{prop}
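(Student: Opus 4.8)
The plan is to cut the segment $\h([0,1])$ into finitely many pieces short enough that the local stability result (Theorem \ref{local}) applies on each of them, and then to add the contributions using the fact that $d$ is a genuine distance (Theorem \ref{editdist}), hence satisfies the triangle inequality. We may assume $f\neq g$, the case $f=g$ being trivial. Since $\h([0,1])$ is connected and contained in $\F^0$, it lies in a single arcwise connected component of $\F^0$; in particular all the $\h(\lambda)$ are topologically equivalent simple Morse functions with the same number $n$ of critical points, and, by Theorem \ref{mainres} together with the constancy of this number, the critical values $c_1(\lambda)<\dots<c_n(\lambda)$ of $\h(\lambda)$ depend continuously on $\lambda\in[0,1]$.

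The crucial preliminary step is to exhibit a single constant $\delta_0>0$ that plays the role of the radius $\delta(\cdot)$ of Theorem \ref{local} for every function on the path simultaneously. Inspecting the proof of Lemma \ref{localM}, the constant attached to $\p\in\F^0$ there can be taken to be a fixed fraction of the minimum of $\mathrm{dist}_{C^2}\bigl(\p,\F(S^1,\R)\setminus\F^0\bigr)$ and of the least gap between consecutive critical values of $\p$. Along $\h$ the former is a continuous, strictly positive function of $\lambda$ (the map $\lambda\mapsto\h(\lambda)$ is Lipschitz in the $C^2$-norm and $\F^0$ is $C^2$-open), and the latter is continuous and strictly positive as well (by the continuity of the $c_j(\lambda)$ and the fact that each $\h(\lambda)$ is \emph{simple}). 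By compactness of $[0,1]$ we may therefore set $\delta_0:=\inf_{\lambda\in[0,1]}\delta(\h(\lambda))>0$.

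Now fix an integer $N$ with $\|f-g\|_{C^2}/N\le\delta_0$ and put $\lambda_i=i/N$ for $i=0,\dots,N$. A direct computation gives $\h(\lambda_i)-\h(\lambda_{i+1})=(\lambda_{i+1}-\lambda_i)(f-g)$, so that $\|\h(\lambda_i)-\h(\lambda_{i+1})\|_{C^2}=\|f-g\|_{C^2}/N\le\delta_0\le\delta(\h(\lambda_i))$; hence Theorem \ref{local} applies to the pair $\h(\lambda_i),\h(\lambda_{i+1})\in\F^0$ and yields
\[
d\bigl((\Gamma_{\h(\lambda_i)},\h(\lambda_i)_{_|}),(\Gamma_{\h(\lambda_{i+1})},\h(\lambda_{i+1})_{_|})\bigr)\le\frac{\|f-g\|_{C^2}}{N}.
\]
Summing over $i=0,\dots,N-1$ and invoking the triangle inequality for $d$ we obtain
\[
d\bigl((\Gamma_{f},f_{_|}),(\Gamma_{g},g_{_|})\bigr)\le\sum_{i=0}^{N-1}\frac{\|f-g\|_{C^2}}{N}=\|f-g\|_{C^2},
\]
which is the assertion. (Alternatively one may bypass the triangle inequality by concatenating the $N$ elementary deformations of type (R) produced, through Lemma \ref{localM}, by the edge- and order-preserving bijections $\Phi_i\colon V(\Gamma_{\h(\lambda_i)})\to V(\Gamma_{\h(\lambda_{i+1})})$; each has cost at most $\|f-g\|_{C^2}/N$, so their total cost is at most $\|f-g\|_{C^2}$.)

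The step I expect to be the real obstacle is exactly this uniform choice of $\delta_0$: Theorem \ref{local} only guarantees, for each fixed $\lambda$, the \emph{existence} of some admissible radius, so one has to reopen the construction in the proof of Lemma \ref{localM} and check that the radius can be chosen to be bounded below by continuous, strictly positive functions of $\lambda$ along the path, after which compactness finishes the job. The remaining ingredients — the $C^2$-Lipschitz dependence of $\h(\lambda)$ on $\lambda$, the telescoping of the estimates, and the triangle inequality for $d$ — are routine.
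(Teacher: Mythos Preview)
Your argument is correct, but it differs from the paper's in how compactness of $[0,1]$ is exploited. You first manufacture a \emph{uniform} radius $\delta_0=\inf_{\lambda}\delta(h(\lambda))>0$ by reopening the proof of Lemma~\ref{localM}, observing that $\delta(\cdot)$ may be taken to be a fixed fraction of $\min\bigl(\mathrm{dist}_{C^2}(h(\lambda),\F\setminus\F^0),\ \min_i(c_{i+1}(\lambda)-c_i(\lambda))\bigr)$, and then checking that both ingredients are continuous positive functions of $\lambda$ (for the second you invoke Theorem~\ref{mainres} plus constancy of the number of critical points to get continuity of the $c_i(\lambda)$). After that, a uniform partition of $[0,1]$ finishes things trivially. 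The paper instead never asserts any continuity or uniform positivity of $\delta(h(\lambda))$: it covers $[0,1]$ by the intervals $I_\lambda=\bigl(\lambda-\tfrac{\delta(\lambda)}{2a},\lambda+\tfrac{\delta(\lambda)}{2a}\bigr)$ with $a=\|f-g\|_{C^2}$, extracts a finite \emph{minimal} subcover, and uses that consecutive intervals of a minimal subcover must overlap to bound $\lambda_{i+1}-\lambda_i$ by $\max\{\delta(\lambda_i),\delta(\lambda_{i+1})\}/a$; then Theorem~\ref{local} is applied centered at whichever of the two endpoints has the larger $\delta$. The trade-off is clear: the paper's covering trick is self-contained and treats Theorem~\ref{local} as a black box, at the price of a slightly less transparent partition; your route requires the extra verification you flag (and is honest about), but once $\delta_0$ is secured the remainder is completely elementary.
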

\begin{proof}
Let $\delta(h(\lambda))>0$ be the fixed real number playing the
same role of $\delta(f)$ in Theorem \ref{local}, after replacing
$f$ by $h(\lambda)$. For conciseness, let us denote it by
$\delta(\lambda)$, and ${\|f-g\|}_{C^2}$ by $a$. If $a=0$, the
claim trivially follows. If $a>0$, let $C$ be the open covering of
$[0,1]$ constituted of open intervals
$I_{\lambda}=\left(\lambda-\frac{\delta(\lambda)}{2a},\lambda+\frac{\delta(\lambda)}{2a}\right)$.
Let $C'$ be a finite minimal (i.e. such that, for every $i$,
$I_{\lambda_i}\nsubseteq \underset{j\neq i}\bigcup I_{\lambda_j}$)
sub-covering of $C$, with $\lambda_1<\lambda_2<\ldots<\lambda_n$
the middle points of its intervals. Since $C'$ is minimal, for
every $i\in\{1,\ldots, n-1\}$, $I_{\lambda_i}\cap
I_{\lambda_{i+1}}$ is non-empty. This implies that
{\setlength\arraycolsep{2pt}\begin{eqnarray}\label{first}
\lambda_{i+1} -
\lambda_i&<&\frac{\delta(\lambda_i)}{2a}+\frac{\delta(\lambda_{i+1})}{2a}\le\frac{\max\{\delta(\lambda_i),\delta(\lambda_{i+1})\}}{a}.
\end{eqnarray}}
Moreover, by the definition of $\h$ and the linearity of
derivatives, it can be deduced that
{\setlength\arraycolsep{2pt}\begin{eqnarray}\label{second}
{\|h(\lambda_{i+1})-h(\lambda_{i})\|}_{C^2}&=&(\lambda_{i+1}
-\lambda_{i})\cdot{\|f-g\|}_{C^2}.
\end{eqnarray}}
Now, substituting (\ref{first}) in (\ref{second}), we obtain
\begin{eqnarray*}
{\|h(\lambda_{i+1})-h(\lambda_{i})\|}_{C^2}<\frac{\max\{\delta(\lambda_i),\delta(\lambda_{i+1})\}}{a}\cdot{\|f-g\|}_{C^2}=\max\{\delta(\lambda_i),\delta(\lambda_{i+1})\}.
\end{eqnarray*}

Let $(\Gamma_{\h(\lambda_{j})}, \h(\lambda_{j})_{_|})$ be the
labelled Reeb graphs associated with $(S^1,\h(\lambda_{j}))$,
$j=1, \dots,n$. Let $i\in\{1,\ldots, n-1\}$. If
$\max\{\delta(\lambda_i),\delta(\lambda_{i+1})\}=\delta(\lambda_i)$,
then using Theorem \ref{local}, with $f$ replaced by
$h(\lambda_{i})$, $g$ by $h(\lambda_{i+1})$ and $\delta$ by
${\|h(\lambda_{i+1})-h(\lambda_{i})\|}_{C^2}$, it holds that
{\setlength\arraycolsep{2pt}\begin{eqnarray}\label{four}
d((\Gamma_{h(\lambda_{i})},
h(\lambda_{i})_{_|}),(\Gamma_{h(\lambda_{i+1})},
h(\lambda_{i+1})_{_|}))&\le&{\|h(\lambda_{i+1})-h(\lambda_{i})\|}_{C^2}.
\end{eqnarray}}
The same inequality holds when
$\max\{\delta(\lambda_i),\delta(\lambda_{i+1})\}=\delta(\lambda_{i+1})$,
as can be analogously checked.

Now, setting $\lambda_0=0$, $\lambda_{n+1}=1$, it can be verified
that (\ref{four}) also holds for $i=0, n$. Consequently, since
$\Gamma_{f}=\Gamma_{h(\lambda_{0})}$, and
$\Gamma_{g}=\Gamma_{h(\lambda_{n+1})}$, we have
{\setlength\arraycolsep{2pt}\begin{eqnarray*}
d((\Gamma_{f},f_{_|}),(\Gamma_{g},g_{_|}))&\le&\underset{i=0}{\overset{n}\sum}d((\Gamma_{h(\lambda_{i})},
h(\lambda_{i})_{_|}),(\Gamma_{h(\lambda_{i+1})},
h(\lambda_{i+1})_{_|}))\le\underset{i=0}{\overset{n}\sum}{\|h(\lambda_{i+1})-h(\lambda_{i})\|}_{C^2}\\
&=&\underset{i=0}{\overset{n}\sum}(\lambda_{i+1}
-\lambda_{i})\cdot{\|f-g\|}_{C^2}={\|f-g\|}_{C^2},
\end{eqnarray*}}
where the first inequality is due to the triangular inequality,
the second one to (\ref{four}), the first equality holds because
of (\ref{second}), the second one because
$\underset{i=0}{\overset{n}\sum}(\lambda_{i+1} -\lambda_{i})=1.$
\end{proof}

\begin{prop}\label{pointonF1}
Let $f, g \in \F^0$ and let us consider the path $\h: [0,1]\to
\F(S^1,\R)$ defined by $\h(\lambda)=(1-\lambda)f+\lambda g$. If
$\h(\lambda)  \in \F^0$ for every $\lambda \in
[0,1]\setminus\{\overline{\lambda}\}$, with $0<\overline{\lambda}<1$, and
$\h$  transversely intersects $\F^1$ at $\overline{\lambda}$, then
$d((\Gamma_{f},f_{_|}),(\Gamma_{g},g_{_|}))\le {\|f-g\|}_{C^2}.$
\end{prop}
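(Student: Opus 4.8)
The plan is to handle the single crossing of $\F^1$ by splitting the path at $\overline{\lambda}$ and reducing, as much as possible, to Proposition \ref{pathstability}. Since the subpaths $\h|_{[0,\overline\lambda-\eta]}$ and $\h|_{[\overline\lambda+\eta,1]}$ stay entirely in $\F^0$ for small $\eta>0$, Proposition \ref{pathstability} already bounds $d((\Gamma_f,f_{_|}),(\Gamma_{\h(\overline\lambda-\eta)},\cdot))$ and $d((\Gamma_{\h(\overline\lambda+\eta)},\cdot),(\Gamma_g,g_{_|}))$ by $\|\h(\overline\lambda\mp\eta)-f\|_{C^2}$ and $\|g-\h(\overline\lambda+\eta)\|_{C^2}$ respectively. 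By the linearity of $\h$, these two quantities sum to $(1-2\eta')\|f-g\|_{C^2}$ for a suitable $\eta'$ tending to $0$ with $\eta$. Hence, by the triangle inequality for $d$, it suffices to show that the ``middle'' contribution $d((\Gamma_{\h(\overline\lambda-\eta)},\cdot),(\Gamma_{\h(\overline\lambda+\eta)},\cdot))$ is bounded by something that goes to $0$ at the right rate, namely by $\|\h(\overline\lambda+\eta)-\h(\overline\lambda-\eta)\|_{C^2}=2\eta\|f-g\|_{C^2}$ (up to reparametrizing $\eta$), so that summing all three pieces gives exactly $\|f-g\|_{C^2}$.

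The heart of the argument is therefore a \emph{local} statement: for $\lambda^-<\overline\lambda<\lambda^+$ sufficiently close to $\overline\lambda$, the Reeb graphs $(\Gamma_{\h(\lambda^-)},\h(\lambda^-)_{_|})$ and $(\Gamma_{\h(\lambda^+)},\h(\lambda^+)_{_|})$ are related by an elementary deformation (or a short composition of them) whose cost is at most $\|\h(\lambda^+)-\h(\lambda^-)\|_{C^2}$. Here one uses the description of $\F^1$ recalled in Subsection \ref{stratification}. There are two cases according to whether $\h(\overline\lambda)\in\F^1_\alpha$ or $\h(\overline\lambda)\in\F^1_\beta$. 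If $\h(\overline\lambda)\in\F^1_\beta$, only the ordering of two critical values changes as $\lambda$ crosses $\overline\lambda$: the graphs $\Gamma_{\h(\lambda^-)}$ and $\Gamma_{\h(\lambda^+)}$ are the same cycle graph, and they differ only by a relabelling of type (R); by the stability of critical values (Theorem \ref{mainres}) each label moves by at most $\|\h(\lambda^+)-\h(\lambda^-)\|_{C^0}\le\|\h(\lambda^+)-\h(\lambda^-)\|_{C^2}$, so the cost of that (R)-deformation is within the required bound. If $\h(\overline\lambda)\in\F^1_\alpha$, a degenerate critical point $p$ either splits into two nearby nondegenerate critical points or disappears as $\lambda$ crosses $\overline\lambda$ (Figure \ref{morseperturb}$(a)$); on the Reeb graph side this is precisely an elementary deformation of type (B) or (D), possibly preceded/followed by an (R)-deformation accounting for the motion of the remaining critical values. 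The cost of the (B)/(D) part is half the gap between the two labels of the newly created (or destroyed) pair $u_1,u_2$; since both labels are within $\|\h(\lambda^+)-\h(\lambda^-)\|_{C^0}$ of the common value $\h(\overline\lambda)(p)$ (again by Theorem \ref{mainres}, or directly from the normal form $f(p)-x_1^2-\cdots+x_{n}^3$ together with the $C^2$-control on the perturbation), this gap, hence twice the cost, is at most $2\|\h(\lambda^+)-\h(\lambda^-)\|_{C^0}\le 2\|\h(\lambda^+)-\h(\lambda^-)\|_{C^2}$; combining with the (R)-cost and being slightly careful with constants gives the bound $\|\h(\lambda^+)-\h(\lambda^-)\|_{C^2}$.

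Putting the pieces together: choose $\eta$ small, write
\[
d((\Gamma_f,f_{_|}),(\Gamma_g,g_{_|}))\le d_1+d_{\mathrm{mid}}+d_2,
\]
where $d_1$ and $d_2$ are controlled by Proposition \ref{pathstability} and $d_{\mathrm{mid}}$ by the local analysis above, and observe that $d_1+d_{\mathrm{mid}}+d_2\le\|\h(\lambda^-)-f\|_{C^2}+\|\h(\lambda^+)-\h(\lambda^-)\|_{C^2}+\|g-\h(\lambda^+)\|_{C^2}=\|f-g\|_{C^2}$, the last equality because all three segments are subsegments of the linear path $\h$ with disjoint interiors covering $[0,1]$, so their $C^2$-lengths add up by the computation in (\ref{second}). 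Since this holds for every admissible $\eta$, in fact no limit is even needed, but if one prefers one may let $\eta\to 0$; either way the stated inequality follows. The main obstacle I anticipate is the bookkeeping in the $\F^1_\alpha$ case, i.e.\ verifying that the elementary deformation connecting the two nearby Reeb graphs really is of type (B), (D), or (R) (and not something outside Definition \ref{elem_def}) and that its cost is governed by the $C^2$-norm rather than merely the $C^0$-norm with a bad constant; this is exactly where the precise normal forms of $\F^1$ from \cite{Cerf70} and the transversality hypothesis are needed, and where the phrase in the Introduction about ``reading editing operations as degenerate strata crossings'' is made rigorous.
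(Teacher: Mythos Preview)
Your overall strategy coincides with the paper's: split the linear path at $\overline\lambda$, apply Proposition~\ref{pathstability} on each outer segment, bound the middle contribution separately, and combine by the triangle inequality. Where you diverge is in the treatment of the middle term, and there your sharper claim does not go through as stated.

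You aim for $d_{\mathrm{mid}}\le\|h(\lambda^+)-h(\lambda^-)\|_{C^2}$ so that the three lengths add exactly to $\|f-g\|_{C^2}$. In the $\F^1_\alpha$ case this fails: the surviving critical values of $h(\lambda^-)$ and $h(\lambda^+)$ are genuinely different, so an (R) is required whose cost is already of order $\|h(\lambda^+)-h(\lambda^-)\|_{C^0}$, and on top of that the (B)/(D) contributes a further positive amount; ``being slightly careful with constants'' does not force the sum below $\|h(\lambda^+)-h(\lambda^-)\|_{C^2}$. Relatedly, your appeal to Theorem~\ref{mainres} is not legitimate here, since its hypothesis $\delta\le\delta(f,c)$ degenerates exactly as $\lambda^\pm\to\overline\lambda$ (two critical values collide in the $\F^1_\beta$ case, and $h(\overline\lambda)\notin\F^0$ in the $\F^1_\alpha$ case); the correct tool is simply smooth dependence of nondegenerate critical points on $\lambda$. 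Your fallback ``let $\eta\to 0$'' does salvage everything, because then $d_{\mathrm{mid}}\to 0$ while $d_1+d_2\le(1-2\eta)\|f-g\|_{C^2}$, so the limit is needed, contrary to your remark that it is not.

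The paper avoids this bookkeeping by aiming only for the weaker statement: for every $\delta>0$ there exist $\lambda'<\overline\lambda<\lambda''$ with $d_{\mathrm{mid}}\le\delta$, giving $d\le\|f-g\|_{C^2}+\delta$ and then letting $\delta\to 0$. To obtain the $\delta$-bound it passes through the universal one-parameter deformation $\overline h_t$ of $h(\overline\lambda)$, supported near the exceptional critical point, and uses transversality of $h$ to $\F^1$ to identify $h(\lambda)$ near $\overline\lambda$ with this family; since $\overline h_t$ is constant away from that neighbourhood, a \emph{single} elementary deformation (of type (D)/(B) in the $\F^1_\alpha$ case, (R) in the $\F^1_\beta$ case) of arbitrarily small cost connects the two Reeb graphs. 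This is cleaner than tracking (R)${}+{}$(D)/(B) costs, though your more explicit route, once repaired with the limit, reaches the same conclusion.
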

\begin{proof}
We begin proving the following claim.\\
\noindent{\bf Claim.} For every $\delta>0$ there exist two real
numbers $\lambda',\lambda''\in [0,1]$, with $\lambda'<\overline{\lambda}<\lambda''$, such that
$d((\Gamma_{h(\lambda')},h(\lambda')_{_|}),(\Gamma_{\h(\lambda'')},h(\lambda'')_{_|}))\le\delta.$

To prove this claim, let us first assume that
$\h(\overline{\lambda})$ belongs to $\F^1_{\alpha}$. To simplify
the notation, we denote $\h(\overline{\lambda})$ simply by
$\overline{h}$. Let $\overline{p}$ be the sole degenerate critical
point for $\overline{h}$. It is well known that there exists  a
suitable local coordinate system $x$ around $\overline{p}$ in
which the canonical expression of $\overline{h}$ is
$\overline{h}=\overline{h}(\overline{p})+x^3$ (see Subsection
\ref{stratification} and Figure \ref{morseperturb} $(a)$ with
$\overline{h}$ replaced by $f$).

Let us take a smooth function $\omega: S^1\to \R$ whose support is
contained in the coordinate chart around $\overline{p}$  in which
$\overline{h}=\overline{h}(\overline{p})+x^3$; moreover, let us
assume that $\omega$ is equal to $1$ in a neighborhood of
$\overline{p}$, and decreases moving from $\overline{p}$. Let us
consider  the family of smooth functions $\overline{h}_t$ obtained
by locally modifying $\overline{h}$ near $\overline{p}$ as
follows: $\overline{h}_t=\overline{h}+t\cdot \omega\cdot x$. There
exists $\overline{t}> 0$ sufficiently small such that $(i)$ for
$0<t\le \overline{t}$, $\overline{h}_t$ has no critical points in
the support of $\omega$ and is equal to $\overline{h}$ everywhere
else (see Figure \ref{morseperturb} $(a)$ with $\overline{h}_t$
replaced by $\widetilde{f}_2$), and $(ii)$ for $-\overline{t}\le
t<0$, $\overline{h}_t$ has exactly two critical points in the
support of $\omega$ whose values difference tends to vanish as $t$
tends to 0, and $\overline{h}_t$ is equal to $\overline{h}$
everywhere else (see \cite{Cerf70} and Figure \ref{morseperturb}
$(a)$ with $\overline{h}_t$ replaced by $\widetilde{f}_1$).

Since $\overline{h}_t$ is a universal deformation of
$\overline{h}=h(\overline{\lambda})$, and $h$ intersect $\F^1$
transversely at $\overline{\lambda}$, either the maps $h(\lambda)$
with $\lambda<\overline{\lambda}$ are topologically equivalent to
$h_t$ with $t>0$ or to $h_t$ with $t<0$ (cf. \cite{Cerf70, Ma82,
Se72}). Analogously for the maps $h(\lambda)$ with
$\lambda>\overline{\lambda}$. Let us assume that $h(\lambda)$ is
topologically equivalent to $h_t$ with $t<0$ when
$\lambda<\overline{\lambda}$, while $h(\lambda)$ is topologically
equivalent to $h_t$ with $t>0$ when $\lambda>\overline{\lambda}$.
Hence, for every $\delta>0$, there exist $\lambda'$, with $0\le
\lambda'<\overline{\lambda}$, and $\lambda''$, with
$\overline{\lambda}<\lambda''\le 1$, such that $h(\lambda')$ and
$h(\lambda'')$ have the same critical points, with the same
values, except for two critical points of $h(\lambda')$, whose
values difference is smaller than $\delta$, that are non-critical
for $h(\lambda'')$. Therefore,
$(\Gamma_{h(\lambda')},h(\lambda')_{_|})$ can be transformed into
$(\Gamma_{h(\lambda'')},h(\lambda'')_{_|})$ by an elementary
deformation of type (D) whose cost is not greater than $\delta$.
In the case when $h(\lambda)$ is topologically equivalent to $h_t$
with $t>0$ when $\lambda<\overline{\lambda}$, while $h(\lambda)$
is topologically equivalent to $h_t$ with $t<0$ when
$\lambda>\overline{\lambda}$, the claim can be proved similarly,
applying an elementary deformation of type (B).

Let us now prove the claim when
$\overline{h}=h(\overline{\lambda})$ belongs to $\F^1_{\beta}$.
Let us denote by $\overline{p}$ and $\overline{q}$ the critical
points of $\overline{h}$ such that
$\overline{h}(\overline{p})=\overline{h}(\overline{q})$. Since
$\overline{p}$ is non-degenerate there exists  a suitable local
coordinate system $x$ around $\overline{p}$ in which the canonical
expression of $\overline{h}$ is
$\overline{h}=\overline{h}(\overline{p})+x^2$ (see Figure
\ref{morseperturb} $(b)$ with $\overline{h}$ replaced by $f$). Let
us take $\omega$ as before, whose support is contained in such a
coordinate chart. Let us locally modify $\overline{h}$ near
$\overline{p}$ as follows: $\overline{h}_t=\overline{h}+t\cdot
\omega$. There exists $\overline{t}> 0$ sufficiently small such
that for $|t|\le \overline{t}$, $\overline{h}_t$ has exactly the
same critical points  as $\overline{h}$. As for critical values,
they are the same as well, apart from the value taken at
$\overline{p}$:
$\overline{h}_t(\overline{p})<\overline{h}(\overline{p})$, for
$-\overline{t}\le t<0$ (see Figure \ref{morseperturb} $(b)$ with
$\overline{h}_t$ replaced by $\widetilde{f}_1$),  while
$\overline{h}_t(\overline{p})>\overline{h}(\overline{p})$, for
$0<t\le\overline{t}$ (see Figure \ref{morseperturb} $(b)$ with
$\overline{h}_t$ replaced by $\widetilde{f}_2$), and
$\overline{h}_t(\overline{p})$ tends to
$\overline{h}(\overline{p})$ as $t$ tends to $0$ (cf.
\cite{Cerf70}). Since $\overline{h}_t$ is a universal deformation
of $\overline{h}=h(\overline{\lambda})$, and $h$ intersect $\F^1$
transversely at $\overline{\lambda}$, we deduce that for every
$\delta>0$ there exist $\lambda'$, with $0\le
\lambda'<\overline{\lambda}$ and $\lambda''$, with
$\overline{\lambda}<\lambda''\le 1$, such that
$(\Gamma_{\h(\lambda')},\h(\lambda')_{_|})$ can be transformed
into $(\Gamma_{\h(\lambda'')},\h(\lambda'')_{_|})$ by an
elementary deformation of type (R) whose cost is not greater than
$\delta$. Therefore  the initial claim is proved.

Let us now estimate $d((\Gamma_{f},f_{_|}),(\Gamma_{g},g_{_|}))$.
By the claim, for every $\delta >0$, there exist
$0<\lambda'<\lambda ''<1$ such that, applying the triangular
inequality, {\setlength\arraycolsep{2pt}\begin{eqnarray*}
d((\Gamma_{f},f_{_|}),(\Gamma_{g},g_{_|}))&\le&d((\Gamma_{f},f_{_|}),(\Gamma_{h(\lambda')},h(\lambda')_{_|}))+d((\Gamma_{h(\lambda')},h(\lambda')_{_|}),
(\Gamma_{h(\lambda'')},h(\lambda'')_{_|}))\\&&+d((\Gamma_{h(\lambda'')},h(\lambda'')_{_|}),(\Gamma_{g},g_{_|}))
\\&\le &
d((\Gamma_{f},f_{_|}),(\Gamma_{h(\lambda')},h(\lambda')_{_|}))+d((\Gamma_{h(\lambda'')},h(\lambda'')_{_|}),(\Gamma_{g},g_{_|}))+\delta.
\end{eqnarray*}}
By Proposition \ref{pathstability},
$$ d((\Gamma_{f},f_{_|}),(\Gamma_{h(\lambda')},h(\lambda')_{_|}))\le \|f-h(\lambda')\|_{C^2}=\lambda'\cdot \|f-g\|_{C^2},$$ and
$$ d((\Gamma_{h(\lambda'')},h(\lambda'')_{_|}),(\Gamma_{g},g_{_|}))\le \|h(\lambda'')-g\|_{C^2}=(1-\lambda'')\cdot \|f-g\|_{C^2}.$$
Hence, $d((\Gamma_{f},f_{_|}),(\Gamma_{g},g_{_|}))\le
\|f-g\|_{C^2}+\delta$, yielding the conclusion by the
arbitrariness of $\delta$.
\end{proof}

\begin{theorem}[Global stability]\label{global}
Let $f, g \in \F^0$. Then
$d((\Gamma_{f},f_{_|}),(\Gamma_{g},g_{_|}))\le {\|f-g\|}_{C^2}.$
\end{theorem}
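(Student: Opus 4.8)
The plan is to combine the stratification-theoretic facts recalled in Subsection \ref{stratification} with the two previous propositions. Given arbitrary $f,g\in\F^0$, I would first invoke the genericity discussion at the end of Subsection \ref{stratification}: for every $\eta>0$ there exist $\widehat f,\widehat g\in\F^0$ with $\|f-\widehat f\|_{C^2}<\eta$ and $\|g-\widehat g\|_{C^2}<\eta$, such that $\widehat f$ is topologically equivalent to $f$ and $\widehat g$ to $g$ (hence $(\Gamma_{\widehat f},\widehat f_{_|})=(\Gamma_f,f_{_|})$ and likewise for $g$, by Proposition \ref{unique}), and such that the linear path $h(\lambda)=(1-\lambda)\widehat f+\lambda\widehat g$ lies in $\F^0\cup\F^1$, is transversal to $\F^1$, and therefore crosses $\F^1$ in only finitely many parameter values $0<\overline\lambda_1<\dots<\overline\lambda_m<1$, never meeting a stratum of codimension $\ge 2$.

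Next I would chop $[0,1]$ at points $\mu_0=0<\mu_1<\dots<\mu_m<\mu_{m+1}=1$ with exactly one crossing value $\overline\lambda_j$ strictly inside each $(\mu_{j-1},\mu_j)$ and none at the $\mu_j$ themselves. On the subinterval $[\mu_{j-1},\mu_j]$ the restricted linear path from $h(\mu_{j-1})$ to $h(\mu_j)$ satisfies exactly the hypotheses of Proposition \ref{pointonF1} (one transversal crossing of $\F^1$), so
\[
d\bigl((\Gamma_{h(\mu_{j-1})},h(\mu_{j-1})_{_|}),(\Gamma_{h(\mu_{j})},h(\mu_{j})_{_|})\bigr)\le \|h(\mu_j)-h(\mu_{j-1})\|_{C^2}=(\mu_j-\mu_{j-1})\,\|\widehat f-\widehat g\|_{C^2}.
\]
Summing over $j=1,\dots,m+1$ and using the triangle inequality for $d$ (established in Theorem \ref{editdist}), together with $\sum_j(\mu_j-\mu_{j-1})=1$, yields
\[
d\bigl((\Gamma_{\widehat f},\widehat f_{_|}),(\Gamma_{\widehat g},\widehat g_{_|})\bigr)\le \|\widehat f-\widehat g\|_{C^2}.
\]
Finally, identifying the Reeb graphs of $\widehat f,\widehat g$ with those of $f,g$ and using $\|\widehat f-\widehat g\|_{C^2}\le\|f-g\|_{C^2}+2\eta$, the arbitrariness of $\eta$ gives $d((\Gamma_f,f_{_|}),(\Gamma_g,g_{_|}))\le\|f-g\|_{C^2}$.

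The only delicate point is the first step: one must be sure the perturbed endpoints can simultaneously be taken $C^2$-close to $f,g$, topologically equivalent to $f,g$, \emph{and} joined by a straight segment that is transversal to $\F^1$ and avoids higher-codimension strata. This is precisely the content of the three-item list recalled in Subsection \ref{stratification} (items (1)--(3)), so it may be cited rather than reproved; the remaining bookkeeping — choosing the cut points $\mu_j$, applying Proposition \ref{pointonF1} on each piece, and telescoping — is routine.
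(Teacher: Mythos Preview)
Your overall strategy matches the paper's: perturb $f,g$ to $\widehat f,\widehat g$ so that the straight segment between them meets $\F^1$ transversally in finitely many points, chop the interval at intermediate values, apply Proposition~\ref{pointonF1} on each piece, telescope, and let the perturbation size go to zero. That part is fine.

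There is, however, a genuine error in your first step. You claim that because $\widehat f$ is topologically equivalent to $f$, Proposition~\ref{unique} gives $(\Gamma_{\widehat f},\widehat f_{_|})=(\Gamma_f,f_{_|})$. This is false. Topological equivalence means $\widehat f=\eta\circ f\circ\xi^{-1}$ for diffeomorphisms $\xi$ of $S^1$ and $\eta$ of $\R$; it preserves the combinatorics of the Reeb graph and the \emph{order} of the critical values, but not the critical values themselves. Proposition~\ref{unique} requires re-parameterization equivalence ($f=g\circ\tau$, with no reparameterization of $\R$), which is strictly stronger. In general $\widehat f_{_|}\ne f_{_|}\circ\Phi$ for any graph isomorphism $\Phi$, so the labelled Reeb graphs are \emph{not} isomorphic in the sense of Definition~\ref{isolabel}, and you cannot simply identify them.

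The paper handles this differently: it does not try to identify $(\Gamma_{\widehat f},\widehat f_{_|})$ with $(\Gamma_f,f_{_|})$. Instead, since $\widehat f\in B_2(f,\delta)\subset\F^0$, the whole segment from $f$ to $\widehat f$ lies in $\F^0$, so Proposition~\ref{pathstability} gives
\[
d\bigl((\Gamma_f,f_{_|}),(\Gamma_{\widehat f},\widehat f_{_|})\bigr)\le\|f-\widehat f\|_{C^2}\le\delta,
\]
and similarly for $g,\widehat g$. Then the triangle inequality yields
\[
d\bigl((\Gamma_f,f_{_|}),(\Gamma_g,g_{_|})\bigr)\le 2\delta+\|\widehat f-\widehat g\|_{C^2}\le 4\delta+\|f-g\|_{C^2},
\]
and one lets $\delta\to 0$. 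Replacing your erroneous identification by this application of Proposition~\ref{pathstability} repairs the argument; the rest of your proof is correct and essentially identical to the paper's.
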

\begin{proof}
For every sufficiently small $\delta>0$ such that
$B_2(f,\delta),B_2(g,\delta)\subset \F^0$, there exist
$\widehat{f}\in B_2(f,\delta)$ and $\widehat{g} \in B_2(g,\delta)$
such that the path $h:[0,1]\to \F(S^1,\R)$, with
$h(\lambda)=(1-\lambda)\widehat{f}+\lambda\widehat{g}$, belongs to
$\F^0$ for every $\lambda \in [0,1]$, except for at most a finite
number $n$ of values  $0<\mu_1<\mu_2<\ldots<\mu_n<1$ at which $h$
transversely intersects $\F^1$. If $n=0$ ($n=1$, respectively),
then the claim immediately follows from Proposition
\ref{pathstability} (Proposition \ref{pointonF1}, respectively).
If $n>1$, let $0<\lambda_1<\lambda_2<\ldots<\lambda_{2n-1}<1$,
with $\lambda_{2i-1}=\mu_i$ for $i=1,\ldots,n$. Then
$h(\lambda_{2i-1})\in \F^1$ for $i=1,\ldots,n$,
$h(\lambda_{2i})\in \F^0$ for $i=1,\dots, n-1$. Set
$\lambda_{0}=0$ so that $\widehat{f}=h(\lambda_{0})$, and
$\lambda_{2n}=1$ so that $\widehat{g}=h(\lambda_{2n})$ (a
schematization of this path can be visualized in Figure
\ref{stabilityProof}). Then, by Proposition \ref{pointonF1}, we
have
$$d((\Gamma_{h(\lambda_{2i})},h(\lambda_{2i})_{_|}),
(\Gamma_{h(\lambda_{2i+2})},h(\lambda_{2i+2})_{_|}))\le
{\|h(\lambda_{2i})-h(\lambda_{2i+2})\|}_{C^2}$$ for every
$i=0,\ldots,n-1.$ Therefore
{\setlength\arraycolsep{2pt}\begin{eqnarray*}
d((\Gamma_{\widehat{f}},\widehat{f}_{_|}),(\Gamma_{\widehat{g}},\widehat{g}_{_|}))&\le&\underset{i=0}{\overset{n-1}\sum}d((\Gamma_{h(\lambda_{2i})},h(\lambda_{2i})_{_|}),
(\Gamma_{h(\lambda_{2i+2})},h(\lambda_{2i+2})_{_|}))\\&\le&
\underset{i=0}{\overset{n-1}\sum}{\|h(\lambda_{2i})-h(\lambda_{2i+2})\|}_{C^2}\le{\|\widehat{f}-\widehat{g}\|}_{C^2}.\\
\end{eqnarray*}}
Then, recalling that $\widehat{f}\in B_2(f,\delta)$ means
${\|\widehat{f}-f\|}_{C^2}\le\delta$, and
$B_2(f,\delta)\subset\F^0$ implies that $(1-\lambda)f+\lambda
\widehat{f}\in\F^0$ for every $\lambda\in[0,1]$, we can apply
Proposition \ref{pathstability} to state that
$d((\Gamma_{f},f_{_|}),(\Gamma_{\widehat{f}},\widehat{f}_{_|}))\le\delta$.
It is analogous for $g$ and $\widehat{g}$. Thus, from the
triangular inequality, we have
{\setlength\arraycolsep{2pt}\begin{eqnarray*}
d((\Gamma_{f},f_{_|}),(\Gamma_{g},g_{_|}))&\le&
d((\Gamma_{f},f_{_|}),(\Gamma_{\widehat{f}},\widehat{f}_{_|}))+d((\Gamma_{\widehat{f}},\widehat{f}_{_|}),(\Gamma_{\widehat{g}},\widehat{g}_{_|}))+d((\Gamma_{\widehat{g}},\widehat{g}_{_|}),(\Gamma_{g},g_{_|}))\\&\le&
2\delta + {\|\widehat{f}-\widehat{g}\|}_{C^2}.\end{eqnarray*}}
Now, since by the triangular inequality,
${\|\widehat{f}-\widehat{g}\|}_{C^2}\le{\|\widehat{f}-f\|}_{C^2}+{\|f-g\|}_{C^2}+{\|g-\widehat{g}\|}_{C^2}$,
with ${\|\widehat{f}-f\|}_{C^2}\le\delta$, and
${\|g-\widehat{g}\|}_{C^2}\le\delta$, it follows that
$d((\Gamma_{f},f_{_|}),(\Gamma_{g},g_{_|}))\le
4\delta+{\|f-g\|}_{C^2}$. Finally, because of the arbitrariness of
$\delta$, we can let $\delta$ tend to zero and obtain the claim.
\end{proof}

\begin{figure}[htbp]
\psfrag{f=q0}{$\widehat{f}=h(\lambda_{0})$}\psfrag{qn=g}{$h(\lambda_{2n})=\widehat{g}$}\psfrag{in}{$\in$}\psfrag{F0}{$\F^0$}\psfrag{F1}{$\F^1$}\psfrag{p3}{$v_3$}\psfrag{v4}{$v_4$}
\psfrag{v5}{$v_5$}\psfrag{v6}{$v_6$}\psfrag{v7}{$v_7$}\psfrag{v8}{$v_8$}
\psfrag{p1}{$h(\mu_{1})$}\psfrag{q1}{$h(\lambda_{1})$}\psfrag{p2}{$h(\mu_{2})$}\psfrag{q3}{$h(\lambda_{3})$}
\psfrag{=}{$=$}\psfrag{q2}{$h(\lambda_{2})$}\psfrag{p3}{$h(\mu_{3})$}\psfrag{q4}{$h(\lambda_{4})$}
\psfrag{q5}{$h(\lambda_{5})$}\psfrag{qn3}{$h(\lambda_{2n-3})$}\psfrag{qn2}{$h(\lambda_{2n-2})$}\psfrag{qn1}{$h(\lambda_{2n-1})$}
\psfrag{pn1}{$h(\mu_{n-1})$}\psfrag{pn}{$h(\mu_{n})$}
\includegraphics[height=3cm]{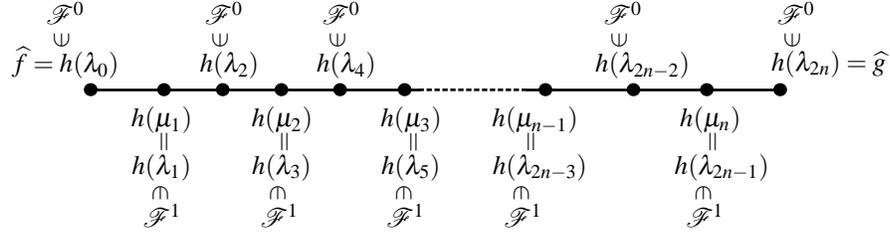}
\caption{\footnotesize{The linear path used in the proof of
Theorem \ref{global}.}}\label{stabilityProof}
\end{figure}

\section{Discussion}
In this paper, we have considered Reeb graphs of curves and have
shown that they stably represent topological properties of smooth
functions. Precisely, we have constructed an editing distance
between Reeb graphs of closed curves endowed with smooth functions
$f$ and $g$, that is bounded from below by the natural
pseudo-distance between $(S^1,f)$ and $(S^1,g)$, and from above by
the $C^2$-norm of $f-g$.

This paper is meant as a first step toward the study of stability
of Reeb graphs of surfaces. While the general technique we use to
prove our main result, as well as many intermediate results, could
be easily generalized to surfaces, the definition of the editing
distance would need to be appropriately modified. This requires us
to classify the possible degeneracies of Reeb graphs of surfaces.
Moreover, our proof of the metric properties of the editing
distance exploits some particular properties of curves that are no
longer valid for surfaces.

Furthermore, other shape descriptors consisting of graphs
constructed out of Morse theory, such as the Morse Connection
Graph introduced in \cite{CoAlZi05} and further developed in
\cite{AlCo*07}, could possibly benefit of some of the  results
proved  in this paper.

However, some questions remain unanswered also in the case of
curves. In the examples shown in this paper, the editing distance
coincides with the natural pseudo-distance. Is this always the
case? Moreover,  looking at the analogous results proved in
\cite{CoEdHa07, dAFrLa} about the stability of persistent homology
groups, another shape descriptor used both in computer vision and
computer graphics for shape comparison,  we may notice that the
$C^0$-norm rather than the $C^2$-norm is used to evaluate function
changes. So another open question, strictly related to the
previous one, is whether it would be possible to improve our
result in this sense. Other open questions are concerned with
applications of the Main Result (Theorem \ref{global}) to measure
shape dissimilarity coping well with noisy data. On one hand, the
result ensures the stability of Reeb graphs against noise, while,
on the other, we may wonder how likely it is that noise
encountered in real data is small with respect to the $C^2$-norm.
Indeed, it is easy to conceive examples where perturbations that
could be seen as noise do not correspond to a small value of the
$C^2$-norm. For example, the functions represented in Figure
\ref{noise} belong to a sequence of functions $(f_n)$ all having
the same $C^2$-norm although they tend to $0$ with respect to the
$C^0$-norm. However, one could argue that in a discrete setting,
at a fixed resolution, sequences of functions as in Figure
\ref{noise} cannot be found. Moreover, this problem would be
overcome if the editing distance coincides with the natural
pseudo-distance.

\begin{figure}[htbp]
\psfrag{f1}{$f_1$}\psfrag{f2}{$f_2$}\psfrag{f3}{$f_3$}\psfrag{1}{$1$}\psfrag{0}{$0$}\psfrag{2}{$2$}\psfrag{12}{$1/2$}\psfrag{13}{$1/3$}\psfrag{34}{$3/4$}
\includegraphics[height=5cm]{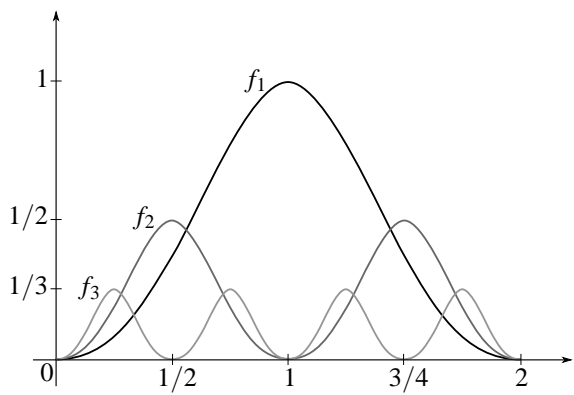}
\caption{\footnotesize{The graphs of three functions having the same $C^2$-norm.}}\label{noise}
\end{figure}

\bibliographystyle{amsplain}
\bibliography{biblio}

\end{document}